\newcommand{\lw}[1]{{\color{orange} Lorenz: #1}}
\newcommand{\CommentedOut}[1]{}
\newcommand{\MainProof}[1]{}
\def\orcidID#1{\href{http://orcid.org/#1}{\raisebox{-1.25pt}{\includegraphics{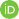}}}}
\begin{document}
\title{Positive Almost-Sure Termination of Polynomial Random Walks}
%
\author{Lorenz Winkler\orcidID{0009-0000-0177-2764} \and
Laura Kov\'acs\orcidID{0000-0002-8299-2714}}
\authorrunning{Winkler and Kov\'acs}
%
\institute{TU Wien, Vienna, Austria\\
\email{lorenz.winkler@tuwien.ac.at}
}
\date{}
\maketitle              

\begin{abstract}
The number of steps until termination of a probabilistic program is a 
random variable. Probabilistic program termination therefore requires qualitative analysis via almost-sure termination (AST), 
while also providing quantitative answers via positive almost-sure termination (PAST) on the expected number of steps until termination. While every program which is PAST is AST, the converse is not true. The symmetric random walk with constant step size is a prominent example of a program that is AST but not PAST. 

In this paper we show that a more general class of polynomial random walks is PAST. Our random walks implement a step size that is polynomially increasing in the number of loop iterations and have a constant probability $p$ of choosing either branch. 
We show that such programs are PAST when the degree of the polynomial is higher than both the degree of the drift and a threshold  $d_\text{min}(p)$. 
Our approach does not use proof rules, nor auxiliary arithmetic expressions, such as martingales or invariants. Rather, 
we establish an inductive bound for the cumulative distribution function of the loop guard, based on which PAST is proven. 
We implemented the approximation of this threshold, by combining genetic programming, algebraic reasoning  and linear programming. 
\end{abstract}
\section{Introduction}
\label{section:introduction}
Probabilistic programs extend programs written in classical programming languages by statements that draw samples from stochastic distributions, such as Normal and Bernoulli. 
The output as well as the number of steps until termination of a probabilistic program are
random variables~\cite{Kozen1985}, which makes the analysis even harder than in the nonprobabilistic setting~\cite{hark-aiming-2020}. \emph{In this paper we focus on probabilistic termination and  introduce a class of loops for which we provide a new sufficient condition for positive almost-sure termination. }

%

Probabilistic loop termination~\cite{chakarov:supermartingale-ranking-functions,fioriti-past-using-rsms-2015} requires qualitative arguments via almost-sure termination (AST), 
and quantitative answers via positive almost-sure termination (PAST) on the expected number of steps until termination. While every program which is PAST is AST, the converse is not true. The symmetric random walk with constant step size is a prominent example of a program that is AST but not PAST. 
%
%
%
Existing works for proving PAST and/or AST rely on proof rules that need auxiliary arithmetic expressions, such as invariants and martingales, over  program variables~\cite{DBLP:journals/pacmpl/MajumdarS25}. In particular, ranking super-martingales (RSMs) or lexicographical RSMs are commonly used ingredients in (P)AST analysis~\cite{agrawal2017lexicographicrankingsupermartingalesefficient}.
For probabilistic programs without nondeterminism, RSMs are a sound and complete method for proving termination~\cite{bournez:past-origin-paper,chatterjee-2016}. 

\begin{wrapfigure}{r}{0.45\textwidth}
\begin{algorithmic}[l]
\State $n \gets 0$
\State $s \gets 0$
\While{$s\geq 0$} 
    \State $n \gets n+1$
    \State $s\gets s - n + 3 \oplus_{\frac{1}{2}} s + n + 5$ 
\EndWhile
\end{algorithmic}
\caption{Program with non-trivial RSM.}\label{alg:example-non-trivial-rsm}
\end{wrapfigure}

However, finding an RSM is challenging, as shown in Figure~\ref{alg:example-non-trivial-rsm}. Here, $\gets$ denotes variable assignments and $\oplus_p$ captures probabilistic choice:   the expression on the left hand side of $\oplus_p$ is chosen with probability $p$, and the one on the right hand side  of $\oplus_p$ is chosen with probability $1-p$. 
While the program has a finite stopping time, 
to the best of our knowledge,  no RSM of this program has been found until now. As such, existing methods would fail proving PAST for Figure~\ref{alg:example-non-trivial-rsm}.  

\paragraph{Our approach.} We overcome challenges of RSM inference in PAST analysis, by identifying a class of loops, called 
\emph{polynomial random walks} (Section~\ref{section:implementation}), for which PAST
can be proven without martingale/invariant synthesis.  While polynomial random walks are less expressive than arbitrary polynomial loops, their PAST analysis can be 
automated using genetic programming, algebraic reasoning and linear programming (Section~\ref{section:implementation}).

Our approach relies on bounding the tails of random loop variables in order  to guarantee that a random variable is close to its mean.  Tail bounds~\cite{Harchol-Balter_2023} are important in providing guarantees on the probability of extreme outcomes.
Our work analyzes tail probabilities $\mathbb{P}(X\geq t)$ of the random variable $X$ summing up the steps, to establish that the probability that $X$ exceeds some value $t$ decreases fast with increasing $t$, and the deviation from $0$ is in some sense ``controlled''. 
Key to our approach is that  
variables of polynomial random walks converge to   random variables with almost Normal distributions (Lemma~\ref{lemma:almost-normal}), whose tail bounds can be approximated (Lemma~\ref{lemma:tail-bound}). 
We therefore transform polynomial random walks into programs with larger expected stopping time, where several steps are accumulated before the loop guard is checked. In other words, we connect polynomial random walk analysis to stochastic processes over  random variables with almost Normal distributions whose variance is  exponentially growing (Section~\ref{section:summation-almost-normal}). By summing up random variables of such processes, we prove that a 
%
sub-Gaussian tail bound is preserved. The 
cumulative distribution function of this summation can  tightly be approximated using an \emph{inductive bound} over random variables. %
%


By proving that an inductive bound always exists, we find that constant-probability polynomial random walks are PAST when the step size grows fast enough (Theorem~\ref{theorem:polynomial-random-walk-threshold}). Our PAST result holds  whenever (i) the degree $d$ of the polynomials is larger than the degree of the expected value of the increments, and (ii) $d$ is larger than a threshold $d_\text{min}(p)$ parametrized by the probabilistic choice probability $p$. Our work establishes that Figure~\ref{alg:example-non-trivial-rsm} satisfies this threshold (Table~\ref{table:empirical-survival-rate-exponent}), implying thus PAST of Figure~\ref{alg:example-non-trivial-rsm} without the need of an RSM.

We implemented our PAST analysis over polynomial random walks in extension of the algebraic program analysis tool \verb|Polar|~\cite{DBLP:journals/pacmpl/MoosbruggerSBK22}. To this end, we use linear programming, by relying on \verb|OR-Tools|~\cite{ortools} and the \verb|Gurobi|-solver~\cite{gurobi}, to derive inductive bounds for fixed parameters of the program transformation over polynomial random walks. We further combined \verb|Polar| with   
 a genetic algorithm, to find the best values for those parameters. Our experimental results in 
 Section~\ref{section:implementation}, give practical evidence on the tightness of our inductive bounds on polynomial random walks. Existence of these inductive bounds imply thus PAST of the programs that are analyzed.

\paragraph{Our contributions.} We translate the problem of verifying PAST into the problem of tightly approximating tail bounds of random loop variables. We bring the following contributions\footnote{with  detailed proofs of our results given in the Appendix}:  

\begin{itemize}
\item We introduce the class of polynomial random walks for which we provide sufficient conditions to determine PAST. These conditions do not require user-provided invariants and/or martingales. We determine PAST above a threshold $d_\text{min}(p)$, which in turn only depends
on the polynomial degree $d$ of the loop updates and the probabilistic choice $p$ in the loop (Section~\ref{section:polynomial-random-walks}). 
\item We show that such a threshold $d_\text{min}(p)$ always exists by transforming polynomial random walks into stochastic processes over almost Normal variables (Section~\ref{section:summation-almost-normal}). We prove that such processes admit an inductive bound over their cumulative distribution function, allowing us to tightly approximate our  threshold $d_\text{min}(p)$.  
\item We implemented our approach to approximate $d_\text{min}(p)$, and hence conclude PAST, in extension of 
the \verb|Polar| framework (Section~\ref{section:implementation}). Our experiments showcase the tightness of our approximation, implying thus PAST. 
\end{itemize}

\section{Almost Normal Variables and Conditioning}
\label{section:summation-almost-normal}

\begin{wrapstuff}[r, width=0.45\textwidth]
{
\setlength{\intextsep}{0pt}
\setlength{\belowcaptionskip}{10pt}
\begin{figure}[H]
\begin{algorithmic}[l]
\State $n \gets 0$
\State $z \sim \mathcal{N}_{c_0}^{\delta_1,C_1}(0, \sigma_0)$
\State $s \gets s_0+ z$
\While{$s\geq F_{S_n}^{-1}(\epsilon)$} 
    \State $n \gets n+1$
    \State $z \sim \mathcal{N}_{c_0}^{\delta_1,C_1}(0, \sigma_n)$
    \State $s\gets s + z$
\EndWhile
\end{algorithmic}
\caption{Probabilistic programs  summing up almost normally distributed variables}\label{alg:cap}
\end{figure}
}
\end{wrapstuff}

This section introduces a special class of stochastic processes (Section~\ref{subsection:stoch_process}) and establishes  bound properties for such processes. The  stochastic processes we consider
are given in  Figure~\ref{alg:cap}, for which we  show that they induce random variables with sub-Gaussian tail bounds (Section~\ref{subsection:tail-bound}) whose inductive bounds can tightly be approximated (Section~\ref{subsection:inductive-bound}). 
In Section~\ref{section:polynomial-random-walks}
we then prove that polynomial random walks can be transformed into the stochastic process of Figure~\ref{alg:cap}, allowing to conclude PAST of polynomial random walks (Theorem~\ref{theorem:polynomial-random-walk-threshold}). 

In the sequel, we respectively denote by $\mathbb{N}, \mathbb{R}, \mathbb{R}^+$ the set of natural, real, and positive real (including zero) numbers. We reserve $n\in\mathbb{N}$ for the loop iteration (counter).  We assume familiarity with probabilistic programs and their semantics, and refer to~\cite{Durrett2019,Harchol-Balter_2023} for details. The probability measure is denoted by $\mathbb{P}$, while we use $\mathbb{E}$  to denote the expected values of random variables. Further, we denote with $(X | E)$ the conditional probability distribution of $X$ given event $E$.


\subsection{Almost Normally Distributed Loop Summations}\label{subsection:stoch_process}
We consider stochastic processes induced by the  probabilistic program of Figure~\ref{alg:cap}, which  uses the random variable series $\{S_n\}_{n\in\mathbb{N}}$ and $\{Z_n\}_{n\in\mathbb{N}}$ corresponding to the values of  $s$ and $z$ in the $n$-th iteration.  We respectively denote by $F_{S_n}$ and $F_{Z_n}$ the cumulative distribution function (cdf) of $S_n$ and $Z_n$. 
The notation $z\sim \mathcal{N}_{c_0}^{\delta_1,C_1}(0,\sigma_n)$ indicates that $z$ is drawn from a distribution that is almost equivalent to a Normal distribution $\mathcal{N}(0,\sigma_n)$ with variance $\sigma_n^2$, in a sense, that $c_0$ bounds the absolute deviation of cdf of $Z_n$ from the cdf of the Normal variable. Additionally, $\delta_1$ bounds the shift of the sub-Gaussian tail bound and $C_1$ is the multiplicative deviation of that bound's variance (see Lemma~\ref{definition:z_n}). 
The initial value of $s$ is also drawn from such an almost Normal distribution with variance $\sigma_0^2$. In each loop iteration $n$, a sample is drawn from an almost Normal distribution with variance $\sigma_n^2$ and then added to $s$. The loop is exited, once $s$ is within the smallest $\epsilon$-fraction of $S_n$.

Recall that the  stochastic processes induced by $\{S_n\}$ and $\{Z_n\}$ represent a Markov chain~\cite{Durrett2019}. 
Based on the semantics of Figure~\ref{alg:cap}, the series $S_n$ sums variables $Z_0,\dots,Z_n$, as follows:  
%
 
\begin{itemize}
    \item $S_0 = Z_0$;
    \item  $S_{n+1} = (S_n \mid S_n\geq F_{S_n}^{-1}(\epsilon)) + Z_{n+1}$, where $F_{S_n}^{-1}$ is the inverse of the cdf of $S_n$. 
    \end{itemize}

Note that for $S_n$ only the paths are considered, for which the program has not yet terminated. To reason about termination of Figure~\ref{alg:cap}, we rely on a right tail bound of the random variable $S_n$, as it gives an upper limit on the probability of a random variable exceeding a certain value~\cite{Harchol-Balter_2023}. In other words, the right tail bound of $S_n$ quantifies how likely it is to observe values of $S_n$ in the extreme right (or upper) tail of a distribution. While such extreme cases might yield to the non-termination of Figure~\ref{alg:cap}, in Section~\ref{subsection:tail-bound} we show that a tail bound for $S_n$ can be derived from the tail bound of $Z_n$, which ensures that this behavior is unlikely. 
Moreover, by adjusting  standard properties of distributions~\cite{Durrett2019}, we also bound the behavior of $Z_n$, as listed below.

\begin{lemma}[CDF deviation and bound]
    \label{definition:z_n}
        Let $\{Z_n\}_{n\in \mathbb{N}}$ be a sequence of random variables, as defined in Figure~\ref{alg:cap}. Recall that  $Z_n$  follows an almost Normal distribution $\mathcal{N}_{c_0}^{\delta_1,C_1}(0,\sigma_n)$. The following holds:  
        \begin{enumerate}
            \item The cdf of $Z_n$, denoted as $F_{Z_n}$ deviates from a Normal distribution only by at most $c_0$. That is,  $|F_{Z_n}(z) - \Phi(\frac{z}{\sigma_n})| \leq c_0$, where $\Phi$ denotes the  cumulative distribution function of the Normal distribution.
            \item The variable $Z_n$ admits a sub-Gaussian tail bound on its right tail, which is offset by $\delta_1\sigma_n$ and has  variance 
            of $\frac{\sigma_n^2}{C_1}$. That is, \[\forall a\in \mathbb{R}^+:\mathbb{P}(Z_n \geq a+\delta_1\sigma_n)\leq \exp\left(C_1\frac{-a^2}{2\sigma_n^2}\right).\]
        \end{enumerate}
    \end{lemma}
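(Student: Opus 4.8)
The plan is to observe that both claims are, in essence, restatements of the defining notation $\mathcal{N}_{c_0}^{\delta_1,C_1}(0,\sigma_n)$ introduced for Figure~\ref{alg:cap}, so the work consists of extracting the quantitative content from that notation and checking that the two stated inequalities are exactly the properties encoded by the parameters $c_0$, $\delta_1$, and $C_1$. For part~(1), I would start from the definition that $\mathcal{N}_{c_0}^{\delta_1,C_1}(0,\sigma_n)$ denotes a distribution whose cdf is within absolute distance $c_0$ of the cdf of a genuine Normal variable with mean $0$ and variance $\sigma_n^2$; since the cdf of $\mathcal{N}(0,\sigma_n)$ evaluated at $z$ is $\Phi(z/\sigma_n)$ by the standard rescaling of the Gaussian, the bound $|F_{Z_n}(z)-\Phi(z/\sigma_n)|\le c_0$ is immediate for every $z\in\mathbb{R}$.

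For part~(2), I would argue that a Normal variable $Y\sim\mathcal{N}(0,\sigma_n)$ already satisfies a clean sub-Gaussian right-tail bound: the classical Gaussian tail estimate gives $\mathbb{P}(Y\ge a)\le \exp(-a^2/(2\sigma_n^2))$ for all $a\ge 0$ (this is one of the ``standard properties of distributions'' referenced in the text just before the lemma). The almost-Normal distribution $\mathcal{N}_{c_0}^{\delta_1,C_1}(0,\sigma_n)$ is, by construction, allowed to deviate from this ideal behaviour only by a multiplicative tightening $C_1$ of the exponent's variance and an additive shift $\delta_1\sigma_n$ of the threshold; so writing the threshold as $a+\delta_1\sigma_n$ and replacing $\sigma_n^2$ by $\sigma_n^2/C_1$ in the exponent yields precisely $\mathbb{P}(Z_n\ge a+\delta_1\sigma_n)\le\exp\!\bigl(-C_1 a^2/(2\sigma_n^2)\bigr)$ for all $a\in\mathbb{R}^+$. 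I would then note that this is consistent with part~(1) in the regime where $c_0$ is small: the $c_0$-closeness of the cdfs already forces a comparable tail decay, and the parameters $\delta_1, C_1$ are exactly what is needed to turn that qualitative closeness into the stated uniform exponential bound.

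The only genuinely substantive point — and the step I expect to require the most care — is making explicit what $\mathcal{N}_{c_0}^{\delta_1,C_1}(0,\sigma_n)$ \emph{means}, i.e.\ pinning down the definition so that the lemma becomes a tautological unpacking rather than an assertion requiring independent proof. In particular one must check that the shift is measured in units of $\sigma_n$ (hence the $\delta_1\sigma_n$ rather than a bare constant), that the bound is stated only for the right tail and only for nonnegative $a$, and that the constants are uniform in $n$ (they do not depend on $\sigma_n$, only the scaling does). Once the definition is stated with these conventions, both items follow directly, and the real use of this lemma is deferred to Section~\ref{subsection:tail-bound}, where the tail bound for the sum $S_n$ is built up inductively from the per-step bound established here.
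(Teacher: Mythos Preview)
Your proposal is correct and matches the paper's treatment: the paper offers no proof of this lemma at all, because it is really a definition (note the label \texttt{definition:z\_n} and the preceding sentence introducing the notation $\mathcal{N}_{c_0}^{\delta_1,C_1}(0,\sigma_n)$, which already informally states exactly these two properties). Your observation that the ``proof'' amounts to unpacking the notation, with the only care needed being the scaling conventions for $\delta_1\sigma_n$ and $\sigma_n^2/C_1$, is precisely the intended reading.
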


    \subsection{Tail Bound for $S_n$}
    \label{subsection:tail-bound}
    Lemma~\ref{definition:z_n} limits the behavior of variable $z$ in Figure~\ref{alg:cap}. We next show that, in addition to $z$, also variable $s$  does not grow in an uncontrolled way.  Similarly to  Lemma~\ref{definition:z_n}, we reason about the right tail of $S_n$ and prove that it admits a sub-Gaussian tail bound (Lemma~\ref{lemma_upper_bound_s_n_right_tail}); as such, the probability of $S_n$ being significantly greater than its expected value is limited.

\begin{restatable}[Preservation of sub-Gaussian tail bound]{lemma}{lemmaUpperBoundSnRightTail}
\label{lemma_upper_bound_s_n_right_tail}
Consider the random variable series $\{S_n\}_{n\in\mathbb{N}}$ induced by $s$ in  Figure~\ref{alg:cap}. Assume that the variance 
$\sigma_n^2$ grows  such that the inequality $\sigma_{n+1}^2 \geq d (\sigma_1^2 + \dots + \sigma_n^2)$ holds for some $d\in\mathbb{R}^+\setminus\{0\}$. Then,  $S_n$ admits a sub-Gaussian (but not centered) tail bound: 
        \[
        \forall a\in \mathbb{R}^+:
        \mathbb{P}\bigg(S_n \geq a + (\frac{\sqrt{1+d}}{\sqrt{1+d}-1}\delta_1+ b) \sqrt{\sum_{i=0}^n\sigma_i^2}~\bigg) ~\leq ~\exp\left(C_1\frac{-a^2}{2\sum_{i=0}^n\sigma_i}\right),\]
        where
        $b \geq\frac{ \sqrt{2 \ln{\frac{1}{1-\epsilon}}}}{\sqrt{C_1}(\sqrt{(1+d)}-1)}$.
\end{restatable}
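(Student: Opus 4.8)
The plan is to prove by induction on $n$ that $S_n$ satisfies the stated tail bound. The base case $n=0$ reduces to $S_0 = Z_0$, where the bound follows directly from Lemma~\ref{definition:z_n}(2): $\mathbb{P}(Z_0 \geq a + \delta_1\sigma_0) \leq \exp(C_1 \frac{-a^2}{2\sigma_0^2})$, and the coefficient $\frac{\sqrt{1+d}}{\sqrt{1+d}-1}\delta_1 + b$ is at least $\delta_1$, so the claimed (weaker) bound holds a fortiori. For the inductive step, I would write $S_{n+1} = (S_n \mid S_n \geq F_{S_n}^{-1}(\epsilon)) + Z_{n+1}$ and split the event $\{S_{n+1} \geq \text{threshold}\}$ by conditioning on where the ``mass'' lands between the two summands. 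The key quantitative observation is that conditioning $S_n$ on the event $\{S_n \geq F_{S_n}^{-1}(\epsilon)\}$ (a set of probability $1-\epsilon$) can only inflate the right-tail probability by a factor $\frac{1}{1-\epsilon}$, i.e. $\mathbb{P}(S_n \mid S_n \geq F_{S_n}^{-1}(\epsilon) \;\geq\; t) \leq \frac{1}{1-\epsilon}\mathbb{P}(S_n \geq t)$; the term $b \geq \frac{\sqrt{2\ln\frac{1}{1-\epsilon}}}{\sqrt{C_1}(\sqrt{1+d}-1)}$ is precisely what is needed to absorb this $\frac{1}{1-\epsilon}$ factor into a shift of the Gaussian (since $\exp(-C_1 x^2/(2\sigma^2))$ shifted by $c\sigma$ picks up a factor controlling $\ln\frac{1}{1-\epsilon}$).

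Concretely, abbreviate $V_n = \sum_{i=0}^n \sigma_i^2$ and $\mu_n = (\frac{\sqrt{1+d}}{\sqrt{1+d}-1}\delta_1 + b)\sqrt{V_n}$. To bound $\mathbb{P}(S_{n+1} \geq a + \mu_{n+1})$, I would split the total offset $a + \mu_{n+1}$ into a contribution $a_1 + \mu_n'$ assigned to the conditioned $S_n$ and a contribution $a_2 + \delta_1\sigma_{n+1}$ assigned to $Z_{n+1}$, choosing the split proportionally to the variances $V_n$ and $\sigma_{n+1}^2$ so that the two exponential bounds combine cleanly. Using independence of $Z_{n+1}$ from $S_n$ and a union-type argument over the decomposition $a = a_1 + a_2$, the product of the two sub-Gaussian bounds $\exp(C_1\frac{-a_1^2}{2V_n})\cdot\exp(C_1\frac{-a_2^2}{2\sigma_{n+1}^2})$ is optimized (by Cauchy–Schwarz / the standard fact that the convolution of sub-Gaussians adds variances) to $\exp(C_1\frac{-a^2}{2(V_n + \sigma_{n+1}^2)}) = \exp(C_1\frac{-a^2}{2V_{n+1}})$, which is the desired right-hand side. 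The growth hypothesis $\sigma_{n+1}^2 \geq d\,V_n$ (equivalently $V_{n+1} \geq (1+d)V_n$, giving $\sqrt{V_n} \leq \frac{1}{\sqrt{1+d}}\sqrt{V_{n+1}}$ and $\sqrt{V_n/V_{n+1}} \leq \frac{1}{\sqrt{1+d}}$) is what makes the $\delta_1$-shifts sum geometrically: the shift coefficient satisfies $\frac{\sqrt{1+d}}{\sqrt{1+d}-1}\delta_1\cdot\frac{1}{\sqrt{1+d}} + \delta_1 = \frac{\sqrt{1+d}}{\sqrt{1+d}-1}\delta_1$, i.e. the recursion for $\mu_n$ closes exactly, and similarly the $b$-term closes exactly because $b$ was chosen to satisfy $\frac{b}{\sqrt{1+d}} + (\text{correction from }\ln\frac{1}{1-\epsilon}) = b$.

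I expect the main obstacle to be the bookkeeping in the inductive step: making the split of $a + \mu_{n+1}$ between the two summands precise, correctly tracking how the conditioning factor $\frac{1}{1-\epsilon}$ is converted into an additive shift of size $\frac{\sqrt{2\ln\frac{1}{1-\epsilon}}}{\sqrt{C_1}}\sqrt{V_n}$ (using that for a sub-Gaussian with variance $V_n/C_1$, $\exp(-C_1 t^2/(2V_n))$ drops below $\epsilon$-type quantities at the right scale), and verifying that the two fixed-point equations for the $\delta_1$-coefficient and the $b$-coefficient are simultaneously satisfied by the stated constants. A secondary technical point is justifying the ``conditioning inflates the tail by at most $\frac{1}{1-\epsilon}$'' inequality rigorously, which follows since for any $t$, $\mathbb{P}(S_n \geq t \mid S_n \geq F_{S_n}^{-1}(\epsilon)) = \frac{\mathbb{P}(S_n \geq t,\ S_n \geq F_{S_n}^{-1}(\epsilon))}{\mathbb{P}(S_n \geq F_{S_n}^{-1}(\epsilon))} \leq \frac{\mathbb{P}(S_n \geq t)}{1-\epsilon}$, and one only applies this for $t$ above the relevant shift so the bound is meaningful. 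The rest is careful but routine manipulation of Gaussian exponents.
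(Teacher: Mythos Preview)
Your proposal is correct and follows essentially the same route as the paper's proof: induction on $n$, with the base case coming directly from the $Z_0$ tail bound, and the inductive step handled by (i) inflating the tail of the conditioned $S_n$ by $\frac{1}{1-\epsilon}$, (ii) absorbing this factor into the shift via the choice of $b$ together with the growth inequality $V_{n+1}\geq(1+d)V_n$, (iii) adding $Z_{n+1}$ using the standard ``sum of sub-Gaussians adds variances'' fact, and (iv) closing the $\delta_1$-recursion via the identity $\frac{\sqrt{1+d}}{\sqrt{1+d}-1}\delta_1\cdot\frac{1}{\sqrt{1+d}}+\delta_1=\frac{\sqrt{1+d}}{\sqrt{1+d}-1}\delta_1$, which is exactly the key algebraic step the paper uses. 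The only cosmetic difference is that the paper invokes the sub-Gaussian sum lemma as a black box (citing \cite{lattimore2020bandit}) rather than phrasing it as an optimized split $a=a_1+a_2$, but these are the same argument.
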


\MainProof{
    \begin{proof}
        We prove by induction over $n$ in $S_n$. 
        
        \noindent \emph{Base case.} By definition, $S_0 = Z_0$.  Lemma~\ref{definition:z_n} bounds $Z_0$ and we conclude:
        \[\begin{array}{lcl}\mathbb{P}\bigg(S_0 \geq a + (\frac{\sqrt{1+d}}{\sqrt{1+d}-1}\delta_1+b)\sigma_0 ~\bigg)& \leq& \exp\left(C_1\frac{-(a+(\frac{\sqrt{1+d}}{\sqrt{1+d}-1}\delta_1+b)\sigma_0)^2}{2\sigma_0^2}\right) \\
        & \leq & \exp\left(C_1\frac{-a^2}{2\sigma_0^2}\right)\end{array}\]
        \emph{Induction step}.
        \noindent Assume: 
        \[\begin{array}{lll}
        \mathbb{P}\bigg(S_n \geq a + (\frac{\sqrt{1+d}}{\sqrt{1+d}-1}\delta_1+b) \sqrt{\sum_{i=0}^n\sigma_i^2}~\bigg)&\leq&
        \exp\left(C_1\frac{-a^2}{2\sum_{i=0}^n\sigma_i^2}\right)\end{array}.\] 
       By conditioning, the probability mass of $\epsilon$ is cut away. Therefore, the tail probability of $S_n$ increases by $\frac{1}{1-\epsilon}$ at most and we have: 
       
        $\begin{array}{lll}
        \mathbb{P}\bigg((S_n | S_{n}>F_{S_n}^{-1}(\epsilon)\bigg) & \geq & a + (\frac{\sqrt{1+d}}{\sqrt{1+d}-1}\delta_1+b)\sqrt{\sum_{i=0}^n\sigma_i^2})\\
        &\leq& \frac{1}{1-\epsilon}\exp\left(C_1\frac{-a^2}{2\sum_{i=0}^n\sigma_i^2}\right) \leq \exp\left(C_1\left(\frac{-a^2}{2\sum_{i=0}^n\sigma_i^2}\right) + \ln{\frac{1}{1-\epsilon}}\right).\end{array}$
        
    \noindent
       Using the  bound of $\sum_{i=0}^{n+1}\sigma_i^2$ together with algebraic simplifications, we get: 
        
        
        $\begin{array}{l}
        \mathbb{P}\left((S_n | S_{n}>F_{S_n}^{-1}(\epsilon)) \geq a + b\sqrt{\sum\nolimits_{i=0}^{n+1}\sigma_i^2} + \frac{\sqrt{1+d}}{\sqrt{1+d}-1}\delta_1 \sqrt{\sum\nolimits_{i=0}^{n}\sigma_i^2}\right)\\
        \leq
\exp\left(C_1\left(\frac{-a^2}{2\sum_{i=0}^n\sigma_i^2}\right)\right).\end{array}$\\

\noindent        
The sum of sub-Gaussian variables bounded by variances $\sigma_1^2$ and $\sigma_2^2$ admits a bound with variance $\sigma_1^2 +\sigma_2^2$~\cite{lattimore2020bandit}. 
     \\ The random variable $\left((S_{n}|S_{n}>F_{S_n}^{-1}(\epsilon) ) -(b+\frac{\sqrt{1+d}}{\sqrt{1+d}-1}\delta_1)\sqrt{\sum_{i=0}^{n+1}\sigma_i^2}\right)$ is thus bounded, while  $(Z_{n+1}-\delta_1\sigma_{n+1})$ is bounded by Lemma~\ref{definition:z_n}. Hence, \\
%
 %
$\begin{array}{l}
            \mathbb{P}\Bigg(\left((S_{n}|S_{n}>F_{S_n}^{-1}(\epsilon) ) -b\sqrt{\sum\nolimits_{i=0}^{n+1}\sigma_i^2} - \frac{\sqrt{1+d}}{\sqrt{1+d}-1}\delta_1\sqrt{\sum\nolimits_{i=0}^{n}\sigma_i^2}\right)\\\qquad+(Z_{n+1}-\delta_1\sigma_{n+1}) \geq a\Bigg)~~~\leq ~~~\exp\left(C_1\left(\frac{-a^2}{2\sum_{i=0}^{n+1}\sigma_i^2}\right)\right)
        \end{array}$.

\noindent By the growth of the variance, we have $\sqrt{\sum_{i=0}^{n}\sigma_i^2}\leq 
 \sqrt{\frac{\sigma_{n+1}^2}{1+d}} = \frac{\sigma_{n+1}}{\sqrt{1+d}}$. Also,  $\sigma_{n+1}\leq\sqrt{\sum_{i=0}^{n+1}\sigma_i^2}$. Putting it all together, the induction step follows: 
 
%
$\begin{array}{l}
 \mathbb{P}\left(\left((S_{n}|S_{n}>F_{S_n}^{-1}(\epsilon) ) + Z_{n+1}\geq a+(b+\frac{\sqrt{1+d}}{\sqrt{1+d}-1}\delta_1)\sqrt{\sum_{i=0}^{n+1}\sigma_i^2}\right)\right)\\\qquad\leq \exp\left(C_1\left(\frac{-a^2}{2\sum_{i=0}^{n+1}\sigma_i^2}\right)\right)
\end{array}$.

    \qed
    \end{proof}
    }

\subsection{Inductive Bound Set for $S_n$}
\label{subsection:inductive-bound}
While Lemma~\ref{lemma_upper_bound_s_n_right_tail} gives an upper bound for the cdf  of $S_n$, the sub-Gaussian tail bound of $S_n$ is not very sharp for values close to the mean of $S_n$. In this section we extend Lemma~\ref{lemma_upper_bound_s_n_right_tail} with a \emph{union-bound compositional approach} to tighten cdf bounds, as follows.  We (i) split the cdf into $m\in\mathbb{N}$ pieces, (ii) provide lower bounds $B(S_n)$ for each piece of the cdf, and (iii) combine the lower bounds inductively into a tighter upper bound for the cdf of $S_n$. 

Our compositional framework is inductive over the lower bounds of cdf pieces: $S_0$  satisfies the bound $B(S_0)$ (base case) and, if $S_{n}$ satisfies the bound $B(S_n)$, then  $S_{n+1}$ satisfies $B(S_{n+1})$ (induction step). 
Our bound $B(S_n)$ is uniquely defined by a set of inequalities using two vectors $\vec{a}_B,\vec{b}_B$ of bounding values, where 
elements of $\vec{a}_B,\vec{b}_B$ provide the location of and lower bounds on the cdf pieces of $S_n$. As such, we set: 
{\small\begin{equation} \label{eq:inductiveBoundSet}
{ 
B(S_n)=\bigg\{\mathbb{P}\big(S_n\leq \vec{a}_{B,1}\sqrt{\sum_{i=0}^{n}\sigma_i^2}\big)\geq \vec{b}_{B,1}, 
~~\dots,~~\allowbreak 
\mathbb{P}\big(S_n\leq \vec{a}_{B,m}\sqrt{\sum_{i=0}^{n}\sigma_i^2}\big)\geq \vec{b}_{B,m}\bigg\}.\hspace*{-1em}
}
\end{equation}}
For simplicity, we assume $\vec{a}_1\leq0$ and $\vec{b}_1\geq \epsilon$, in order to ensure that only negative values of $s$ exit the loop therefore enforcing $F^{-1}_{S_n}(\epsilon)\leq 0$.
If each inequality in $B(S_n)$ is valid, we say that the  bound $B(S_n)$ \emph{holds}.
By simple arithmetic reasoning, we state the following property over bounds. 
\begin{lemma}[Partial order of bounds]\label{lem:bound:order}
    The bounds $B(S_n)$ admit an ordering whenever they describe the same intervals and  probabilities are ordered. That is: 
    \[B' (\cdot)\leq B(\cdot) \iff \vec{a}_{B'} = \vec{a}_{B} \land \forall_{1\leq i \leq m}: b_{B',i}\geq b_{B,i}\]
\end{lemma}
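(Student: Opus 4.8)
The plan is to first fix what the relation $B'(\cdot)\le B(\cdot)$ means: reading a bound of the form~\eqref{eq:inductiveBoundSet} as a constraint set on the cdf of $S_n$, I take $B'(\cdot)\le B(\cdot)$ to say that \emph{every $S_n$ for which all inequalities of $B'$ hold also satisfies all inequalities of $B$} (so the smaller bound is the tighter family of constraints), and I restrict, as everywhere in our framework, to bounds over a common grid, i.e.\ with the same number $m$ of pieces — bounds over different grids are simply not comparable, which already matches the clause $\vec a_{B'}=\vec a_{B}$ on the right-hand side. With this reading the biconditional splits into two implications, and throughout I use that $\vec b_{B'}$ and $\vec b_{B}$ are nondecreasing, the natural normalisation for a staircase under-approximation of a cdf, so that no inequality in a bound is implied by another inequality of the same bound.

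For ``$\Leftarrow$'': assume $\vec a_{B'}=\vec a_{B}$ and $b_{B',i}\ge b_{B,i}$ for all $1\le i\le m$, and let $S_n$ satisfy $B'$. For each $i$, the $i$-th inequality of $B'$ and $\vec a_{B',i}=\vec a_{B,i}$ give
\[
\mathbb{P}\Big(S_n\le \vec a_{B,i}\sqrt{\textstyle\sum_{k=0}^n\sigma_k^2}\Big)=\mathbb{P}\Big(S_n\le \vec a_{B',i}\sqrt{\textstyle\sum_{k=0}^n\sigma_k^2}\Big)\ge b_{B',i}\ge b_{B,i},
\]
which is the $i$-th inequality of $B$; hence $B(S_n)$ holds and $B'(\cdot)\le B(\cdot)$. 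This is precisely the ``simple arithmetic reasoning'' of the statement: coordinate-wise monotonicity of $\ge$, together with the fact that both bounds constrain the cdf at the same abscissae.

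For ``$\Rightarrow$'' I argue by contraposition, so assume $\vec a_{B'}=\vec a_{B}=:\vec a$ (different grids being vacuous, as noted) and $b_{B',\ell}<b_{B,\ell}$ for some index $\ell$. Since every nondecreasing right-continuous function is a cdf, take $S_n$ with the step cdf equal to $0$ below $\vec a_{1}\sqrt{\sum_{k=0}^n\sigma_k^2}$, equal to $b_{B',i}$ immediately to the right of the breakpoint $\vec a_{i}\sqrt{\sum_{k=0}^n\sigma_k^2}$ for each $i$, and reaching $1$ somewhere above $\vec a_{m}\sqrt{\sum_{k=0}^n\sigma_k^2}$; this is well defined because $\vec a$ is sorted and $\vec b_{B'}$ is nondecreasing, and it respects $\vec a_1\le 0,\ \vec b_1\ge\epsilon$ since $b_{B',1}\ge b_{B,1}\ge\epsilon$. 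Then every inequality of $B'$ holds (with equality), while $\mathbb{P}(S_n\le \vec a_{\ell}\sqrt{\sum_{k=0}^n\sigma_k^2})=b_{B',\ell}<b_{B,\ell}$ violates the $\ell$-th inequality of $B$, so $B'(\cdot)\not\le B(\cdot)$. I expect ``$\Leftarrow$'' to be immediate; the only place needing care is ``$\Rightarrow$'' — the witness construction, the handling of ties or interleaving among the entries of $\vec a$, and the reliance on the non-redundancy (monotonicity) normalisation of $\vec b$, without which the biconditional degenerates on pathological bounds. (If instead the right-hand side is meant as the \emph{definition} of $B'(\cdot)\le B(\cdot)$, the lemma reduces to verifying reflexivity, antisymmetry — two oppositely ordered staircases over a common grid must coincide — and transitivity, which are equally elementary.)
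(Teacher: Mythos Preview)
Your proposal is correct, and in fact considerably more thorough than what the paper provides. The paper gives no proof of this lemma at all: it simply prefaces the statement with ``By simple arithmetic reasoning, we state the following property over bounds,'' and leaves it at that. In context the paper is treating the right-hand side as the \emph{definition} of the order $B'(\cdot)\le B(\cdot)$ (your second reading), with the word ``Lemma'' and the phrase ``admit an ordering'' signalling only the trivial check that this relation is a partial order on bounds sharing the same grid $\vec a$.

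Your first reading---interpreting $B'(\cdot)\le B(\cdot)$ semantically as ``every $S_n$ satisfying $B'$ also satisfies $B$'' and then deriving the equivalence---is a genuine strengthening: it explains \emph{why} this componentwise order is the natural one, and your witness construction for the contrapositive is sound (modulo the monotonicity normalisation you correctly flag). The paper does not attempt any of this; it simply adopts the componentwise order by fiat and moves on. So your argument is compatible with, and more informative than, the paper's one-line justification.
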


\paragraph{\bf Inductive computations of bound set for $S_n$.} With regard to the partial order, we provide our inductive computation for bounds of $S_n$: from a bound $B(\cdot)$ that holds for $S_n$, we compute a bound $B'(\cdot)$ that holds for $S_{n+1}$. If $B'(\cdot)\leq B(\cdot)$,  then $B(S_m)$ holds for all $m\geq n$, as  our computation of new bounds ensures monotonicity of bounds. Doing so and starting with $B(S_n)$,  (i)  the left tail with a cdf of $\epsilon$ is cut away from $S_n$, yielding $S'_n$. Then, using Lemma~\ref{definition:z_n}, we (ii) add an  almost Normal variable  with variance $\sigma_{n+1}^2 \geq d\sum_{i=0}^{n}\sigma_i^2$  to the resulting distribution of $S'_n$, and compute a new bound $B'(S_{n+1})$. 
Our bound computation uses a union-bound approach for deriving  interval boundaries. In addition to the bounds from $B$, we use  tail bounds from Lemma~\ref{lemma_upper_bound_s_n_right_tail}, as otherwise obtaining an inductive bound for $S_n$ with $\epsilon>0$ is not possible.

The next example illustrates our  inductive bound set computation for $S_n$.

\begin{example}
    Consider an instance of the stochastic process
    of Figure~\ref{alg:cap}, by setting $\epsilon = 0.1$ and $d=3$ and using a set of  almost Normal variables $\{Z_n\}_{n\in \mathbb{N}}$ with parameters $C_1=1$, $\delta_1=0$, and $c_0=10^{-3}$. For this instance of Figure~\ref{alg:cap}, we define the bound set: \[B(S_n)=\bigg\{\mathbb{P}\big(S_n \leq 0\big ) ~\geq 0.1, ~~~\mathbb{P}\big(S_n \leq \sqrt{\sum\nolimits_{i=0}^{n}\sigma_i^2}~\big) ~\geq 0.4\bigg\}.\]
 \noindent   For the tail bounds of Lemma~\ref{lemma_upper_bound_s_n_right_tail},  we (arbitrarily) pick the value $3\sqrt{\sum\nolimits_{i=0}^{n}\sigma_i^2}$ and derive: 
    \[\mathbb{P}\big(S_n\geq 3\sqrt{\sum\nolimits_{i=0}^{n}\sigma_i^2}~\big) ~~\leq 
    \exp(\frac{-(2.54)^2}{2})~~\leq 0.04.\] Therefore, $\mathbb{P}(S_n\leq 3\sqrt{\sum\nolimits_{i=0}^{n}\sigma_i^2})\geq 0.96$.
    Using these inequalities, we  compute new bounds $S_n'$. Here,  $S_n'=(S_n\geq F_{S_n}^{-1} (\epsilon))$ is the variable obtained from $S_n$ by cutting away  the left tail with weight $\epsilon$. 
    For readability, we denote the summation of variances up to $S_n$ through $\sigma_{S_n}^2:=  \sum_{i=0}^{n}\sigma_i^2$, and similarly $\sigma_{S_{n+1}}^2:=  \sum_{i=0}^{n+1}\sigma_i^2$. With this notation at hand, we have: 
    \begin{align*}
        \mathbb{P}\big(S_{n+1} \leq 0\big) ~~\geq ~~&
        \mathbb{P}\big(S'_n\leq \sigma_{S_{n}}\big)\cdot\mathbb{P}\big(Z_{n+1}\leq-\sigma_{S_{n}}\big)+\nonumber
        \\ &\mathbb{P}\big(\sigma_{S_{n}}\leq S'_n\leq 3\sigma_{S_{n}}\big)\cdot\mathbb{P}\big(Z_{n+1}\leq -3\sigma_{S_{n}}\big)\\
        \mathbb{P}\big(S_{n+1} \leq \sigma_{S_{n+1}}\big) ~~\geq ~~&
        \mathbb{P}\big(S'_n\leq \sigma_{S_{n}}\big)\cdot\mathbb{P}\big(Z_{n+1}\leq \sigma_{S_{n+1}}-\sigma_{S_{n}}\big)+\nonumber
        \\ &\mathbb{P}\big(\sigma_{S_{n}}\leq S'_n\leq 3\sigma_{S_{n}}\big)\cdot\mathbb{P}\big(Z_{n+1}\leq\sigma_{S_{n+1}} -3\sigma_{S_{n}}\big)
    \end{align*}

 \noindent   Note that $ \sqrt{\sum_{i=0}^{n+1}\sigma_i^2} \geq \sqrt{1+d}  \sqrt{\sum_{i=0}^{n}\sigma_i^2}$ and $Z_{n+1}$ has standard deviation $\sqrt{3\sum_{i=0}^{n}\sigma_i^2}$. Using the bound set $B(S_n)$ and Lemma~\ref{definition:z_n}, we get:
    \begin{eqnarray*}
        \mathbb{P}\bigg(S_{n+1} \leq 0\bigg) ~\geq~\frac{0.4-\epsilon}{1-\epsilon}\bigg(\Phi\big(\frac{-1}{\sqrt{3}}\big)-c_0\bigg)+
        \frac{0.56}{1-\epsilon}\bigg(\Phi\big(\frac{-3}{\sqrt{3}}\big)-c_0\bigg)
        \approx 0.1188
\end{eqnarray*}
    
  \noindent  Similarly,  for the second bound $\mathbb{P}(S_{n+1}\leq\sqrt{\sum_{i=0}^{n+1}\sigma_i^2}) \gtrapprox  0.4138$.
    The bound $B$ is inductive as the new bound which we computed for $S_{n+1}$ is smaller than the initial bound $B$ which is assumed to hold for $S_n$.
    \hfill \ensuremath{\triangle}
\end{example}

\paragraph{\bf On the existence of inductive bounds.} Our approach to 
inductively computing bound sets for $S_n$ relies on an union-bound argument to improve the cdf bound of Lemma~\ref{lemma_upper_bound_s_n_right_tail}.  Recall that the sub-Gaussian tail bound of $S_n$ only depends on $d\in\mathbb{R^+}\setminus{0}$ and $\epsilon\in\mathbb{R}^+\setminus 0$.  We next show that the existence of an inductive bound set is conditioned only by such a $d$. Namely, Theorem~\ref{theorem:epsilon-exists-always} ensure that, for every $d$ there is an $\epsilon\in\mathbb{R^+}\setminus{0}$, such that an inductive bound exists for the corresponding series $\{S_n\}_{n\in\mathbb{N}}$, with   $F_{S_n}^{-1}(\epsilon)\leq0$. The probability of $S_n$ being smaller than zero is therefore \emph{lower bounded} by some nonzero percentage.

\begin{restatable}[Inductive bound set]{theorem}{theoremEpsilonExistsAlways}
\label{theorem:epsilon-exists-always}
    For every $d\in\mathbb{R^+}\setminus{0}$ there exists an $\epsilon\in\mathbb{R^+}\setminus{0}$ such that an inductive bound set $B(S_n)$ holds, with $F_{S_n}^{-1}(\epsilon)\leq0$, given that $c_0$ converges to $0$ and can be chosen arbitrarily small.
\end{restatable}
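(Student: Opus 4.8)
The plan is to exhibit, for every $d>0$, a concrete family of candidate inductive bound sets $B(S_n)$ parametrized by the number of pieces $m$ and the chosen interval endpoints $\vec a_B$, and then to show that for $\epsilon$ small enough (and $c_0\to0$) the induction step produces a bound $B'(\cdot)\le B(\cdot)$ in the sense of Lemma~\ref{lem:bound:order}. The natural starting point is a ``Gaussian-shaped'' candidate: since $S_n$ is built by summing almost-Normal variables with geometrically growing variance, the rescaled variable $S_n/\sqrt{\sum_{i=0}^n\sigma_i^2}$ should be close to a fixed sub-Gaussian profile, so I would choose $\vec a_{B,j}$ to be a finite grid of reals with $\vec a_{B,1}\le 0$ and set $\vec b_{B,j}$ slightly below $\Phi(\vec a_{B,j}/\kappa)$ for a suitable scale $\kappa=\kappa(d)$, reserving the first entry $\vec b_{B,1}\ge\epsilon$ so that $F_{S_n}^{-1}(\epsilon)\le 0$ is enforced. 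The right tail beyond the last grid point is controlled by Lemma~\ref{lemma_upper_bound_s_n_right_tail}, which is exactly the ingredient that makes the union-bound closable for $\epsilon>0$.

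The core of the argument is the induction step, carried out as in the worked example. Given that $B(S_n)$ holds, conditioning on $S_n\ge F_{S_n}^{-1}(\epsilon)$ (with $F_{S_n}^{-1}(\epsilon)\le 0$) inflates each lower bound $\vec b_{B,j}$ by at most the factor $\tfrac{1}{1-\epsilon}$, i.e.\ replaces it by $\max\!\big(0,\tfrac{\vec b_{B,j}-\epsilon}{1-\epsilon}\big)$; then adding $Z_{n+1}$ and using a union/convolution bound together with $|F_{Z_{n+1}}(z)-\Phi(z/\sigma_{n+1})|\le c_0$ (Lemma~\ref{definition:z_n}) and $\sigma_{n+1}^2\ge d\sum_{i=0}^n\sigma_i^2$ yields explicit lower bounds $\vec b_{B',j}$ for $\mathbb P(S_{n+1}\le \vec a_{B,j}\sqrt{\sum_{i=0}^{n+1}\sigma_i^2})$. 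I would show each $\vec b_{B',j}$ is a continuous function of $\epsilon$ and $c_0$ that, at $\epsilon=c_0=0$, equals a convex combination of shifted-Gaussian cdf values which strictly exceeds $\vec b_{B,j}$ — this strict slack is the key quantitative fact and follows because adding an independent Normal with variance at least a $d$-fraction of the running variance ``thickens'' the distribution toward the mean faster than rescaling by $\sqrt{1+d}$ spreads it out, provided the grid and the profile $\kappa(d)$ are chosen compatibly with $d$. By continuity, this strict inequality at $\epsilon=c_0=0$ persists for all sufficiently small $\epsilon>0$ and $c_0>0$, giving $B'(\cdot)\le B(\cdot)$ and hence, by the monotonicity remark preceding the theorem, $B(S_m)$ for all $m$; the base case $B(S_0)=B(Z_0)$ holds directly from Lemma~\ref{definition:z_n} for small enough $c_0$.

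I expect the main obstacle to be establishing the strict slack at $\epsilon=c_0=0$ uniformly over the induction, i.e.\ proving that a single fixed profile $\kappa(d)$ and grid $\vec a_B$ is genuinely a contraction rather than merely a fixed point of the update map. One clean way around this is to not insist on the exact Gaussian profile but to take $\kappa$ slightly larger than the self-consistent value and absorb the resulting gap as the slack; alternatively one can pick the grid fine enough that the discretization error in the union bound is dominated by the contraction. A secondary subtlety is handling the rightmost piece: the tail-bound term from Lemma~\ref{lemma_upper_bound_s_n_right_tail} carries the factor $\tfrac{\sqrt{1+d}}{\sqrt{1+d}-1}\delta_1+b$ with $b\ge \tfrac{\sqrt{2\ln\frac1{1-\epsilon}}}{\sqrt{C_1}(\sqrt{1+d}-1)}$, and since $b\to 0$ as $\epsilon\to 0$, the tail contribution is well-behaved in the relevant limit, so this should not cause real trouble once $\epsilon$ is taken small. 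Finally, I would note that the whole argument only uses $d$ through $\kappa(d)$ and the factor $\sqrt{1+d}$, so the claimed dependence — $\epsilon$ and the bound set depending only on $d$, with $c_0$ free to shrink — is exactly what the construction delivers.
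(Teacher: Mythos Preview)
Your route is considerably more elaborate than the paper's, which never attempts a multi-piece Gaussian profile or a continuity argument at $\epsilon=c_0=0$. The paper uses a \emph{single-entry} bound set $\vec a_B=(0)$ and closes the induction with one two-interval union bound. By Lemma~\ref{lemma_upper_bound_s_n_right_tail} it fixes a threshold $c'$ (in units of $\sqrt{\sum_{i\le n}\sigma_i^2}$) such that $\mathbb P(S_n\ge c')\le b'<\tfrac12$ holds uniformly in $n$; after conditioning one has $\mathbb P\bigl((S_n\mid S_n\ge F_{S_n}^{-1}(\epsilon))<c'\bigr)\ge 1-b'-\epsilon$, and since $\sigma_{n+1}^2\ge d\sum_{i\le n}\sigma_i^2$ also $\mathbb P(Z_{n+1}<-c')\ge \Phi(-c'/\sqrt d)-c_0$. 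Hence $\mathbb P(S_{n+1}\le 0)\ge(1-b'-\epsilon)\bigl(\Phi(-c'/\sqrt d)-c_0\bigr)$, and taking $\epsilon$ to be the positive root of $\epsilon=(1-b'-\epsilon)\bigl(\Phi(-c'/\sqrt d)-c_0\bigr)$ makes the one-line bound set inductive. No grid, no contraction analysis, no limit in $\epsilon$ is needed; what buys the existence is that Lemma~\ref{lemma_upper_bound_s_n_right_tail} controls the mass in $[0,c']$ \emph{independently of the inductive hypothesis}, reducing the whole problem to a single scalar inequality.

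Your construction is closer in spirit to what the implementation in Section~\ref{section:implementation} does to obtain \emph{tight} bounds, and it may be completable, but the obstacle you flag is genuine and your suggested fixes do not obviously close it. At $\epsilon=c_0=0$ the rescaled update map has the standard Normal cdf as an exact fixed point, so a Gaussian profile gives zero slack; taking $\kappa>1$ does create strict slack at grid points $a_j>0$ (the update contracts the effective variance from $\kappa^2$ toward $(\kappa^2+d)/(1+d)<\kappa^2$), but at the mandatory entry $a_1\le0$ the same contraction moves the cdf value the \emph{wrong} way---a more concentrated Gaussian has smaller left-tail mass---so $\Phi(\cdot/\kappa)$ is not a sub-fixed-point there, and ``slightly below $\Phi(a_1/\kappa)$'' does not help. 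Refining the grid addresses discretization error but not this sign issue. To rescue your approach you would have to treat the first coordinate separately (set $b_1$ far below $\tfrac12$ and feed the right-tail estimate directly into the bound for $\mathbb P(S_{n+1}\le0)$), which is precisely the one-point argument the paper carries out in isolation.
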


\section{Polynomial Random Walks}
\label{section:polynomial-random-walks}

\begin{wrapstuff}[r, width=0.45\textwidth]
{
\setlength{\intextsep}{0pt}
\setlength{\belowcaptionskip}{10pt}
\begin{figure}[H]
    \begin{algorithmic}[l]
    \State $n\gets 0$
    \State $y \gets y_0$
    \While{$y > 0$}
        \State $n\gets n + 1$
        \State $x\gets q_1[n] \oplus_{p} q_2[n]$
        \State $y\gets y + x$
    \EndWhile
    \end{algorithmic}
        \caption{Polynomial random walk $\mathcal{P}$}
    \label{alg:random_walk}
\end{figure}
}
\end{wrapstuff}

Section~\ref{section:summation-almost-normal} showed that stochastic processes defined by Figure~\ref{alg:cap} have bounded behavior, allowing us to lower bound the termination probability via sub-Gaussian tail bounds and inductive bound sets. In this section we map the termination analysis of certain polynomial programs, called polynomial random walks,  to the framework of Section~\ref{section:summation-almost-normal}. Importantly,  \emph{we reduce the problem of verifying PAST of polynomial random walks to the problem of ensuring existence of inductive bounds} (Theorem~\ref{theorem:polynomial-random-walk-threshold}). Our recipe  consists of  transforming a polynomial random walk program $\mathcal{P}$ to a program that (i) bounds PAST of $P$ and (ii) is equivalent to the stochastic process of Figure~\ref{alg:cap}. 

\subsection{Programming Model}\label{sec:pgmModel} 
We define the class of  \emph{polynomial random walks} via the programming model of Figure~\ref{alg:random_walk}, where $q_1[n],q_2[n]\in \mathbb{R}[n]$ are arbitrary polynomial expressions in the loop counter $n$.  The \emph{degree of a polynomial random walk program} $\mathcal{P}$, written as $\deg(\mathcal{P})$, is  given by the maximum degree of its  polynomials, that is  $\deg(\mathcal{P}) = \max\{\deg (q_1[n]), \deg (q_2[n])\}$.
The  series $\{X_n\}_{n\in\mathbb{N}}$, $\{Y_n\}_{n\in\mathbb{N}}$ induced by the random loop variables $x,y$ are next defined. 

\begin{definition}[Random walk variables]
\label{definition:x_n}\label{definition:y_n}
    The \emph{random walk variable $X_n$} corresponding to the loop variable $x$ at iteration $n$ in Figure~\ref{alg:random_walk} is 
    \begin{equation*}
        X_n = \begin{cases}
            q_1[n] & \text{with probability } p\\
            q_2[n] & \text{with probability } 1-p.\\
        \end{cases}
    \end{equation*}
  The \emph{random walk variable $Y_n$} captures the distribution of $y$ after iteration $n$, as:
    \[Y_{n+1} = (Y_n | Y_n > 0) + X_{n+1}.\]
\end{definition}
The second-order moment of a random variable $X_n$ is written as $Var(X_n)$. For Figure~\ref{alg:random_walk}, we have $\mathbb{E}(X_n) = q_1[n] p + q_2[n] (1-p)$ and $Var(X_n) = q_1[n]^2 p + q_2[n]^2 (1-p)$, capturing the   mean (first moment) and variance (second moment) of $X_n$; note that both moments of $X_n$ are also polynomials in $n$.


To prove PAST of Figure~\ref{alg:random_walk} we need to prove that the expected value of its stopping time is finite~\cite{hark-aiming-2020}. Based on the semantics of Figure~\ref{alg:random_walk}, it is easy to see that the stopping time of Figure~\ref{alg:random_walk} is given by the first iteration $n$ in which $Y_n$ becomes negative. 
\begin{definition}[Expected stopping time]
\label{definition:stopping-time}
    Let $T$ be $\inf\{n\geq0:Y_n\leq0\}$, where $T$ denotes  the \emph{stopping time} of the stochastic process induced by the polynomial random walk of Figure~\ref{alg:random_walk}. The \emph{expected stopping time} of Figure~\ref{alg:random_walk} is defined as $\mathbb{E}(T) = \sum_{n=0}^{\infty} \mathbb{P}(T\geq n)$.
\end{definition}

\begin{wrapstuff}[r, width=0.45\textwidth]
{
\setlength{\intextsep}{0pt}
    \begin{figure}[H]
    \begin{algorithmic}
    \State $n\gets 0$
    \State $y \gets y_0$
    \While{$n\leq n_0$}
        \State $n\gets n+1$
        \State$x\gets q_1[n] \oplus_{p} q_2[n]$
        \State $y\gets y + x$
    \EndWhile
    \While{$y > g$} \Comment{\textit{where $g \leq 0$}}
        \State $z\gets 0$
        \State $n' \gets n$
        \While{$n \leq n'\cdot k$}
            \State $n\gets n + 1$
            \State$x\gets q_1[n] \oplus_{p} q_2[n]$
            \State $z\gets z + x$
        \EndWhile
        \State $y\gets y + z$
    \EndWhile
    \end{algorithmic}
        \caption{Transformed random walk}
    \label{alg:random_walk_transformed}
    \end{figure}
}
\end{wrapstuff}
We  exploit Definition~\ref{definition:stopping-time} to show that Figure~\ref{alg:random_walk} is PAST under additional conditions. Namely, we translate Figure~\ref{alg:random_walk} into  Figure~\ref{alg:random_walk_transformed} and ensure that the stopping time of Figure~\ref{alg:random_walk_transformed}  becomes finite above a certain threshold; this threshold   depends only on the maximum polynomial degree of Figure~\ref{alg:random_walk} and the variable $k$. We then show that finiteness of the stopping time of Figure~\ref{alg:random_walk_transformed} implies PAST of  Figure~\ref{alg:random_walk} (Lemma~\ref{lemma:stopping-time-inequality}). 

\paragraph{Program transformation.} We  translate Figure~\ref{alg:random_walk} into the stochastic process of Figure~\ref{alg:random_walk_transformed}. 
This program transformation is defined through the parameters $n_0, k$ and $g$. The loop body of Figure~\ref{alg:random_walk_transformed} is initially executed several times, accumulating $n_0$ steps. In every iteration of the outer loop $k$ times as many steps as before are summed up, before the loop guard is checked again. Furthermore, the loop guard of Figure~\ref{alg:random_walk_transformed} might be relaxed, as $g\leq 0$. We highlight similarities between 
Figure~\ref{alg:random_walk_transformed} and the summation of almost Normal variables with conditioning in Figure~\ref{alg:cap}: the inner loop of Figure~\ref{alg:random_walk_transformed} computes the value for $z$ by summing up $X_i$. As argued in Section~\ref{subsection:summation-polynomial-random-walk-increments}, this is similar to drawing $z$ from an almost Normal distribution as in Figure~\ref{alg:cap}.

    We have that the expected stopping time of Figure~\ref{alg:random_walk_transformed} is  larger than of Figure~\ref{alg:random_walk}.

\begin{restatable}[Stopping Time Inequality]{lemma}{lemmaStoppingTimeInequality}
    \label{lemma:stopping-time-inequality}
 Let $T'$ be $\inf\{n\geq n_0:Y_n\leq g\}$, denoting the stopping time of Figure~\ref{alg:random_walk_transformed}. Then, $\mathbb{E}(T) \leq \mathbb{E}(T')$.
\end{restatable}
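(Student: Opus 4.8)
The plan is to show that every terminating run of the transformed program (Figure~\ref{alg:random_walk_transformed}) corresponds to a terminating run of the original program (Figure~\ref{alg:random_walk}) that uses no more loop iterations, and that the two programs can be run on the same underlying probability space by coupling their random choices. Concretely, I would fix a sequence of sampled values $x_1, x_2, \dots$ for the variable $x$ across iterations (each $x_i$ equal to $q_1[i]$ or $q_2[i]$ according to independent $\mathrm{Bernoulli}(p)$ draws) and feed \emph{the same} sequence to both programs. Since both programs update $y$ by adding the $x_i$ in order, at any iteration count $n$ the partial sums of the increments are identical in both programs; the only differences are (i) when the guard is checked and (ii) the threshold against which $y$ is compared ($0$ versus $g \leq 0$).

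The key steps, in order, are as follows. First I would make the coupling precise: define the filtration generated by the Bernoulli choices and note that both $\{Y_n\}$ and the analogous process in Figure~\ref{alg:random_walk_transformed} are deterministic functions of these choices, so $T$ and $T'$ are well-defined random variables on a common space. Second, I would argue the pathwise inequality $T \le T'$ on this common space. Here the point is that the original loop exits at the first $n$ with $Y_n \le 0$, whereas the transformed loop (a) only tests the guard at the sparse iteration counts $n_0, n_0 k, n_0 k^2, \dots$ produced by the inner loop, and (b) uses the weaker exit condition $Y_n \le g$. Both effects can only \emph{delay} termination: if $Y_n \le 0$ for some $n$, that already witnesses termination of the original process at or before $n$; but the transformed process might skip past this $n$ without checking, and even when it does check it requires the stronger $Y_n \le g \le 0$. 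Hence for every realization of the Bernoulli sequence, the iteration index at which Figure~\ref{alg:random_walk_transformed} stops is at least the index at which Figure~\ref{alg:random_walk} stops, i.e.\ $T \le T'$ pointwise. Third, I would take expectations of the pointwise inequality, or equivalently use the tail-sum formula $\mathbb{E}(T) = \sum_n \mathbb{P}(T \ge n) \le \sum_n \mathbb{P}(T' \ge n) = \mathbb{E}(T')$, the middle inequality following from $\{T \ge n\} \subseteq \{T' \ge n\}$ for each $n$, which is exactly the event form of the pathwise bound.

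The main obstacle I anticipate is the bookkeeping around the two \emph{different} semantics for the loop counter and what ``$Y_n$'' means in each program. In Figure~\ref{alg:random_walk} the process $\{Y_n\}$ is defined conditionally on non-termination ($Y_{n+1} = (Y_n \mid Y_n > 0) + X_{n+1}$), so strictly speaking $Y_n$ for $n \ge T$ is only defined via this conditioning; I need to be careful that the coupling respects this, which is why phrasing the argument on the level of the raw (unconditioned) random-walk partial sums $\tilde Y_n = y_0 + \sum_{i=1}^n X_i$ and then observing that $T = \inf\{n : \tilde Y_n \le 0\}$ and $T' = \inf\{n \in \{n_0, n_0 k, \dots\} : \tilde Y_n \le g\}$ is cleanest — the conditioning in the definitions is just a device for describing the post-termination behavior and does not affect the stopping time itself. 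A secondary subtlety is the initial warm-up phase ($n \le n_0$) and the assumption $g \le 0$: I must confirm that relaxing the guard to $y > g$ with $g \le 0$ genuinely weakens (never strengthens) the exit condition, and that starting the comparison only after $n_0$ steps likewise cannot cause earlier termination. Both are immediate once the pathwise picture is set up, so I expect the proof to be short, with essentially all the content in stating the coupling and the two monotonicity observations precisely.
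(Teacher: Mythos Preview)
Your proposal is correct and follows essentially the same approach as the paper: fix the sequence of draws $X_n$, observe that on each such realization the transformed program can only terminate at the same or a later iteration count than the original (because its guard is checked less frequently and against the weaker threshold $g\le 0$), and conclude $T\le T'$ pathwise, hence in expectation. The paper's own proof is a two-sentence version of exactly this pathwise comparison; your additional care about the coupling and the conditioned versus unconditioned definition of $Y_n$ is sound but not needed for the level of detail the paper provides.
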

We denote with $p_\text{term}$ a lower bound for the probability of the outer loop terminating. 
In what follows, we will ensure the stopping time of the program in Figure~\ref{alg:random_walk_transformed} is finite when the probability $p_\text{term}$ is high enough and $k$ is small 
(Lemma~\ref{lemma:bound-from-termination-probability}).  The existence of a nonzero lower bound for $p_\text{term}$ is implied by Theorem~\ref{theorem:epsilon-exists-always}; we note that  $p_\text{term}$ depends on the probability of choosing a branch $p$ and grows as $k^{2\deg{(\mathcal{P}})}+1$ increases. By setting  $k$ to its maximum value, 
we derive a threshold $d_\text{min}(p)$ for the degree $\deg(\mathcal{P})$  of the polynomial random walk of Figure~\ref{alg:random_walk_transformed} and prove that the stopping time of Figure~\ref{alg:random_walk} above this threshold $d_\text{min}(p)$  is finite (Theorem~\ref{theorem:polynomial-random-walk-threshold}). We thus use  $d_\text{min}(p)$ to provide sufficient conditions for deciding PAST of the polynomial random walks in Figure~\ref{alg:random_walk}.

\subsection{Loop Summations of Polynomial Random Walk Increments}
\label{subsection:summation-polynomial-random-walk-increments}
We now establish the formal connection between the polynomial random walks of Figure~\ref{alg:random_walk_transformed} and the stochastic processes of Figure~\ref{alg:cap}. We prove that the loop summation (defined below) of the increments of the random walk in Figure~\ref{alg:random_walk_transformed} is almost normally distributed as given in  Lemma~\ref{definition:z_n}, when an inequality over the degrees of expected value of the step and its variance is true. This inequality holds, whenever the leading terms of the steps cancel out.

\begin{definition}[Random walk loop summation]\label{def:random:walk:sum}
    The random variables $U_{0} = y_0+X_{0} + \dots + X_{n_0}$ and $U_{n'} =  X_{n'} + \dots + X_{\lceil n'\cdot k\rceil}$ are  \emph{(loop) summations} of the random variables $X_i$ of Figure~\ref{alg:random_walk_transformed}.
\end{definition}

\noindent Lemma~\ref{lemma:almost-normal}  then shows that the absolute deviation $c_0$ from the cdf of the Normal distribution  converges to $0$. Further,  Lemma~\ref{lemma:tail-bound} conjectures that the summation of random walk increments admits a sub-Gaussian tail bound with $C_1=4p(1-p)$ and $\delta_1$ converging to $0$, thus establishing, that the loop summation follows an almost Normal distribution $\mathcal{N}_{c_0}^{\delta_1,4p(1-p)}$.

\begin{restatable}[Convergence of cdf deviation]{lemma}{lemmaSummationCdfDeviation}
    \label{lemma:almost-normal}
   Assume that $\deg(Var(X_i)) > 2\deg(\mathbb{E}(X_i)) + 1$ holds for Figure~\ref{alg:random_walk_transformed}.  Then,  the normalizations of the loop summations $U_0$ and $U_{n'}$ follow a Normal distribution up to a constant error $c_0$, with $c_0$  converging to $0$ with increasing $n_0$:
    \begin{equation}\label{eq:cdf:convergence}\forall n'\geq n_0: 
    \left|F_{U_{n'}} (z) - \Phi\left( \frac{z}{\sqrt{\sum_{i=n_0}^{\lceil n'\cdot k \rceil} Var(X_i)}}\right)\right| \leq c_0
    \end{equation}
    where $n_0$ is as given in Figure~\ref{alg:random_walk_transformed} and $F_{U_n'}$ denotes the cdf of $U_{n'}$.
\end{restatable}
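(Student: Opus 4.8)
The key observation is that each loop summation $U_{n'} = X_{n'} + \dots + X_{\lceil n'k\rceil}$ (and likewise $U_0 = y_0 + X_0 + \dots + X_{n_0}$) is a finite sum of \emph{independent} random variables, each taking exactly the two values $q_1[i]$, $q_2[i]$ with probabilities $p$, $1-p$. I would therefore establish (\ref{eq:cdf:convergence}) by invoking the Berry--Esseen theorem in its Lyapunov form for independent, non-identically distributed summands: writing $\mu_{n'} := \mathbb{E}(U_{n'})$ and $\sigma_{n'}^2 := Var(U_{n'})$, there is an absolute constant $C_0$ with
\[
\sup_{z\in\mathbb{R}}\Bigl|\,F_{U_{n'}}(z) - \Phi\bigl(\tfrac{z-\mu_{n'}}{\sigma_{n'}}\bigr)\Bigr| \;\le\; C_0\,\frac{\sum_{i=n'}^{\lceil n'k\rceil}\mathbb{E}\bigl|X_i - \mathbb{E}(X_i)\bigr|^3}{\sigma_{n'}^{3}}.
\]
The lemma then follows once I show that (a) this Berry--Esseen ratio tends to $0$, and (b) the error of replacing the recentred Gaussian $\Phi(\tfrac{z-\mu_{n'}}{\sigma_{n'}})$ by the centred one $\Phi(\tfrac{z}{\sigma_{n'}})$ appearing in (\ref{eq:cdf:convergence}) also tends to $0$, both uniformly in $z$ and in $n'\ge n_0$.

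\textbf{Moment estimates and degree bookkeeping.}
Setting $D_i := q_1[i]-q_2[i]$, a one-line computation using $X_i-\mathbb{E}(X_i)\in\{(1-p)D_i,\,-pD_i\}$ gives $Var(X_i) = p(1-p)\,D_i^2$ and $\mathbb{E}|X_i-\mathbb{E}(X_i)|^3 = p(1-p)\bigl(p^2+(1-p)^2\bigr)|D_i|^3$; in particular $\deg Var(X_i) = 2\deg(D)$, and the hypothesis forces $D\not\equiv 0$ (otherwise $Var(X_i)\equiv 0$ and the summation is degenerate). Summing over the window $i\in[n', n'k]$, whose length is $\Theta(n')$, then shows $\sigma_{n'}^2 = \Theta\bigl((n')^{\,2\deg(D)+1}\bigr)$, the cubic-moment sum is $\Theta\bigl((n')^{\,3\deg(D)+1}\bigr)$, and $|\mu_{n'}| = O\bigl((n')^{\,\deg(\mathbb{E}(X_i))+1}\bigr)$. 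Hence the Berry--Esseen ratio is of order $(n')^{\,(3\deg(D)+1)-\frac32(2\deg(D)+1)} = (n')^{-1/2}\to 0$; note this part uses only non-degeneracy, not the degree hypothesis.

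\textbf{Recentring, uniformity, and the obstacle.}
Since $\Phi$ is $\tfrac1{\sqrt{2\pi}}$-Lipschitz, $\bigl|\Phi(\tfrac{z-\mu_{n'}}{\sigma_{n'}})-\Phi(\tfrac{z}{\sigma_{n'}})\bigr|\le \tfrac1{\sqrt{2\pi}}\,\tfrac{|\mu_{n'}|}{\sigma_{n'}}$, and the degree count gives $\tfrac{|\mu_{n'}|}{\sigma_{n'}} = O\bigl((n')^{\,\deg(\mathbb{E}(X_i))+1-\deg(D)-\frac12}\bigr)$. The hypothesis $\deg Var(X_i) > 2\deg(\mathbb{E}(X_i))+1$ is equivalent to $\deg(D) > \deg(\mathbb{E}(X_i)) + \tfrac12$, which is exactly what makes this exponent negative; for $U_0$ the extra fixed shift $y_0$ is likewise $o(\sigma_0)$ and absorbed into this term. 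Combining the two estimates, $\sup_z|F_{U_{n'}}(z)-\Phi(z/\sigma_{n'})|$ is bounded by a sum of two quantities decaying like negative powers of $n'$, so I would finally set $c_0 := \sup_{n'\ge n_0}\bigl[\,C_0\,\tfrac{\sum_i \mathbb{E}|X_i-\mathbb{E}(X_i)|^3}{\sigma_{n'}^3} + \tfrac1{\sqrt{2\pi}}\tfrac{|\mu_{n'}|}{\sigma_{n'}}\,\bigr]$, which is finite once $n_0$ is large enough that $\sigma_{n'}>0$ on the whole range, and satisfies $c_0\to 0$ as $n_0\to\infty$. The main technical care needed is turning the $\Theta(\cdot)$ asymptotics into honest inequalities valid for \emph{all} $n'\ge n_0$ --- including the boundary corrections from $\lceil n'k\rceil$ and from $k$ possibly being non-integer; a minor secondary point is citing the Berry--Esseen bound in a form covering independent, bounded, non-identically distributed summands (which applies here since each $X_i$ is bounded, hence has finite third moment).
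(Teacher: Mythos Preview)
Your proposal is correct and follows essentially the same route as the paper: apply the non-i.i.d.\ Berry--Esseen theorem to the centered sum, do degree bookkeeping to show the Lyapunov ratio is $O((n')^{-1/2})$, and then use the Lipschitz property of $\Phi$ together with the degree hypothesis to control the recentring error $|\mu_{n'}|/\sigma_{n'}$, finally taking $c_0$ as the supremum of these two contributions over $n'\ge n_0$. Your write-up is in fact slightly more careful than the paper's (explicit moment formulas via $D_i=q_1[i]-q_2[i]$, the sharper Lipschitz constant $1/\sqrt{2\pi}$, and an explicit flag on the uniformity-in-$n'$ issue), but the structure and key ideas are identical.
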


Using Lemma~\ref{lemma:almost-normal}, we derive that the loop summations of polynomial random walks follow an almost Normal distribution, similarly to the stochastic process of Figure~\ref{alg:cap}  in Section~\ref{section:summation-almost-normal}. 

\begin{restatable}[Tail bound for $U_{n'}$]{lemma}{lemmaSummationTailBound}
    \label{lemma:tail-bound}
    Let $\sigma_{U_{n'}}$ be the standard deviation of $U_{n'}$ and assume  $\deg(Var(X_i)) > 2\deg(\mathbb{E}(X_i)) + 1$. Then,  the right tail probability is bounded with $\delta_1$ converging to $0$, as follows:
  \[  \begin{array}{lcll}
        \textstyle\mathbb{P}(U_{n'} \geq \lambda \sigma_{U_{n'}} + \delta_1\sigma_{U_{n'}}) &\leq&\exp\left(4(1-p)p\frac{-\lambda^2}{2}\right), & \text{and}\\\textstyle
                \mathbb{P}(U_{0} \geq \lambda \sigma_{U_{0}} +\delta_1\sigma_{U_{0}}) &\leq& \exp\left(4(1-p)p\frac{- \lambda^2}{2}\right).&
    \end{array}\]
\end{restatable}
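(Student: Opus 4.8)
## Proof Plan for Lemma~\ref{lemma:tail-bound}

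\textbf{Proof plan.}
The plan is to split $U_{n'}$ into its mean and a centered part, bound the centered part by a textbook sub-Gaussian (Hoeffding) argument, and then absorb the mean into the offset $\delta_1\sigma_{U_{n'}}$ using the degree hypothesis. Write $U_{n'} = \mu_{n'} + V_{n'}$, where $\mu_{n'} := \mathbb{E}(U_{n'}) = \sum_{i=n'}^{\lceil n'k\rceil}\mathbb{E}(X_i)$ and $V_{n'} := \sum_{i=n'}^{\lceil n'k\rceil}\bigl(X_i - \mathbb{E}(X_i)\bigr)$; since the increments $X_i$ are mutually independent two-point random variables, so are the centered summands $X_i - \mathbb{E}(X_i)$, each supported on an interval of length $|q_1[i]-q_2[i]|$. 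The same decomposition applies to $U_0 = y_0 + \sum_{i=0}^{n_0}X_i$, which is a constant plus a sum of finitely many independent centered two-point variables.

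First I would apply Hoeffding's lemma to each centered summand, giving $\mathbb{E}\bigl[e^{t(X_i-\mathbb{E}(X_i))}\bigr] \le \exp\!\bigl(t^2(q_1[i]-q_2[i])^2/8\bigr)$, and multiply over $i$ (using independence) to get $\mathbb{E}\bigl[e^{tV_{n'}}\bigr] \le \exp\!\bigl(t^2 W_{n'}/8\bigr)$ with $W_{n'} := \sum_{i=n'}^{\lceil n'k\rceil}(q_1[i]-q_2[i])^2$. Because the $X_i$ are independent with centered variance $p(1-p)(q_1[i]-q_2[i])^2$, we have $\sigma_{U_{n'}}^2 = p(1-p)\,W_{n'}$, hence $W_{n'}/8 = \sigma_{U_{n'}}^2/\bigl(8p(1-p)\bigr)$; a Chernoff bound (optimizing over $t$) then yields, for every $a\ge0$, $\mathbb{P}(V_{n'}\ge a)\le \exp\!\bigl(-2p(1-p)a^2/\sigma_{U_{n'}}^2\bigr)$, i.e.\ $\mathbb{P}(V_{n'}\ge\lambda\sigma_{U_{n'}})\le\exp\!\bigl(4(1-p)p\,(-\lambda^2/2)\bigr)$. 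The identical computation over the summands of $U_0$ gives the analogous bound for its centered part.

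Next I would absorb $\mu_{n'}$ into the offset. Since $\mathbb{P}\bigl(U_{n'}\ge(\lambda+\delta_1)\sigma_{U_{n'}}\bigr) = \mathbb{P}\bigl(V_{n'}\ge(\lambda+\delta_1)\sigma_{U_{n'}} - \mu_{n'}\bigr)$, the claimed inequality follows from the bound on $V_{n'}$ as soon as $\delta_1\sigma_{U_{n'}}\ge\mu_{n'}$, that is, $\delta_1\ge\mu_{n'}/\sigma_{U_{n'}}$ (and $\delta_1\ge0$; if $\mu_{n'}\le0$ there is nothing to do). So it suffices to show that $\sup_{n'\ge n_0}\mu_{n'}/\sigma_{U_{n'}}\to0$ as $n_0\to\infty$ and likewise $\mu_0/\sigma_{U_0}\to0$, and to take $\delta_1(n_0)$ to be the maximum of these two (nonnegative parts). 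This is exactly where the hypothesis $\deg(Var(X_i)) > 2\deg(\mathbb{E}(X_i)) + 1$ is used: writing $e:=\deg(\mathbb{E}(X_i))$ and $v:=\deg(Var(X_i))$, and noting $\deg\bigl((q_1-q_2)^2\bigr)=v$ as well (because $Var(X_i)=\mathbb{E}(X_i^2)= p(1-p)(q_1[i]-q_2[i])^2 + \mathbb{E}(X_i)^2$ and $v>2e$), elementary Faulhaber/Riemann-sum estimates over the index range $\{n',\dots,\lceil n'k\rceil\}$, which has length $\Theta(n')$ when $k>1$, give $|\mu_{n'}| = O(n'^{\,e+1})$ and $\sigma_{U_{n'}}^2 = \Theta(n'^{\,v+1})$, the lower bound using positivity of the leading coefficient of $(q_1-q_2)^2$ (assuming the walk is non-degenerate, $q_1\not\equiv q_2$). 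Hence $\mu_{n'}/\sigma_{U_{n'}} = O\bigl(n'^{\,e+1-(v+1)/2}\bigr)$ with $e+1-(v+1)/2<0$ by hypothesis, and the same degree count over $\{0,\dots,n_0\}$ gives $\mu_0/\sigma_{U_0} = O\bigl(n_0^{\,e+1-(v+1)/2}\bigr)\to0$.

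The routine ingredients — Hoeffding's lemma and the Chernoff optimization — are standard. The one genuine obstacle is the final step: making the convergence $\mu_{n'}/\sigma_{U_{n'}}\to0$ \emph{uniform} over all $n'\ge n_0$, not merely pointwise. This requires a two-sided polynomial estimate that holds uniformly in $n'$ — a non-degenerate lower bound on $\sigma_{U_{n'}}^2$ (where positivity of the leading coefficient of the variance polynomial and non-degeneracy of the walk are used) together with an upper bound on $|\mu_{n'}|$ — after which the strictly negative exponent forces $\sup_{n'\ge n_0}\mu_{n'}/\sigma_{U_{n'}}\le C\,n_0^{-\beta}\to0$ for a constant $\beta>0$. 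The value of $\delta_1=\delta_1(n_0)$ is then simply read off as this supremum (together with $\mu_0/\sigma_{U_0}$), which establishes both tail bounds with $\delta_1\to0$ as $n_0\to\infty$.
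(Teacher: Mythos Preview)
Your proposal is correct and follows essentially the same route as the paper: center the sum, apply Hoeffding's lemma termwise to bound the moment generating function, use the identity $\mathrm{Var}(X_i)=p(1-p)(q_1[i]-q_2[i])^2$ to rewrite the Hoeffding variance proxy as $\sigma_{U_{n'}}^2/(p(1-p))$, optimize via Chernoff, and finally absorb the mean shift into $\delta_1=\max\bigl\{\mu_{n'}/\sigma_{U_{n'}},\,(y_0+\mu_0)/\sigma_{U_0}\bigr\}$, whose vanishing follows from the degree hypothesis exactly as in the proof of Lemma~\ref{lemma:almost-normal}. Your explicit attention to the \emph{uniformity} of $\sup_{n'\ge n_0}\mu_{n'}/\sigma_{U_{n'}}\to0$ is a point the paper leaves implicit, but the argument is the same.
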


%
\begin{example}\label{ex:RSM:bounded}
   Consider our motivating example from Figure~\ref{alg:example-non-trivial-rsm}. 
   In order to ensure that its loop summations follow an almost Normal distribution, with $c_0$ and $\delta_1$ converging to zero, we need to ensure that $\deg(Var(X_i)) > 2\deg(\mathbb{E}(X_i)) + 1$. This inequality is true, since $Var(X_i) = (i+1)^2$ and $\mathbb{E}(X_i) = 4$, hence $\deg(Var(X_i)) = 2$ and $\deg(\mathbb{E}(X_i))=0$. Consequently, Lemma~\ref{lemma:almost-normal}~and~\ref{lemma:tail-bound} can be used.
    \qed
\end{example}

In the remaining, we define the random variable series $\{Z_n\}_{n\in\mathbb{N}}$   corresponding to the loop summation of the inner loop of Figure~\ref{alg:random_walk_transformed}. That is, 
$Z_n$ captures the program variable $z$ at the end of every iteration of the outer loop of Figure~\ref{alg:random_walk_transformed}, with $Z_0$ being the variable corresponding to $z$ after its first loop. As such, 
\begin{itemize}
    \item $Z_0=U_0$, and 
    \item $Z_n=U_{n'_{(n)}}$, where $n'_{(n)}$ is the value of $n'$ in the $n$-th iteration of the outer loop of Figure~\ref{alg:random_walk_transformed}. 
\end{itemize}
Further, $Y_n$ is induced by the program variable $y$ of Figure~\ref{alg:random_walk_transformed}, capturing the loop summation of $Z_n$ with repeated conditioning.
In order to use inductive bound sets as in Theorem~\ref{theorem:epsilon-exists-always}, the variance of $\{Z_n\}_{n\in\mathbb{N}}$ must grow consistently and exponentially. This is however clearly ensured by choosing $k>1$ in Figure~\ref{alg:random_walk_transformed}, implying the following result\footnote{see Appendix~\ref{appendix:proof-variance-growth} for proof}. 

\begin{restatable}[Growth of variance]{lemma}{lemmaVarianceGrowth}
    \label{lemma:variance-growth-polynomial-random-walk}
    The variance $\{\sigma^2_n\}_{n\in\mathbb{N}}$ of $\{Z_n\}_{n\in\mathbb{N}}$ grows exponentially, with $\delta'$ converging to $1$:
    \[\sigma_{n+1}^2 \geq \left(\delta'k^{2\deg(\mathcal{P})+1}-1)\right) \sum_{i=0}^n \sigma_i^2\]
\end{restatable}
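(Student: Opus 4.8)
The plan is to reduce the claim to elementary estimates for partial sums of the polynomial $Var(X_i)$ over the deterministic index windows that define the variables $Z_n$. Since each $X_i$ is an independent fresh sample and the inner loop of Figure~\ref{alg:random_walk_transformed} runs a number of times that depends only on $n'$ and $k$ (not on any random draw), $Z_n$ is an independent sum of $X_i$'s over a deterministic window, so by independence $\sigma_0^2 = Var(U_0) = \sum_{i=0}^{n_0} Var(X_i)$ and $\sigma_n^2 = Var(U_{n'_{(n)}}) = \sum_{i=n'_{(n)}}^{\lceil k\,n'_{(n)}\rceil} Var(X_i)$ for $n\geq 1$, with window endpoints $n'_{(1)} = n_0$ and $n'_{(n+1)} = \lceil k\, n'_{(n)}\rceil$. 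Writing $M_n$ for the loop counter after the $n$-th outer iteration (so $M_0 = n_0$ and $M_{n+1} = \lceil k M_n\rceil$, whence $k^n n_0 \leq M_n \leq k^n n_0(1 + O(1/n_0))$ uniformly in $n$), consecutive windows share an endpoint, so $\sum_{i=0}^{n}\sigma_i^2$ equals the single sum $\sum_{i=0}^{M_n} Var(X_i)$ and $\sigma_{n+1}^2$ equals $\sum_{i=M_n}^{M_{n+1}} Var(X_i)$, in each case up to lower-order boundary corrections.

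Next I would fix the degree. We have $Var(X_i) = p(1-p)(q_1[i]-q_2[i])^2$, a polynomial in $i$ with positive leading coefficient $\gamma$; a short inspection of the leading terms of $q_1, q_2$ shows that the hypothesis $\deg(Var(X_i)) > 2\deg(\mathbb{E}(X_i))+1$ (inherited from Lemmas~\ref{lemma:almost-normal} and~\ref{lemma:tail-bound}) forces $\deg(q_1-q_2) = \deg(\mathcal{P})$, hence $D := \deg(Var(X_i)) = 2\deg(\mathcal{P})$. Comparing an eventually monotone sum with its integral then gives, for $a\to\infty$ with $b/a$ bounded, $\sum_{i=a}^{b} Var(X_i) = \tfrac{\gamma}{D+1}\bigl(b^{D+1}-a^{D+1}\bigr)\bigl(1+O(1/a)\bigr)$, the relative error absorbing the lower-order monomials of $Var$, the sum-versus-integral gap, and the $\lceil\cdot\rceil$ rounding. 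Substituting the two sums from the previous paragraph and using $M_{n+1} = kM_n(1+O(1/M_n))$, I obtain $\sum_{i=0}^{n}\sigma_i^2 = \tfrac{\gamma}{D+1}M_n^{D+1}(1+o(1))$ and $\sigma_{n+1}^2 = \tfrac{\gamma}{D+1}M_n^{D+1}(k^{D+1}-1)(1+o(1))$, hence $\sigma_{n+1}^2 = (k^{D+1}-1)(1+\eta_n)\sum_{i=0}^{n}\sigma_i^2$. In the idealized continuous model this identity is exact with factor $k^{D+1}-1$; here including $\sigma_0^2$ (the pre-loop summation, starting from index $0$) is essential, since it is what makes the windows telescope to $[0,M_n]$ and yields this sharp factor.

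To conclude, I would observe that every error term above is evaluated at a scale $\geq M_n \geq M_0 = n_0$, so $\sup_{n\geq 0}|\eta_n| \leq \eta(n_0)$ for some $\eta(n_0)\to 0$ as $n_0\to\infty$. Setting $d := (k^{D+1}-1)(1-\eta(n_0))$, positive once $n_0$ is large, gives $\sigma_{n+1}^2 \geq d\sum_{i=0}^{n}\sigma_i^2$ for all $n$; writing $d = \delta' k^{D+1}-1$ with $\delta' := (d+1)/k^{D+1} = \bigl((k^{D+1}-1)(1-\eta(n_0))+1\bigr)/k^{D+1}$ shows $\delta'\to 1$ as $n_0\to\infty$, and since $D = 2\deg(\mathcal{P})$ this is precisely the asserted inequality.

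The mathematical content here is light; I expect the only real work to be bookkeeping — making the ``$1+o(1)$'' estimate uniform over all outer iterations $n$, so that one constant $\delta'$ works for the whole sequence (as required to feed Lemma~\ref{lemma_upper_bound_s_n_right_tail}), and discarding the $O(1)$-per-window off-by-one contributions coming from the inner-loop guard and from $\lceil k\,n'\rceil$ without inflating the error order. I do not anticipate any genuine obstruction beyond this.
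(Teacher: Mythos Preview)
Your proposal is correct and follows essentially the same route as the paper: both telescope the variance windows so that $\sum_{i\le n}\sigma_i^2$ and $\sigma_{n+1}^2$ become evaluations of a single cumulative-variance polynomial $q_{\mathrm{var}}$ at $M_n$ and $kM_n$, and then use that a polynomial is asymptotically its leading monomial to extract the factor $k^{2\deg(\mathcal{P})+1}$ with a multiplicative error $\delta'\to 1$ as $n_0\to\infty$. Your write-up is in fact more careful than the paper's appendix proof on two points it glosses over---justifying $\deg(Var(X_i))=2\deg(\mathcal{P})$ from the standing hypothesis of Lemmas~\ref{lemma:almost-normal}--\ref{lemma:tail-bound}, and making the $o(1)$ uniform in $n$ so that a single $\delta'$ works for the whole sequence.
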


\CommentedOut{
\begin{proof}
\lw{This proof is almost trivial (if follows from the fact, that asymptotically the leading term dominates the rest of the polynomial) - it could be shortened to 1-2 setences and the extended version again be put in the appendix}
    Let $q_\text{var}[n]$ be the polynomial describing the variance of $X_1+\dots + X_n$. Let $m=\deg(q_\text{var})$ denote its degree. Then $m=\deg(\mathcal{P})+1$ and $Var(Z_n) =  q_\text{var}[\lceil k\cdot n'_{(n)} \rceil] - q_\text{var}[n'_{(n)}]$. Since $Z_1 + \dots + Z_{n-1} = X_1 + \dots +X_{n'_{(n)}}$ and thus $Var(Z_1 + \dots + Z_{n-1}) = q_\text{var}[n'_{(n)}]$.

    But then the factor $d$, such that $Var(Z_n) \geq d (Var(Z_1) +\dots+ Var(Z_{n-1}))=d(Var(Z_1 +\dots+ Z_{n-1}))$ can be computed:
    \begin{align*}
        q_\text{var}[\lceil k\cdot n'_{(n)} \rceil] - q_\text{var}[n'_{(n)}] \geq d \cdot  q_\text{var}[n'_{(n)}]
    \end{align*}

    A polynomial can be bounded by its leading term with a multiplicative factor. Specifically, with $\delta_1\delta_2\in\mathbb{R^+}$:
    \[\forall n\geq n_0'(\delta_1,\delta_2) : (1-\delta_1)a_m n^m \leq a_1x + \dots + a_mn^m \leq (1+\delta_2) a_mn^m\]

    Inserting this in the above equation:
    \begin{align*}
        (1-\delta_1)a_m (\lceil k\cdot n'_{(n)} \rceil)^m &\geq (d+1)  (1+\delta_2) a_m n'^m_{(n)}\\
        \delta'k^m=\frac{(1-\delta_1)}{1+\delta_2}k^m &\geq (d+1)
        \tag*{\qed}
    \end{align*}
\end{proof}
}


Lemmas~\ref{lemma:almost-normal}~and~\ref{lemma:tail-bound}  establish that $Z_n$ follows an almost Normal distribution as in Lemma~\ref{definition:z_n}. Together with 
Lemma~\ref{lemma:variance-growth-polynomial-random-walk}, this ensures that the right tail of $Y_n$ can be bounded (Lemma~\ref{lemma_upper_bound_s_n_right_tail}), and therefore inductive bounds can be used. Based on this bounds, Section~\ref{subsection:survival_function} introduces conditions on the stopping time $T$ of Figure~\ref{alg:random_walk_transformed} being finite, implying thus PAST of 
Figure~\ref{alg:random_walk}. 

\subsection{Bounding the Stopping Time and PAST}\label{subsection:survival_function}
Recall that, using Definition~\ref{definition:stopping-time},  the expected stopping time $\mathbb{E}(T)$ of 
Figure~\ref{alg:random_walk_transformed} is determined by the loop summation variables $Y_n$ and is set to:
\[
\mathbb{E}(T)=\sum_{n=0}^\infty\mathbb{P}(T\geq n).\]
Using Lemma~\ref{lemma:variance-growth-polynomial-random-walk}, we obtain the following bound on $\mathbb{P}(T\geq n)$, and hence on $\mathbb{E}(T)$.

\begin{restatable}[Bounding the stopping time]{lemma}{lemmaBoundFromTerminationProb}
\label{lemma:bound-from-termination-probability}
    Assume that the outer loop of Figure~\ref{alg:random_walk_transformed} terminates with probability $p_\textnormal{term}$ after some $n_0$. Then, \\
    \[\mathbb{P}(T\geq n) \leq \min\left\{1,B n^{\frac{\ln(1-p_\textnormal{term})}{\ln(k+\frac{1}{n_0})}} \right\} \]
    where $B=\frac{1}{(1-p_\textnormal{term})^{\log_{k+\frac{1}{n_0}}(n_0)+2}}$.
    Therefore, if
     $\ln(1-p_\textnormal{term})<-\ln(k+\frac{1}{n_0})$ holds, then the expected stopping time $\mathbb{E}(T)$ is finite. 
\end{restatable}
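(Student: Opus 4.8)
The plan is to iterate the per-round termination probability of the outer loop of Figure~\ref{alg:random_walk_transformed} and then re-express the number of rounds in terms of the loop counter $n$. First I would set up the combinatorial bookkeeping: after the initial prefix of $n_0$ steps, the $j$-th iteration of the outer loop multiplies the accumulated step count by $k$ (up to the ceiling), so after $j$ outer-loop iterations the counter has reached roughly $n_0 (k+\tfrac{1}{n_0})^{j}$ — the additive $\tfrac{1}{n_0}$ absorbs the rounding in $\lceil n' k\rceil$. Inverting this, reaching counter value $n$ requires about $j(n) = \log_{k+1/n_0}(n/n_0)$ outer-loop iterations. By the assumption that each outer-loop iteration fails to terminate with probability at most $1-p_\text{term}$ (this is exactly where Theorem~\ref{theorem:epsilon-exists-always} is invoked to guarantee a nonzero $p_\text{term}$, and where Lemmas~\ref{lemma:almost-normal},~\ref{lemma:tail-bound},~\ref{lemma:variance-growth-polynomial-random-walk} justify that the inductive-bound machinery applies), and since the outer loop being active at counter value $n$ means it survived $\lfloor j(n)\rfloor$ or so independent-enough conditioning rounds, we get $\mathbb{P}(T\geq n) \leq (1-p_\text{term})^{j(n)-O(1)}$.

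The next step is purely algebraic: rewrite $(1-p_\text{term})^{\log_{k+1/n_0}(n/n_0)}$ as a power of $n$. Using $a^{\log_c x} = x^{\log_c a} = x^{\ln a/\ln c}$, this becomes $ (n/n_0)^{\ln(1-p_\text{term})/\ln(k+1/n_0)}$, and pulling out the $n_0$ factor together with the additive $O(1)$ slack in the exponent of $(1-p_\text{term})$ gives the stated constant $B = (1-p_\text{term})^{-(\log_{k+1/n_0}(n_0)+2)}$. Since $\ln(1-p_\text{term})<0$, the exponent $\frac{\ln(1-p_\text{term})}{\ln(k+1/n_0)}$ is negative, so $B n^{\ln(1-p_\text{term})/\ln(k+1/n_0)}$ is a decreasing power of $n$; intersecting with the trivial bound $1$ yields the $\min\{1,\cdot\}$ form. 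Finally, $\mathbb{E}(T) = \sum_n \mathbb{P}(T\geq n) \leq n^* + B\sum_{n> n^*} n^{\ln(1-p_\text{term})/\ln(k+1/n_0)}$ for the crossover point $n^*$, and this tail sum converges precisely when the exponent is strictly below $-1$, i.e.\ when $\ln(1-p_\text{term}) < -\ln(k+\tfrac{1}{n_0})$, giving finiteness of $\mathbb{E}(T)$.

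I expect the main obstacle to be the first step: making rigorous the claim that the event $\{T\geq n\}$ is genuinely dominated by a product of per-round survival probabilities. The conditioning in Figure~\ref{alg:random_walk_transformed} is not literally a sequence of independent Bernoulli trials — each outer-loop round conditions on the previous one having failed to exit, and $p_\text{term}$ is only a uniform \emph{lower} bound on the exit probability valid across all rounds by virtue of the inductive bound set of Theorem~\ref{theorem:epsilon-exists-always}. So the argument must be: the inductive bound $B(Y_n)$ holds for every $n$, and it guarantees $\mathbb{P}(Y_n \leq g) \geq p_\text{term}$ in each round regardless of history (because the bound is stated on the conditioned variable and is preserved inductively); hence the survival probabilities multiply, $\mathbb{P}(T \geq n_0 \cdot (\text{round }r)) \leq (1-p_\text{term})^{r}$, and the remaining work is the counter-to-rounds change of variables and the ceiling bookkeeping, which is routine. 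The convergence criterion for the $p$-series at the end is standard and needs only the observation that the exponent crosses $-1$ exactly at $\ln(1-p_\text{term}) = -\ln(k+1/n_0)$.
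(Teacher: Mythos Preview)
Your proposal is correct and follows essentially the same approach as the paper: count the number of outer-loop rounds completed by the time the inner counter reaches $n$ (roughly $\log_{k+1/n_0}(n)-\log_{k+1/n_0}(n_0)$, with the $1/n_0$ slack absorbing the ceiling), multiply the per-round survival probabilities $(1-p_\text{term})$, and rewrite the result as $B n^{\ln(1-p_\text{term})/\ln(k+1/n_0)}$. Your treatment is in fact more detailed than the paper's very terse proof---in particular you spell out the $p$-series criterion for finiteness of $\mathbb{E}(T)$---though note that your appeals to Theorem~\ref{theorem:epsilon-exists-always} and Lemmas~\ref{lemma:almost-normal}--\ref{lemma:variance-growth-polynomial-random-walk} are not needed here, since the existence of a uniform $p_\text{term}$ is already the hypothesis of the lemma.
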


\paragraph{\bf On the finiteness of stopping times.} 
Lemma~\ref{lemma:bound-from-termination-probability} formulates  conditions under which Figure~\ref{alg:random_walk_transformed}
has finite stopping time. These conditions effectively only depend on the probability $p_\text{term}$ and $k$, as $n_0$ can be chosen arbitrarily. As such, \emph{finiteness of $\mathbb{E}(T)$ and PAST of Figure~\ref{alg:random_walk_transformed} is reduced to finding an inductive bound}, with $d=\delta'k^{2\deg(\mathcal{P})+1}-1$, $C=p(1-p)$ and $\epsilon$ (which is a lower bound for $p_\text{term}$) so large, that the inequality in Lemma~\ref{lemma:bound-from-termination-probability} is satisfies. The terms $\delta_1, \delta'$ and $c_0$ can be computed from a finite, arbitrary $n_0$.

To this end, 
let  $p_\text{i.b.}(d, n_0, p)$  be a  to-be-determined function that returns the largest $\epsilon$ for which an inductive bound exists, which is a lower bound for $p_\text{term}$. Then, 
\begin{equation}
\label{equation:optimization-problem-random-walk}
    \mathbb{P}(T\geq n) \leq \inf_{1<k}\left\{ \inf_{0\leq n_0}\left\{\min\left\{Bn^{\frac{\ln(1-p_\text{i.b.}(\delta'k^{2\deg\{\mathcal{P}+1\}}-1, n_0, p))}{\ln(k+\frac{1}{n_0})}}  \right\}\right\}\right\}
\end{equation}
with $B=(1-p_\text{i.b.}(\delta'k^{2\deg\{\mathcal{P}+1\}}-1, n_0, p))^{-(\log_{k+\frac{1}{n_0}}(n_0)+2)}$.
Enforcing~\eqref{equation:optimization-problem-random-walk} requires however \emph{solving a non-trivial optimization problem}: we need to  approximate the function $p_\text{i.b.}(d,n_0,p)$. While in Section~\ref{section:implementation} we show that this approximation can be done using linear programming and a genetic algorithm, the statement of~\eqref{equation:optimization-problem-random-walk} has theoretical consequences. The existence of an inductive bound set from Theorem~\ref{theorem:epsilon-exists-always} implies that 
an $\epsilon$ for $p_\text{i.b.}(d,n_0, p)$ always exist, allowing us to state a PAST condition over  polynomial random walks $\mathcal{P}$ from Figure~\ref{alg:random_walk}. 

\begin{restatable}[PAST of polynomial random walks]{theorem}{theoremPolynomialRandomWalkThreshold}
    \label{theorem:polynomial-random-walk-threshold}
    Let $\mathcal{P}$ be a polynomial random walk program of Figure~\ref{alg:random_walk}. 
    For every  probabilistic choice $p$ in $\mathcal{P}$ there exists a threshold $d_\textnormal{min}(p)$ such that $\mathcal{P}$ has finite expected stopping time, when $\deg(\mathcal{P}) >d_\textnormal{min}(p)$ and $\deg(\mathcal{P})>\deg(p(q_1[n])+(1-p)q_2[n])$.
\end{restatable}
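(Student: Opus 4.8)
The plan is to chain together the lemmas of Section~\ref{section:polynomial-random-walks}, reducing the statement to the existence result of Theorem~\ref{theorem:epsilon-exists-always}. Fix a polynomial random walk $\mathcal{P}$ of Figure~\ref{alg:random_walk} with $\deg(\mathcal{P}) > \deg(p\,q_1[n]+(1-p)q_2[n])$. The first step is to observe that this drift hypothesis is exactly what powers Lemmas~\ref{lemma:almost-normal} and~\ref{lemma:tail-bound}: cancellation of the leading terms in the expectation gives $\deg(\mathbb{E}(X_n)) \le \deg(\mathcal{P})-1$, while $Var(X_n)=p\,q_1[n]^2+(1-p)q_2[n]^2$ has a strictly positive leading coefficient, so $\deg(Var(X_n))=2\deg(\mathcal{P}) \ge 2\deg(\mathbb{E}(X_n))+2 > 2\deg(\mathbb{E}(X_n))+1$, which is the degree inequality assumed throughout Section~\ref{subsection:summation-polynomial-random-walk-increments}. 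Hence, for any transformation parameters $n_0$, $k>1$, $g\le 0$, the inner-loop summations $Z_n$ of Figure~\ref{alg:random_walk_transformed} follow an almost Normal distribution $\mathcal{N}_{c_0}^{\delta_1,4p(1-p)}$ with $c_0,\delta_1\to 0$ as $n_0\to\infty$ (Lemmas~\ref{lemma:almost-normal},~\ref{lemma:tail-bound}), and by Lemma~\ref{lemma:variance-growth-polynomial-random-walk} their variances grow exponentially with ratio $d=\delta' k^{2\deg(\mathcal{P})+1}-1$, where $\delta'\to 1$ as $n_0\to\infty$. This is precisely the stochastic process of Figure~\ref{alg:cap}, so Lemma~\ref{lemma_upper_bound_s_n_right_tail} applies to $Y_n$.

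The second step is to invoke Theorem~\ref{theorem:epsilon-exists-always}: for the value of $d$ just obtained it yields some $\epsilon>0$ admitting an inductive bound set with $F_{Y_n}^{-1}(\epsilon)\le 0$, i.e.\ the outer loop of Figure~\ref{alg:random_walk_transformed} terminates with probability at least $p_\text{term}\ge\epsilon$. Lemma~\ref{lemma:bound-from-termination-probability} (applied to the transformed program) then converts this into the polynomial tail bound $\mathbb{P}(T'\ge n)\le\min\{1,\,B n^{\ln(1-p_\text{term})/\ln(k+1/n_0)}\}$, so $\mathbb{E}(T')=\sum_n\mathbb{P}(T'\ge n)<\infty$ as soon as $\tfrac{1}{1-p_\text{term}} > k+\tfrac1{n_0}$. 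Finally, Lemma~\ref{lemma:stopping-time-inequality} gives $\mathbb{E}(T)\le\mathbb{E}(T')$, so finiteness of $\mathbb{E}(T')$ yields finiteness of the expected stopping time of $\mathcal{P}$, i.e.\ PAST.

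The main obstacle---and the only nontrivial part---is to produce one simultaneous choice of $k>1$ and $n_0$ for which the \emph{exponential-growth requirement} and the \emph{convergence requirement} $k+\tfrac1{n_0}<\tfrac{1}{1-p_\text{term}}$ are compatible; these pull in opposite directions, since growth wants $k$ well above $1$ and convergence wants $k$ near $1$ (because $p_\text{term}$ cannot be pushed arbitrarily close to $1$). The resolution is that $\deg(\mathcal{P})$ acts as a lever: I would fix once and for all a constant $d_0>0$, let $\epsilon_0=\epsilon_0(d_0,p)>0$ be an $\epsilon$ furnished by Theorem~\ref{theorem:epsilon-exists-always} for ratio $d_0$ (noting that the inductive-bound machinery depends on $p$ only through $C_1=4p(1-p)$, so $\epsilon_0$ is independent of $\deg(\mathcal{P})$), choose $n_0$ large enough that $\delta'\ge\tfrac12$, that $c_0,\delta_1$ are small enough for Lemmas~\ref{lemma:almost-normal},~\ref{lemma:tail-bound}, and that $\tfrac{1}{1-\epsilon_0}-\tfrac1{n_0}>1$, and finally set $k=\bigl(2(d_0+1)\bigr)^{1/(2\deg(\mathcal{P})+1)}$. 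This choice forces $k>1$ and $d=\delta' k^{2\deg(\mathcal{P})+1}-1\ge d_0$ (so $p_\text{term}\ge\epsilon_0$, using a mild monotonicity of the inductive bound in the growth ratio), while $k\to 1$ as $\deg(\mathcal{P})\to\infty$; hence for all sufficiently large $\deg(\mathcal{P})$ we get $k+\tfrac1{n_0}<\tfrac{1}{1-\epsilon_0}\le\tfrac{1}{1-p_\text{term}}$, the convergence requirement. Defining $d_\text{min}(p)$ to be the least degree above which this holds gives the claimed $p$-parametrized threshold, and the chain above then concludes $\mathbb{E}(T)\le\mathbb{E}(T')<\infty$. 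I would be careful to check that $\epsilon_0$ genuinely does not depend on $\deg(\mathcal{P})$, since the whole limiting argument rests on it, and that $d_\text{min}(p)\ge 1$, which is forced automatically because for $\deg(\mathcal{P})=0$ no exponent lever is available---consistent with the classical constant symmetric random walk being AST but not PAST.
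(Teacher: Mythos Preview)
Your proof is correct and follows the same route as the paper's: derive the variance--mean degree inequality from the drift hypothesis, feed it through Lemmas~\ref{lemma:almost-normal}--\ref{lemma:variance-growth-polynomial-random-walk} to identify the transformed walk with Figure~\ref{alg:cap}, invoke Theorem~\ref{theorem:epsilon-exists-always} for a nonzero $p_{\text{term}}$, and then use Lemma~\ref{lemma:bound-from-termination-probability} together with the observation that for a fixed growth ratio $d$ the required $k=((d+1)/\delta')^{1/(2\deg(\mathcal{P})+1)}$ tends to $1$ as $\deg(\mathcal{P})\to\infty$. You are in fact more explicit than the paper's own (very brief) proof about this ``lever'' step and about closing with Lemma~\ref{lemma:stopping-time-inequality}; the mild monotonicity in $d$ that you flag is not isolated as a lemma anywhere, but it is straightforward, and the paper sidesteps it anyway by fixing $d$ exactly rather than taking $d\ge d_0$.
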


\begin{example}[PAST of Figure~\ref{alg:example-non-trivial-rsm}]
    \label{example:past-of-example-program}
   There exists an inductive bound with $\epsilon\leq0.1128$ and  $d\geq0.4102$ for $p=0.5$. These constants are chosen so that convergence is guaranteed; see  Appendix~\ref{appendix:example-inductive-bound-for-linear} for details. As $d=\delta'k^{2\deg(\mathcal{P})+1}-1$ and $\delta'$ converges to $1$ (Lemma~\ref{lemma:variance-growth-polynomial-random-walk}), $k\geq 1.1214$ ensures that $d$ is large enough when the degree of a polynomial random walk program is at least $1$ (that is, at least linear updates). 

    Through this bound and Lemma~\ref{lemma:bound-from-termination-probability}, the stopping time of a polynomial random walk program $\mathcal{P}$ with $\deg(\mathcal{P})\geq 1$ is bounded:  $\mathbb{E}(T)\leq \sum_{n=0}^{\infty} Bn^{\frac{\ln(0.8872)}{\ln{1.1214+\tau}}}$. This stopping time bound has an exponent which is smaller than $-1.04$;  therefore, the loop summation of the respective  Figure~\ref{alg:random_walk_transformed}
    is finite and  $d_\text{min}(0.5)\leq 1$.

    Using Theorem~\ref{theorem:polynomial-random-walk-threshold} we conclude that  polynomial random walks with linearly (or faster) increasing step size and branching probability $0.5$ have finite expected stopping time and are PAST, given that $\deg(\mathcal{P})>\deg(p(q_1[n])+(1-p)q_2[n])$. In particular, this is true for Figure~\ref{alg:example-non-trivial-rsm}, as shown in Example~\ref{ex:RSM:bounded}, hence it is PAST. \qed
    
\end{example}

\paragraph{\bf Higher moments of the stopping time}\label{subsection:higher-moments} 
We conclude this section by noting that solving~\eqref{equation:optimization-problem-random-walk} and applying Theorem~\ref{theorem:polynomial-random-walk-threshold} allows us to derive not only PAST, but also higher moments of the stopping times of polynomial random walks $\mathcal{P}$. That is, the bound we compute for $\mathbb{P}(T\geq n)$ by solving~\eqref{equation:optimization-problem-random-walk} is of the form $Bn^m$ and this bound can be used to bound higher moments $N$ of the stopping time. In particular, $\mathbb{E}(T^N) =\sum_{n=0}^\infty \mathbb{P}(T^N \geq n) = \sum_{n=0}^\infty \mathbb{P}(T \geq \sqrt[\leftroot{-2}\uproot{2}N]{n}) \leq \sum_{n=0}^\infty Bn^\frac{m}{N}$. Therefore, when~\eqref{equation:optimization-problem-random-walk} is solved using a bound with $m<-N$, then $\mathbb{E}(T^N)$ is finite.

\section{Implementation and Experiments}
\label{section:implementation}
Theorem~\ref{theorem:polynomial-random-walk-threshold} states sufficient conditions under which the polynomial random walk programs $\mathcal{P}$ of Figure~\ref{alg:random_walk} are PAST. 
These sufficient conditions can be checked by solving inequalities among random walk updates and, importantly, by  finding solutions to  the optimization problem of~\eqref{equation:optimization-problem-random-walk}.  
In this section,  we detail our implementation to find tight bounds
to~\eqref{equation:optimization-problem-random-walk}, allowing us to conclude PAST of $\mathcal{P}$.
Our implementation involves heuristic optimization techniques to find provably correct solutions. Our experiments provide practical evidence on the tightness of computed stopping times bounds and give evidence of the reliability of our approach\footnote{see Appendix~\ref{appendix:performance-genetic-algorithm}}, despite the absence of convergence guarantees. 

\subsection{Computing Tight Bounds on Stopping Times}

We  solve~\eqref{equation:optimization-problem-random-walk} in extension of the \verb|Polar| program analyzer~\cite{DBLP:journals/pacmpl/MoosbruggerSBK22}. We use \verb|Polar| to compute closed form expressions for the loop-guard changes of  probabilistic branches, allowing us to support programs $\mathcal{P}$ that are even more general than 
Figure~\ref{alg:random_walk}.
We combine \verb|Polar| with linear programming through~\verb|OR-Tools|~\cite{ortools} and derive inductive bounds for fixed program transformation parameters. To find the best values for these parameters, we rely on 
genetic algorithms, such that the fitness functions of these genetic algorithms are  controlled by our linear solver. Doing so, we use the  \verb|Gurobi|-solver~\cite{gurobi} to solve  linear models.  
By integrating algebraic reasoning, linear programming and genetic algorithms, our implementation in \verb|Polar| minimizes the exponent in the bound of $\mathbb{P}(T\geq n)$ in~\eqref{equation:optimization-problem-random-walk}, which is sufficient to prove PAST and finiteness of further higher moments of $\mathcal{P}$ (Theorem~\ref{theorem:polynomial-random-walk-threshold}). 
By changing the objective function,
 our implementation can also 
minimize an \emph{explicit} bound for the expected stopping time $\mathbb{E}(T)$.


\paragraph{\bf Inferring inductive bound sets.}
\label{subsection:computing-inductive-bound-set}
To compute an inductive bound set $B$ in \eqref{eq:inductiveBoundSet}, the parameters $\epsilon, d,c_0, C_1, \delta_1$ must be fixed. Additionally, we require vectors $\vec{a}_B$ and $\vec{c}_B$, specifying respectively which $m$ lower bounds for the inductive bound-set are computed and which $k$ tail bounds are used. 
We compute values for the bounds by solving the linear inequality: 
{\small 
\begin{equation}\label{equation:inductiv-bound-inductivity-inequality}
    \begin{array}{lcl}
    \vec{b}_{i} & \geq &  \sum_{j=1}^{m} \vec{d}_j \left(\Phi\left(\frac{\vec{a}_i\sqrt{1+d}-\vec{a}_j}{\sqrt{d}}\right)-c_0\right) + \sum_{j=1}^{k} \vec{d}_{m+j}\left(\Phi\left(\frac{\vec{a}_i\sqrt{1+d}-\vec{c}_j}{\sqrt{d}}\right)-c_0\right)~~\hspace*{-1em}
\end{array}
\end{equation}
}
\noindent for $i=1, \ldots, m$.  
The vector $\vec{d}$ denotes auxiliary variables, which describe the difference of neighbouring bounds. Additionally, we enforce\footnote{see  Appendix~\ref{appendix:details-inductive-bound-model} for details} that the initial, almost Normal, distribution of $Z_0$ satisfies the bound set $B$. 

Our implementation invokes linear programming over the linear model~\eqref{equation:inductiv-bound-inductivity-inequality} in the form of an indicator function. This function returns $1$, when an inductive bound set $B$ is found for the given parameters $\epsilon, d,c_0, C_1, \delta_1$; and $0$ otherwise. 


\paragraph{\bf Genetic algorithm.} 
We use a genetic algorithm to solve the optimization problem~\eqref{equation:optimization-problem-random-walk} and find the best parameter values in~\eqref{equation:inductiv-bound-inductivity-inequality}, for which an inductive bound set $B$ exists. 
Our genetic algorithm repeatedly modifies a collection of individual solutions: we select individuals from the current set of solutions and use them to produce next individuals/solutions. 
An individual has (i) the properties $d$, $\epsilon$,  and $n_0$ to capture the program transformation of Figure~\ref{alg:random_walk_transformed} and (ii) the parameters $g$, $s$, $c$ to specify the vectors $\vec{a}$ and $\vec{c}$ of~\eqref{equation:inductiv-bound-inductivity-inequality} for the inductive bound $B$. Specifically, we set $\vec{a}_1 =0,\vec{a}_2=\frac{s}{g-1},\dots,\vec{a}_g=s$, and $\vec{c}=\begin{pmatrix}c\end{pmatrix}$.

The fitness of an individual is calculated by first calculating the exponent of the bound. In case an explicit bound should be computed, the error-terms $c_0$ and $\delta_1$ are inferred from $n_0$. Otherwise, we choose very small values, such as  $c_0 = \delta_1=10^{-8}$, $\delta'=1+10^{-8}$. Next, we solve  our linear model~\eqref{equation:inductiv-bound-inductivity-inequality}. If no solution is found, we set $m=0$;  otherwise, we take  $m=\frac{\ln(1-\epsilon)}{\ln(k+\frac{1}{n_0})}$ with $k = \left(\frac{(d+1)}{\delta'}\right)^{\frac{1}{2\deg(\mathcal{P})+1}}$. If  $m<-1$ and an explicit bound is sought, we compute the summation $\mathbb{E}(T)=\sum_{n=1}^{\infty}\mathbb{P}(T\geq n)$ using the Hurwitz $\zeta$-function~\footnote{\href{https://dlmf.nist.gov/25.11.i}{(25.11)}{NIST:DLMF}}. The fitness of an individual is further expressed via the tuple $(\mathbb{E}(T),m)$, which is 
minimized/optimized with respect to the usual lexicographical ordering.

From a given solution set (generation), a new solution set (population) is generated using random mutations of the parameters. The property $d$ is biased to decrease, while $\epsilon$ is biased to increase. Additionally, $n_0$ is biased to increase when the exponent is not smaller than $-1$, and biased to decrease otherwise. Furthermore, new individuals are generated by randomly selecting the properties of two parent individuals.

\subsection{Experimental Results}
\label{subsection:experiments}

We evaluated our approach for computing stopping time bounds, and hence, inferring PAST,  using  various polynomial random walk programs $\mathcal{P}$. To this end, we took instances of Figure~\ref{alg:random_walk} with different random walk degrees $\deg{P}$ and various values of the probabilistic choice $p$. 
The PAST analysis of such programs is out of reach of existing tools (see Section~\ref{section:related}), notably
\verb|Amber|~\cite{moosbrugger2021probabilistic},
\verb|eco-imp|~\cite{ecoimp},
\verb|KoAT|~\cite{koat},  \verb|LexRsm|~\cite{agrawal2017lexicographicrankingsupermartingalesefficient}, and  \verb|LazyLexRsm|~\cite{takisaka-2024-lex-rsm-lower-bound}.

Table~\ref{table:empirical-survival-rate-exponent} summarizes our experiments, with the third line of  Table~\ref{table:empirical-survival-rate-exponent} being our motivating example from Figure~\ref{alg:example-non-trivial-rsm}.  Column~3 of Table~\ref{table:empirical-survival-rate-exponent} reports the empirical exponents of $\mathbb{P}(T\leq n)$, further detailed in Figure~\ref{fig:empirical-degree}. Column~4 of Table~\ref{table:empirical-survival-rate-exponent} states the smallest (tightest) exponent obtained through our approach using inductive bounds. Our experiments were run on a machine with 2x AMD EPYC 7502 32-Core processor with one task per core and hyperthreading disabled.

\begin{table}[t]
\caption{Derived bounds on stopping times for polynomial random walk programs $\mathcal{P}$, with increasing maximal degree $\deg{P}$ and different probabilistic choices $p$. The program in the 3rd line of the table corresponds to Figure~\ref{alg:example-non-trivial-rsm}. }\label{table:empirical-survival-rate-exponent}
\centering
\begin{tabular}{|l|l|l|l|}
\hline
$\deg(\mathcal{P})$ & $p$&  measured exponent & tightest bound \\
\hline
$0.25$ & $0.5$& $-0.744$ & $-0.5589$ \\
$0.5$&$0.5$&$-0.997$ & $-0.7436$	\\
$1$&$0.5$& $-1.508$ & $-1.1189$ \\
$2$&$0.5$&$-2.442$ & $-1.8639$ \\
$5$&$0.5$&$-4.588$ & $-4.0843$ \\

$3$ & $0.5$ & $-3.448$ & $-2.5971$\\
$3$ & $0.9$ & $-3.334$ & $-2.4453$\\
$3$ & $0.1$ & $-3.57$ & $-2.4453$\\

$3$ & $0.99$ & $-3.152$ & $-1.9321$\\
$3$ & $0.01$ & $-3.516$ & $-1.9321$\\

$3$ & $0.999$ & $-3.144$ & $-1.359$\\
$3$ & $0.001$ & $-3.562$ & $-1.359$\\
\hline
\end{tabular}
\end{table}

\paragraph{\bf Experimental analysis.} Figure~\ref{fig:empirical-degree} displays empirically measured  rates of $\mathbb{P}(T\geq n)$  for symmetric random walks with varying degree. These probabilities  appear to converge towards a line in the log-log plot, which suggests, that $\mathbb{P}(T\geq n)$ eventually is of form $Bn^m$, coinciding with the form of our bound. The observed exponent of this probability is the slope of the robust log-log regression lines~\cite{rlm}, displayed as dashed lines and displayed in Column~3 of Table~\ref{table:empirical-survival-rate-exponent}.


\begin{figure}
\subfloat[with different degrees\label{fig:empirical-degree}]{%
  \includegraphics[width=0.5\textwidth]{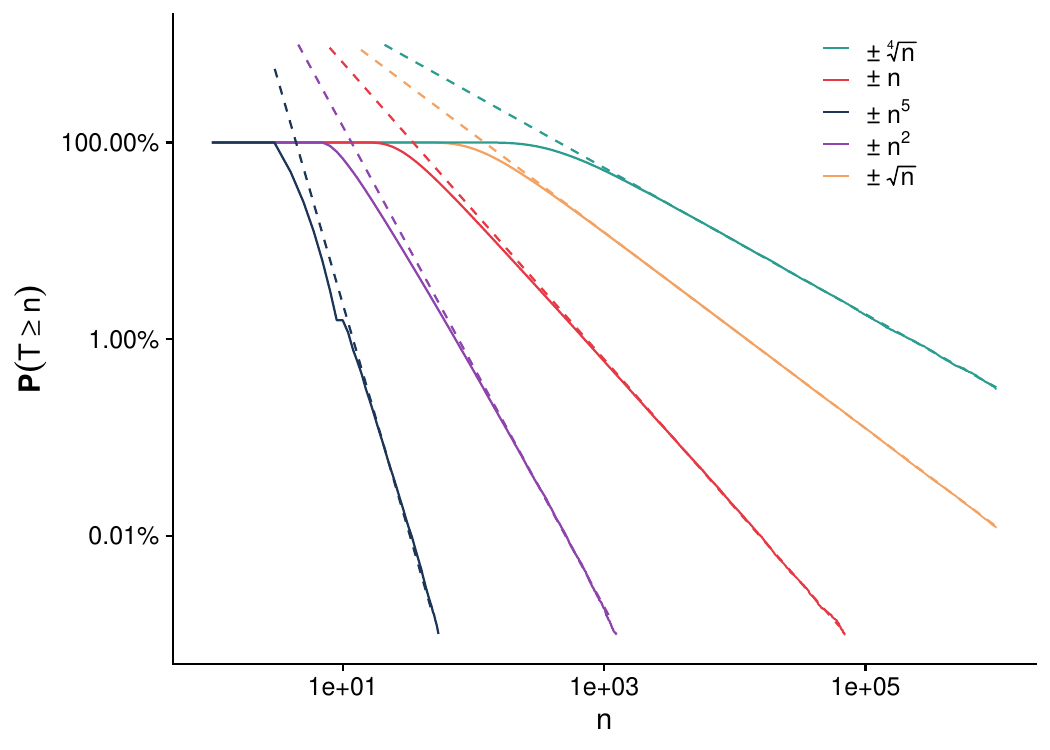}%
}\hfil
\subfloat[with different percentages\label{fig:empirical-different-percentages}]{%
  \includegraphics[width=0.5\textwidth]{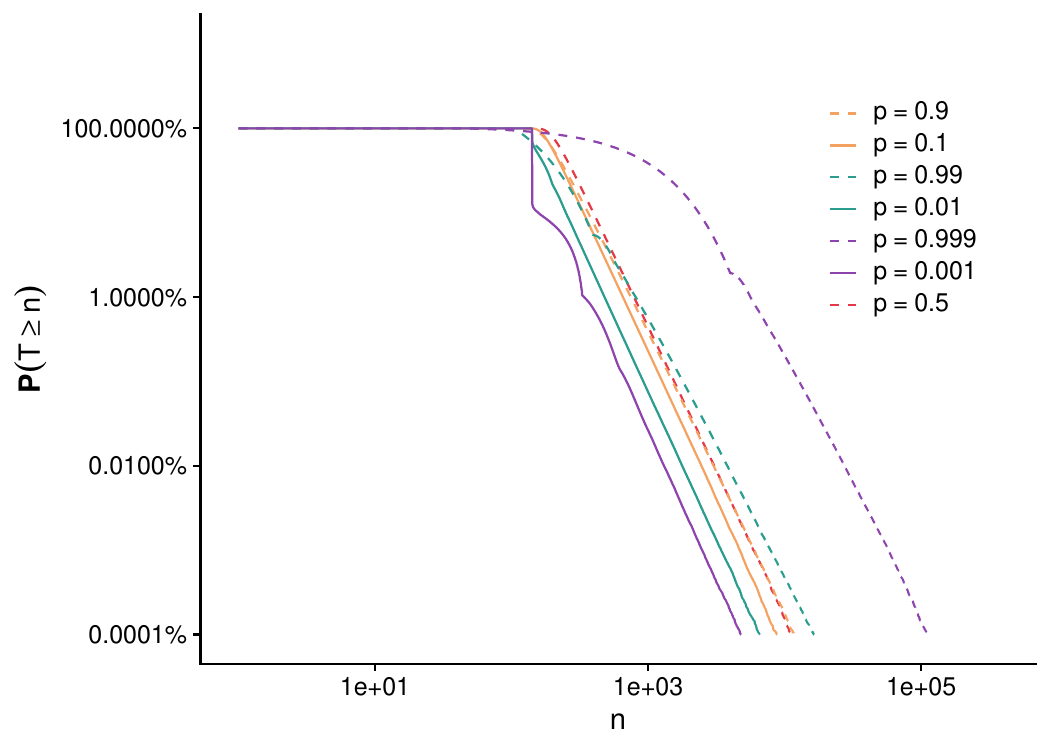}%
}\hfil

\caption{Empirical results on stopping times of polynomial random walks}
\label{fig:empirical-survival-rates}
\end{figure}


In Figure~\ref{fig:empirical-different-percentages} we display the stopping times $\mathbb{P}(T\geq n)$ of zero-mean polynomial random walks with different values of $p$ and degree $3$.
The approximated values of the exponent, as well as the tightest bound found by our method can be found in Table~\ref{table:empirical-survival-rate-exponent}. The increasing unsharpness for small (or large) values of $p$ stem from the use of Hoeffding's lemma in the proof of Lemma~\ref{lemma:tail-bound}. While for individual bounds of centered $X_i$ this bound is sharp, for the product this no longer is the case and the plot suggests, that a tighter bound might be found.

\subsubsection{Explicit bound analysis.}
\begin{wrapstuff}[r, width=0.5\textwidth]
{
\setlength{\intextsep}{0pt}
\begin{figure}[H]
\centering
  \includegraphics[width=0.96\textwidth]{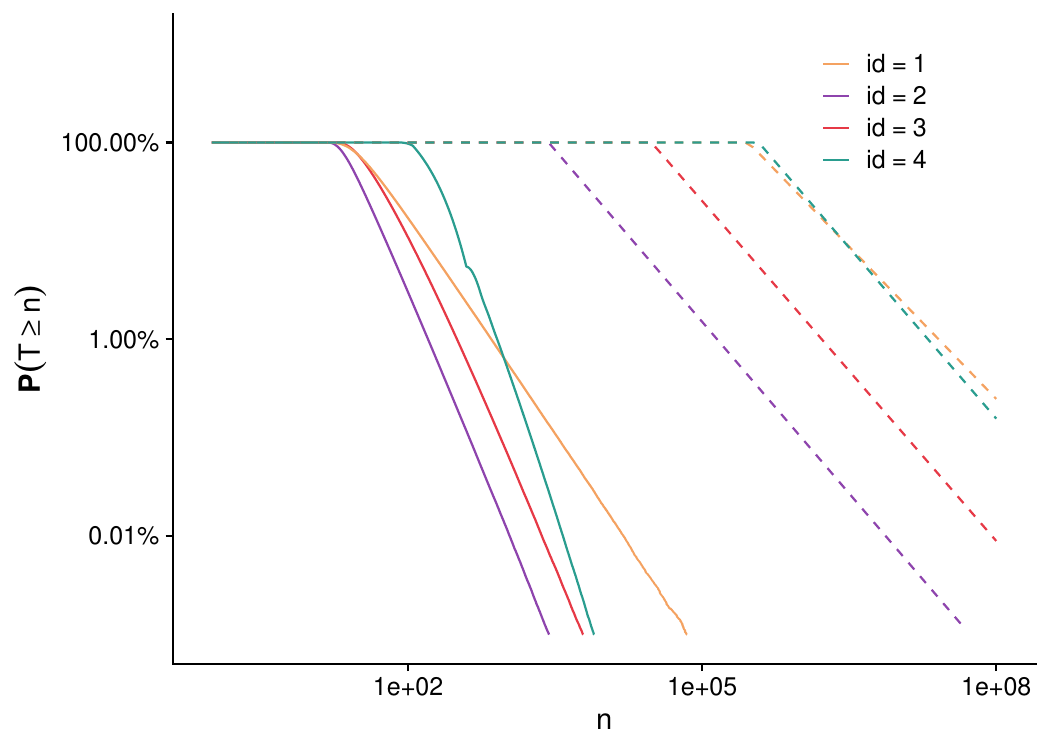}
  \caption{Examples of explicit bounds}
  \label{fig:explicit-bounds}
\end{figure}
}
\end{wrapstuff}
Our  genetic algorithms can be used to  compute  explicit bounds on the running times. Figure~\ref{fig:explicit-bounds} shows the tightest explicit bound found for some random walk programs. 
The explicit bound is off by several orders of magnitude, as listed in Table~\ref{table:explicit-bounds}.
One of the main reasons for the explicit bound being unsharp stems from the fact that we choose a specific $n_0$ in~\eqref{equation:optimization-problem-random-walk}, from which the parameters for the inductive bounds $B$ are computed. This could be improved by computing a bound with multiple segments, and therefore inferring multiple exponents that decrease with growing $n$.




\begin{table}[t]
\caption{Explicit bounds and empirical means of stopping time, with various polynomial updates $q_1,q_2$ and probabilistic choice $p$ of polynomial random walks from Figure~\ref{alg:random_walk}  }\label{table:explicit-bounds}
\centering
\begin{tabular}{|l|l|l|l|l|l|l|}
\hline
id&$q_1$ & $p$& $q_2$& $y_0$ & empiric $\mathbb{E}(T)$ & explicit bound \\
\hline
$1$&$n$ & $0.5$& $-n$ & $100$ &$89.62$& $9562887$\\
$2$&$n^2$&$0.5$&$-n^2$ & $100$ & $36.4$ &$17708$	\\
$3$&$n^2+2n+20$&$0.5$& $-n^2+2n+20$ & $1000$ & $59.47$&$213570$ \\
$4$&$\frac{n^3}{0.99}$&$0.99$&$-\frac{n^3}{0.01}$ & $10^8$ &$212.8$& $2671328$ \\
\hline
\end{tabular}
\end{table}

\section{Related Work}
\label{section:related}

Reasoning about probabilistic program termination is much harder than for deterministic programs~\cite{hark-aiming-2020}, turning the automated analysis of probabilistic loops into a challenging problem. Most approaches rely on proof rules for proving (positive) almost-sure termination~\cite{majumdar-past-complexity-proof-rules,chakarov:supermartingale-ranking-functions,chatterjee_stochastic_invariants,meyer-expected-runtimes-probabilistic-integer-programs}, which in turn require additional expressions, notably loop invariants and martingale variants, for the applicability in the proof rules. As the generation of invariants and martingales is  undecidable in general, automation of these approaches requires user-provided invariants/martingales. Our work is limited to polynomial random walks with the benefit of providing sufficient conditions under which PAST can automatically be inferred.  While restrictive, Figure~\ref{alg:example-non-trivial-rsm} shows advantages of our approach: proving PAST of this program using proof rules from~\cite{majumdar-past-complexity-proof-rules,chakarov:supermartingale-ranking-functions,chatterjee_stochastic_invariants} requires an auxiliary ranking super-martingale, whose computability is still an open question.

Alternative approaches to automating termination analysis have been proposed by focusing on restrictive classes of probabilistic loops, whose (P)AST analysis becomes (semi-)decidable~\cite{giesl:constant-probability-constant-updates,moosbrugger2021probabilistic}. Notably,  constant probability loops~\cite{giesl:constant-probability-constant-updates} limit probabilistic loop updates to constant increments over random  variable and their (P)AST is decidable. 
A more expressive class of programs is given in \cite{moosbrugger2021probabilistic}, with (P)AST analysis shown to be semi-decidable and automated. Key to automation is the ability to inferring (ranking) super-martingales from loop guards and relaxing proof rules to ``eventual'' reasoning over polynomial loop updates.  Our approach complements these works by using arbitrary polynomial updates in 
polynomial random walks.  Such loops cannot be analyzed by~\cite{giesl:constant-probability-constant-updates,moosbrugger2021probabilistic}; in particular, PAST of Figure~\ref{alg:example-non-trivial-rsm}
cannot be inferred. 

The analysis of probabilistic programs with arbitrary polynomial updates and control flow is shown to be  difficult, especially due to the lack of compositionality~\cite{kaminski:2018-compositionality}. 
By adjustments of the weakest precondition~\cite{mciver-new-2017,McIver2005} calculus, runtime bounds are inferred as  sufficient conditions for proving PAST in~\cite{absynth,ecoimp}. Control-flow refinement methods are also advocated in~\cite{koat} to derive runtime bounds on probabilistic loops. 
Further, 
lexicographical extensions of synthesizing ranking super-martingales are presented in~\cite{agrawal2017lexicographicrankingsupermartingalesefficient,takisaka-2024-lex-rsm-lower-bound} for the purpose of PAST inference. While powerful, automation of these works depend on the suitable martingales. Unlike our technique,  proving PAST of polynomial random walks, in particular of Figure~\ref{alg:example-non-trivial-rsm}, cannot yet be achieved by other works.

\section{Conclusions}
We study the positive almost-sure termination (PAST) problem of polynomial probabilistic programs implementing random walks with increasing increments. 
We show that PAST can be proven for polynomial random walks by checking conditions via solving linear inequalities over the polynomial program updates, without requiring additional user input in the form of invariants and/or martingales.
Our experiments demonstrate that our approach determines PAST of non-trivial probabilistic programs. 
Notably, we show PAST for programs 
beyond the scope of existing methods: for such programs, state-of-the-art works would require ranking super-martingales whose computation  is undecidable in general. For such loops, we prove PAST by finding bounds on the probability of termination, depending on the degrees and the branching probability of the polynomial updates.
Future work includes the extension of our results to (i) deriving hardness results on PAST decidability for polynomial random walks, and (ii) dealing with probabilistic programs with nondeterminism and more complex updates. 

\begin{credits}
\subsubsection{\ackname} This research was partially supported by 
the European Research Council Consolidator Grant ARTIST 101002685 and  the 
Vienna Science and Technology Fund WWTF 10.47379/ICT19018 grant ProbInG. 
We thank Marcel Moosbrugger for valuable discussions and  providing guidance on the Polar analyzer.

\subsubsection{Disclosure of Interests.}
The authors have no competing interests to declare that are relevant to the content of this article.

\subsubsection{Data Availability.} The models, scripts, and tools to reproduce our experimental evaluation are archived and publicly available at DOI~\href{https://doi.org/10.5281/zenodo.15257958}{10.5281/zenodo.15257958}.

\end{credits}
%
%
%
\bibliographystyle{splncs04}
\bibliography{variance_based_past}

\appendix

\newcommand{\LongProof}[1]{}

\clearpage\section{Appendix}
We present detailed proofs of our main results. 

\subsection{Correctness of Properties of Almost Normal Variables and Conditioning }\label{appendix:section2}

We prove the main properties of Section~\ref{section:summation-almost-normal}.

\lemmaUpperBoundSnRightTail*

    \begin{proof}
        We prove by induction over $n$ in $S_n$. 
        
        \noindent \emph{Base case.} By definition, $S_0 = Z_0$.  Lemma~\ref{definition:z_n} bounds $Z_0$ and we conclude:
        \[\begin{array}{lcl}\mathbb{P}\bigg(S_0 \geq a + (\frac{\sqrt{1+d}}{\sqrt{1+d}-1}\delta_1+b)\sigma_0 ~\bigg)& \leq& \exp\left(C_1\frac{-(a+(\frac{\sqrt{1+d}}{\sqrt{1+d}-1}\delta_1+b)\sigma_0)^2}{2\sigma_0^2}\right) \\
        & \leq & \exp\left(C_1\frac{-a^2}{2\sigma_0^2}\right)\end{array}\]
        \emph{Induction step}.
        \noindent Assume: 
        \[\begin{array}{lll}
        \mathbb{P}\bigg(S_n \geq a + (\frac{\sqrt{1+d}}{\sqrt{1+d}-1}\delta_1+b) \sqrt{\sum_{i=0}^n\sigma_i^2}~\bigg)&\leq&
        \exp\left(C_1\frac{-a^2}{2\sum_{i=0}^n\sigma_i^2}\right)\end{array}.\] 
       By conditioning, the probability mass of $\epsilon$ is cut away. Therefore, the tail probability of $S_n$ increases by $\frac{1}{1-\epsilon}$ at most and we have: 
       
        $\begin{array}{lll}
        \mathbb{P}\bigg((S_n | S_{n}>F_{S_n}^{-1}(\epsilon) & \geq & a + (\frac{\sqrt{1+d}}{\sqrt{1+d}-1}\delta_1+b)\sqrt{\sum_{i=0}^n\sigma_i^2}\bigg)\\
        &\leq& \frac{1}{1-\epsilon}\exp\left(C_1\frac{-a^2}{2\sum_{i=0}^n\sigma_i^2}\right) \leq \exp\left(C_1\left(\frac{-a^2}{2\sum_{i=0}^n\sigma_i^2}\right) + \ln{\frac{1}{1-\epsilon}}\right).\end{array}$
        
    \noindent
       Using the  bound of $\sum_{i=0}^{n+1}\sigma_i^2$ together with algebraic simplifications, we get: 
        
        
        $\begin{array}{l}
        \mathbb{P}\left((S_n | S_{n}>F_{S_n}^{-1}(\epsilon)) \geq a + b\sqrt{\sum\nolimits_{i=0}^{n+1}\sigma_i^2} + \frac{\sqrt{1+d}}{\sqrt{1+d}-1}\delta_1 \sqrt{\sum\nolimits_{i=0}^{n}\sigma_i^2}\right)\\
        \leq
\exp\left(C_1\left(\frac{-a^2}{2\sum_{i=0}^n\sigma_i^2}\right)\right).\end{array}$\\

\noindent        
The sum of sub-Gaussian variables bounded by variances $\sigma_1^2$ and $\sigma_2^2$ admits a bound with variance $\sigma_1^2 +\sigma_2^2$~\cite{lattimore2020bandit}. 
     \\ The random variable $\left((S_{n}|S_{n}>F_{S_n}^{-1}(\epsilon) ) -(b+\frac{\sqrt{1+d}}{\sqrt{1+d}-1}\delta_1)\sqrt{\sum_{i=0}^{n+1}\sigma_i^2}\right)$ is thus bounded, while  $(Z_{n+1}-\delta_1\sigma_{n+1})$ is bounded by Lemma~\ref{definition:z_n}. Hence, \\
%
 %
$\begin{array}{l}
            \mathbb{P}\Bigg(\left((S_{n}|S_{n}>F_{S_n}^{-1}(\epsilon) ) -b\sqrt{\sum\nolimits_{i=0}^{n+1}\sigma_i^2} - \frac{\sqrt{1+d}}{\sqrt{1+d}-1}\delta_1\sqrt{\sum\nolimits_{i=0}^{n}\sigma_i^2}\right)\\\qquad+(Z_{n+1}-\delta_1\sigma_{n+1}) \geq a\Bigg)~~~\leq ~~~\exp\left(C_1\left(\frac{-a^2}{2\sum_{i=0}^{n+1}\sigma_i^2}\right)\right)
        \end{array}$.

\noindent By the growth of the variance, we have $\sqrt{\sum_{i=0}^{n}\sigma_i^2}\leq 
 \sqrt{\frac{\sigma_{n+1}^2}{1+d}} = \frac{\sigma_{n+1}}{\sqrt{1+d}}$. Also,  $\sigma_{n+1}\leq\sqrt{\sum_{i=0}^{n+1}\sigma_i^2}$. Putting it all together, the induction step follows: 
 
%
$\begin{array}{l}
 \mathbb{P}\left(\left((S_{n}|S_{n}>F_{S_n}^{-1}(\epsilon) ) + Z_{n+1}\geq a+(b+\frac{\sqrt{1+d}}{\sqrt{1+d}-1}\delta_1)\sqrt{\sum_{i=0}^{n+1}\sigma_i^2}\right)\right)\\\qquad\leq \exp\left(C_1\left(\frac{-a^2}{2\sum_{i=0}^{n+1}\sigma_i^2}\right)\right)
\end{array}$.

    \qed
    \end{proof}

\LongProof{    
    \begin{proof}
        By induction. \emph{Base case.} $S_0 = Z_0$, and the bound of $Z_0$ is stronger, since $\delta_1$ and $b$ ar positive:
    
        \begin{multline}\mathbb{P}(S_0 \geq a + (\frac{\sqrt{1+d}}{\sqrt{1+d}-1}\delta_1+b)\sigma_0)\leq \exp\left(C_1\frac{-(a+(\frac{\sqrt{1+d}}{\sqrt{1+d}-1}\delta_1+b)\sigma_0)^2}{2\sigma_0^2}\right) \\\leq \exp\left(C_1\frac{-a^2}{2\sigma_0^2}\right)\end{multline}
    
        \emph{Induction step}.
        \noindent We know that \begin{align}
        \mathbb{P}(S_n \geq a + (\frac{\sqrt{1+d}}{\sqrt{1+d}-1}\delta_1+b) \sqrt{\sum_{i=0}^n\sigma_i^2})\leq \exp\left(C_1\frac{-a^2}{2\sum_{i=0}^n\sigma_i^2}\right)\end{align}
        When cutting away a fraction $\epsilon$ of the left tail, we multiply the tail probability by $\frac{1}{1-\epsilon}$:
        \begin{multline}\mathbb{P}((S_n | S_{n}>F_{S_n}^{-1}(\epsilon)) \geq a + (\frac{\sqrt{1+d}}{\sqrt{1+d}-1}\delta_1+b)\sqrt{\sum_{i=0}^n\sigma_i^2})\\\leq \frac{1}{1-\epsilon}\exp\left(C_1\frac{-a^2}{2\sum_{i=0}^n\sigma_i^2}\right) \leq \exp\left(C_1\left(\frac{-a^2}{2\sum_{i=0}^n\sigma_i^2}\right) + \ln{\frac{1}{1-\epsilon}}\right).\end{multline}
    
        Now inserting the desired bound involving $\sum_{i=0}^{n+1}\sigma_i^2$:

        \begin{multline}\mathbb{P}\left((S_n | S_{n}>F_{S_n}^{-1}(\epsilon)) \geq a + b\sqrt{\sum_{i=0}^{n+1}\sigma_i^2} + \frac{\sqrt{1+d}}{\sqrt{1+d}-1}\delta_1 \sqrt{\sum_{i=0}^{n}\sigma_i^2}\right)\\\leq \exp\left(C_1\left(\frac{-\left(a+b\left(\sqrt{\sum_{i=0}^{n+1}\sigma_i^2}-\sqrt{\sum_{i=0}^{n}\sigma_i^2}\right)\right)^2}{2\sum_{i=0}^n\sigma_i^2}\right)  + \ln{\frac{1}{1-\epsilon}}\right).\end{multline}
    
        We want to show, that:
        \begin{multline}
\exp\left(C_1\left(\frac{-\left(a+b\left(\sqrt{\sum_{i=0}^{n+1}\sigma_i^2}-\sqrt{\sum_{i=0}^{n}\sigma_i^2}\right)\right)^2}{2\sum_{i=0}^n\sigma_i^2}\right)  + \ln{\frac{1}{1-\epsilon}}\right)\\\leq
\exp\left(C_1\left(\frac{-a^2}{2\sum_{i=0}^n\sigma_i^2}\right)\right)
\end{multline}
        holds, which is the case, when
        \begin{align}
            -\left(a+b\left(\sqrt{\sum\nolimits_{i=0}^{n+1}\sigma_i^2}-\sqrt{\sum\nolimits_{i=0}^{n}\sigma_i^2}\right)\right)^2+ \frac{2\left(\sum\nolimits_{i=0}^n\sigma_i^2\right)}{C_1} \ln{\frac{1}{1-\epsilon}} \leq -a^2
        \end{align}
        By inserting the inequality $\sigma_{n+1}^2\geq d \sum_{i=0}^n\sigma_i^2$, hence $\sum_{i=0}^{n+1}\sigma_i^2\geq (d+1) \sum_{i=0}^n\sigma_i^2$:

        \begin{multline}
            -\left(a+b\left(\sqrt{(1+d)\sum\nolimits_{i=0}^{n}\sigma_i^2}-\sqrt{\sum\nolimits_{i=0}^{n}\sigma_i^2}\right)\right)^2+ \frac{2\left(\sum\nolimits_{i=0}^n\sigma_i^2\right)}{C_1} \ln{\frac{1}{1-\epsilon}} 
            \\
            =-\left(a+b(\sqrt{(1+d)}-1)\sqrt{\sum\nolimits_{i=0}^{n}\sigma_i^2}\right)^2+ \frac{2\left(\sum\nolimits_{i=0}^n\sigma_i^2\right)}{C_1} \ln{\frac{1}{1-\epsilon}} 
            \leq -a^2
        \end{multline}
    
        which again is true, when
        \begin{align}
            -b^2\left(\sqrt{(1+d)}-1\right)^2\sum\nolimits_{i=0}^{n}\sigma_i^2 + \frac{2\left(\sum\nolimits_{i=0}^n\sigma_i^2\right)}{C_1} \ln{\frac{1}{1-\epsilon}} 
            \leq 0
        \end{align}
        and equivalently when the inequality for $b$ holds
        \begin{align}
            b^2\left(\sqrt{(1+d)}-1\right)^2\geq\frac{2}{C_1} \ln{\frac{1}{1-\epsilon}}\Leftrightarrow b \geq\frac{ \sqrt{2 \ln{\frac{1}{1-\epsilon}}}}{\sqrt{C_1}(\sqrt{(1+d)}-1)}.
        \end{align}
        As this is true by the precondition of the lemma, we can conclude, that: 
            \begin{multline}
                \mathbb{P}\left((S_n | S_n >F_{S_n}^{-1}(\epsilon)) \geq a + b\sqrt{\sum_{i=0}^{n+1}\sigma_i^2}+\frac{\sqrt{1+d}}{\sqrt{1+d}-1}\delta_1\sqrt{\sum_{i=0}^{n}\sigma_i^2}\right)\\\leq \exp\left(C_1\left(\frac{-a^2}{2\sum_{i=0}^n\sigma_i^2}\right)\right)
            \end{multline}

        When two random variables are $\sigma_1$ and $\sigma_2$ sub-gaussian, then their sum is $\sqrt{\sigma_1^2 +\sigma_2^2}$ sub-gaussian \cite[Lemma~5.4c]{lattimore2020bandit}.
    
        For the random variable $\left((S_{n}|S_{n}>F_{S_n}^{-1}(\epsilon) ) -(b+\frac{\sqrt{1+d}}{\sqrt{1+d}-1}\delta_1)\sqrt{\sum_{i=0}^{n+1}\sigma_i^2}\right)$ we have just established a bound, while for $(Z_{n+1}-\delta_1\sigma_{n+1})$ we have a bound from the precondition of the theorem.
    
        Observe, that we can multiply the factor $C_1$ into the variance-proxy, resulting in $\sqrt{\frac{1}{C_1}\sum_{i=0}^{n}\sigma_i^2}$ and $\sqrt{\frac{1}{C_1}} \sigma_{n+1}$ sub-gaussian random variables.
        
        Through applying the just mentioned Lemma, we get the following bound:
        \begin{multline}
            \mathbb{P}\left(\left((S_{n}|S_{n}>F_{S_n}^{-1}(\epsilon) ) -b\sqrt{\sum_{i=0}^{n+1}\sigma_i^2} - \frac{\sqrt{1+d}}{\sqrt{1+d}-1}\delta_1\sqrt{\sum_{i=0}^{n}\sigma_i^2}\right)+(Z_{n+1}-\delta_1\sigma_{n+1}) \geq a\right)\\\leq \exp\left(C_1\left(\frac{-a^2}{2\sum_{i=0}^{n+1}\sigma_i^2}\right)\right)
        \end{multline}

By the growth of the variance, it holds that $\sqrt{\sum_{i=0}^{n}\sigma_i^2}\leq 
 \sqrt{\frac{\sigma_{n+1}^2}{1+d}} = \frac{\sigma_{n+1}}{\sqrt{1+d}}$. Also, trivially, $\sigma_{n+1}\leq\sqrt{\sum_{i=0}^{n+1}\sigma_i^2}$. Using those inequalities, and the fact that
\begin{align}
\delta_1\sigma_{n+1}+\frac{\sqrt{1+d}}{\sqrt{1+d}-1}\delta_1\frac{\sigma_{n+1}}{\sqrt{1+d}}=\frac{\sqrt{1+d}}{\sqrt{1+d}-1}\delta_1\sigma_{n+1}
\end{align}
since, when cancelling fractions and multiplying by the common denominator,
\begin{align}
\delta_1\sigma_{n+1}(\sqrt{1+d}-1)+\delta_1\sigma_{n+1}=\sqrt{1+d}\delta_1\sigma_{n+1}
\end{align}
we get that
 \begin{align}
    \delta_1\sigma_{n+1}+\frac{\sqrt{1+d}}{\sqrt{1+d}-1}\delta_1\sqrt{\sum_{i=0}^{n}\sigma_i^2}\leq \frac{\sqrt{1+d}}{\sqrt{1+d}-1}\delta_1\sqrt{\sum_{i=0}^{n+1}\sigma_i^2}.
\end{align}

Therefore, we can show the induction step:
\begin{multline}
    \mathbb{P}\left(\left((S_{n}|S_{n}>F_{S_n}^{-1}(\epsilon) ) + Z_{n+1}\geq a+(b+\frac{\sqrt{1+d}}{\sqrt{1+d}-1}\delta_1)\sqrt{\sum_{i=0}^{n+1}\sigma_i^2}\right)\right)\\\leq \exp\left(C_1\left(\frac{-a^2}{2\sum_{i=0}^{n+1}\sigma_i^2}\right)\right)
\end{multline}
    \qed
    \end{proof}
}

\bigskip

\theoremEpsilonExistsAlways*
\begin{proof}
Lemma~\ref{lemma_upper_bound_s_n_right_tail} implies that $\mathbb{P}(S_n\geq c)$ is bounded and converges to $0$ as $c\rightarrow\infty$. We choose an arbitrary $c'$ such that  the tail bound $\mathbb{P}(S_n\geq c')\leq b'<\frac{1}{2}$ holds. 
Then, $\mathbb{P}(S_n< c')\geq1-b'$ and $\mathbb{P}\big((S_n | S_n\geq 0)< c'\big)\geq1-b'-\epsilon\geq 0$. 
By union-bound properties, 
we have $\mathbb{P}\bigg(S_{n+1}\leq0\bigg) \geq\mathbb{P}\bigg((S_n | S_n\geq 0)< c'\bigg)\cdot\allowbreak\mathbb{P}\bigg(Z_{n+1}<-c'\bigg)\geq (1-b'-\epsilon)(\Phi(\frac{c'}{\sqrt{d}})-c_0)$. The bound with $\vec{a}=\begin{pmatrix}0\end{pmatrix}$ and $\vec{b}=\begin{pmatrix}\frac{(1-b')(\Phi(\frac{c'}{\sqrt{d}})-c_0)}{1+(\Phi(\frac{c'}{\sqrt{d}})-c_0)})\end{pmatrix}$ is then inductive and the lower bound is nonzero when $c_0$ is small enough.
    \qed
\end{proof}

\subsection{Correctness of Polynomial Random Walks Properties}

Here we prove the key results of Section~\ref{section:polynomial-random-walks}. 

\lemmaStoppingTimeInequality*
\begin{proof}
    For every possible draw of the variables $X_n$, the termination of Figure~\ref{alg:random_walk_transformed} implies the termination of Figure~\ref{alg:random_walk}

    In Figure~\ref{alg:random_walk_transformed}, the inner loop  is executed at least once per execution of the outer loop's body, hence $n$ is incremented at least once per iteration of the outer loop. When the loop stops after $t$ steps, then either Figure~\ref{alg:random_walk} also stops after $t$ steps or it has already stopped at some $t' < t$.
    \qed
\end{proof}


\bigskip

\lemmaSummationCdfDeviation*

\LongProof{
\begin{proof}
        We first consider the centered version of $U_{n'}$: \[U_{n'}' = X_{n'} - \mathbb{E}(X_{n'}) + \dots + X_{\lceil n'\cdot k\rceil} - \mathbb{E}(X_{\lceil n'\cdot k\rceil}).\]   
    Let now $F'_{n'}$ be the cdf of $U'_{n'}$.

    By the Berry-Esseen-Theorem, it then holds that 
    \begin{align}
        \sup_{x\in\mathbb{R}} \left|F'_{n'}(x) - \Phi(\frac{x}{\sqrt{\sum_{i=n'}^{\lceil k\cdot n'\rceil} Var(X_i)}})\right| &\leq C_0 \frac{\sum_{i=n'}^{\lceil k\cdot n'\rceil}\mathbb{E}(|X_{i}-\mathbb{E}(X_i)|^3)}{\left(\sum_{i=n'}^{\lceil k\cdot n'\rceil}\mathbb{E}((X_{i} - \mathbb{E}(X_{i}))^2)\right)^{\frac{3}{2}}}
    \end{align}
    For the sake of readability, we define $b(n') = C_0 \frac{\sum_{i=n'}^{\lceil k\cdot n'\rceil}\mathbb{E}(|X_{i}-\mathbb{E}(X_i)|^3)}{\left(\sum_{i=n'}^{\lceil k\cdot n'\rceil}\mathbb{E}((X_{i} - \mathbb{E}(X_{i}))^2)\right)^{\frac{3}{2}}}$.

    For the numerator, the following equality holds:
    \begin{align}
        \sum_{i=n'}^{\lceil k\cdot n'\rceil}\mathbb{E}(|X_{i}-\mathbb{E}(X_i)|^3) &= \sum_{i=n'}^{\lceil k\cdot n'\rceil} \left|q_1[i] - \mathbb{E}(X_i)\right|^3 p + \left|q_2[i]-\mathbb{E}(X_i)\right|^3(1-p)
    \end{align}
    $\mathbb{E}(X_i)$ is a polynomial (see Definition~\ref{definition:x_n}) and different from $q_1[i]$ and $q_2[i]$, as we have non-zero variance. Therefore the numerator is a polynomial and its degree is:
    \begin{align}
        \deg\left(\sum_{i=n'}^{\lceil k\cdot n'\rceil}\mathbb{E}(|X_{i}-\mathbb{E}(X_i)|^3)\right) = 3\deg(\mathcal{P}) + 1
    \end{align}
    Note that the $+1$ comes from the summation and the fact, that $k>1$.

 Analogously, 
    \begin{multline}
        \sum_{i=n'}^{\lceil k\cdot n'\rceil}\mathbb{E}((X_{i} - \mathbb{E}(X_{i}))^2) = \sum_{i=n'}^{\lceil k\cdot n'\rceil} \left(q_1[i] - \mathbb{E}(X_i)\right)^2 p + \left(q_2[i]-\mathbb{E}(X_i)\right)^2(1-p)
    \end{multline}
    and therefore the degree of the denominator is:
    \begin{align}
        \deg\left(\left(\sum_{i=n'}^{\lceil k\cdot n'\rceil}\mathbb{E}((X_{i} - \mathbb{E}(X_{i}))^2)\right)^\frac{3}{2}\right) &=  3 \deg(\mathcal{P}) + \frac{3}{2}
    \end{align}

    Since the degree of the denominator is higher $b(n')$ converges to zero. We denote this

    By definition, $U_{n'} = U'_{n'} +\sum_{i=n'}^{\lceil k\cdot n'\rceil} \mathbb{E}(X_i)$, and therefore $F_{U_n'}(x) = F'_{n'}(x-\sum_{i=n'}^{\lceil k\cdot n'\rceil} \mathbb{E}(X_i))$.

    For the standard normal distribution $\Phi$ it holds that $\Phi(x+\delta)\leq \Phi(x)+|\delta|$, as the probability density function of the standard normal distribution is always smaller than $1$:
    \begin{multline}
        \sup_{x\in\mathbb{R}} \left|F_{U_{n'}}(x) - \Phi\left(\frac{x- \sum_{i=n'}^{\lceil k\cdot n'\rceil} \mathbb{E}(X_i)}{\sqrt{\sum_{i=n'}^{\lceil k\cdot n'\rceil} Var(X_i)}} \right)\right|\\
        \geq \sup_{x\in\mathbb{R}} \left|F_{U_{n'}}(x) - \Phi\left(\frac{x}{\sqrt{\sum_{i=n'}^{\lceil k\cdot n'\rceil} Var(X_i)}} \right)\right| - \frac{\sum_{i=n'}^{\lceil k\cdot n'\rceil} \mathbb{E}(X_i)}{\sqrt{\sum_{i=n'}^{\lceil k\cdot n'\rceil} Var(X_i)}} 
    \end{multline}
    By the above inequality, it then holds that 
    \begin{align}
        \sup_{x\in\mathbb{R}} \left|F_{U_{n'}}(x) - \Phi\left(\frac{x}{\sqrt{\sum_{i=n'}^{\lceil k\cdot n'\rceil} Var(X_i)}} \right)\right| \leq b(n') + \frac{\sum_{i=n'}^{\lceil k\cdot n'\rceil} \mathbb{E}(X_i)}{\sqrt{\sum_{i=n'}^{\lceil k\cdot n'\rceil} Var(X_i)}} 
    \end{align}
    Since $b(n')$ converges to $0$, the whole bound converges when $\deg(Var(X_i)) > 2\deg(\mathbb{E}(X_i)) + 1$.
    
    Analogously, we consider $U'_0 = X_0-\mathbb{E}(X_0)+\dots+X_{n_0}-\mathbb{E}(X_{n_0})$, and let $F'_0$ be its cdf. Following the same argument, it then holds that:

    \begin{align}
        \sup_{x\in\mathbb{R}} \left|F'_{0}(x) - \Phi(\frac{x}{\sqrt{\sum_{i=0}^{n_0} Var(X_i)}})\right| &\leq b(0) := C_0 \frac{\sum_{i=0}^{n_0}\mathbb{E}(|X_{i}-\mathbb{E}(X_i)|^3)}{\left(\sum_{i=0}^{n_0}\mathbb{E}((X_{i} - \mathbb{E}(X_{i}))^2)\right)^{\frac{3}{2}}}
    \end{align} and the expression on the right hand side again converges.

    And since $U_0 = U'_{o} +y_0+\mathbb{E}(X_0)+\dots+\mathbb{E}(X_{n_0})$, and $\Phi(x+\delta)\leq \Phi(x)+|\delta|$, it holds that:
    \begin{multline}
        \sup_{x\in\mathbb{R}} \left|F_{U_{0}}(x) - \Phi\left(\frac{x- \sum_{i=0}^{n_0} \mathbb{E}(X_i)}{\sqrt{\sum_{i=0}^{n_0} Var(X_i)}} \right)\right|\\
        \geq \sup_{x\in\mathbb{R}} \left|F_{U_{0}}(x) - \Phi\left(\frac{x}{\sqrt{\sum_{i=0}^{n_0} Var(X_i)}} \right)\right| - \frac{\sum_{i=0}^{n_0} \mathbb{E}(X_i)}{\sqrt{\sum_{i=0}^{n_0} Var(X_i)}} 
    \end{multline}
    and therefore 
    \begin{align}
        \sup_{x\in\mathbb{R}} \left|F_{U_{0}}(x) - \Phi\left(\frac{x}{\sqrt{\sum_{i=0}^{n_0} Var(X_i)}} \right)\right| \leq b(0) + \frac{\sum_{i=0}^{n_0} \mathbb{E}(X_i)}{\sqrt{\sum_{i=0}^{n_0} Var(X_i)}} 
    \end{align}    

    The bound for the deviation of the cdf of $U_0$ therefore also converges.
    By the definition of convergence, we can always find some $n_0$ for an arbitrary $c_0>0$ (at least numerically), which is bigger than both bounds, as required in the lemma.
\qed
\end{proof}
}

\begin{proof}
    We first consider the zero-mean variables $U_{n'}' = X_{n'} - \mathbb{E}(X_{n'}) + \dots + X_{\lceil n'\cdot k\rceil} - \mathbb{E}(X_{\lceil n'\cdot k\rceil})$. 
    By the Berry-Esseen theorem \cite{berry-esseen-refined-constant}, it  holds: 
    \[\begin{array}{ll}
        \sup_{x\in\mathbb{R}} \left|F'_{U_{n'}}(x) - \Phi(\frac{x}{\sqrt{\sum_{i=n'}^{\lceil k\cdot n'\rceil} Var(X_i)}})\right| &\leq C_0 \frac{\sum_{i=n'}^{\lceil k\cdot n'\rceil}\mathbb{E}(|X_{i}-\mathbb{E}(X_i)|^3)}{\left(\sum_{i=n'}^{\lceil k\cdot n'\rceil}\mathbb{E}((X_{i} - \mathbb{E}(X_{i}))^2)\right)^{\frac{3}{2}}}
    \end{array}\]
    where $C_0$ is a  constant known to be smaller than $0.5591$. Recall that $\Phi$ is the  cdf  of the Normal distribution (Lemma~\ref{definition:z_n}).
    For the sake of readability, we write  $b(n') = C_0 \frac{\sum_{i=n'}^{\lceil k\cdot n'\rceil}\mathbb{E}(|X_{i}-\mathbb{E}(X_i)|^3)}{\left(\sum_{i=n'}^{\lceil k\cdot n'\rceil}\mathbb{E}((X_{i} - \mathbb{E}(X_{i}))^2)\right)^{\frac{3}{2}}}$. 
   As $\deg\left(\sum_{i=n'}^{\lceil k\cdot n'\rceil}\mathbb{E}(|X_{i}-\mathbb{E}(X_i)|^3)\right) = 3\deg(\mathcal{P}) + 1$ and $\deg\left(\left(\sum_{i=n'}^{\lceil k\cdot n'\rceil}\mathbb{E}((X_{i} - \mathbb{E}(X_{i}))^2)\right)^\frac{3}{2}\right) = 3 \deg(\mathcal{P}) + \frac{3}{2}$, the term $b(n')$ converges to $0$. Definition~\ref{def:random:walk:sum} implies that $U_{n'} = U'_{n'} +\sum_{i=n'}^{\lceil k\cdot n'\rceil} \mathbb{E}(X_i)$ and therefore $F_{U_{n'}}(x) = F_{U'_{n'}}(x-\sum_{i=n'}^{\lceil k\cdot n'\rceil} \mathbb{E}(X_i))$.
    By standard properties of the Normal distribution, we have $\Phi(x+\delta)\leq \Phi(x)+|\delta|$. Hence, 
    \[\begin{array}{l}
        \sup_{x\in\mathbb{R}} \left|F_{U_{n'}}(x) - \Phi\left(\frac{x+ \sum_{i=n'}^{\lceil k\cdot n'\rceil} \mathbb{E}(X_i)}{\sqrt{\sum_{i=n'}^{\lceil k\cdot n'\rceil} Var(X_i)}} \right)\right|\\
        \geq ~~~\sup_{x\in\mathbb{R}} \left|F_{U_{n'}}(x) - \Phi\left(\frac{x}{\sqrt{\sum_{i=n'}^{\lceil k\cdot n'\rceil} Var(X_i)}} \right)\right| - \frac{\sum_{i=n'}^{\lceil k\cdot n'\rceil} \mathbb{E}(X_i)}{\sqrt{\sum_{i=n'}^{\lceil k\cdot n'\rceil} Var(X_i)}} 
    \end{array}\]
which implies 
    \begin{align}\label{eq:bound:cdf:conv}
        \sup_{x\in\mathbb{R}} \left|F_{U_{n'}}(x) - \Phi\left(\frac{x}{\sqrt{\sum_{i=n'}^{\lceil k\cdot n'\rceil} Var(X_i)}} \right)\right| \leq b(n') + \frac{\sum_{i=n'}^{\lceil k\cdot n'\rceil} \mathbb{E}(X_i)}{\sqrt{\sum_{i=n'}^{\lceil k\cdot n'\rceil} Var(X_i)}} 
    \end{align}
    The rhs of  inequality~\eqref{eq:bound:cdf:conv} converges to $0$ when $\deg(Var(X_i)) > 2\deg(\mathbb{E}(X_i)) + 1$, and thus~\eqref{eq:cdf:convergence} holds for the zero-mean variables $U_{n'}$. 

    Similarly, we derive a bound for the centered $U'_0$, as well as for $U_0$:
    \begin{equation}\label{eq:cdf_U0_bound}
        \sup_{x\in\mathbb{R}} \left|F_{U_{0}}(x) - \Phi\left(\frac{x}{\sqrt{\sum_{i=0}^{n_0} Var(X_i)}} \right)\right| \leq b_{n_0} + \frac{y_0+\sum_{i=0}^{n_0} \mathbb{E}(X_i)}{\sqrt{\sum_{i=0}^{n_0} Var(X_i)}} 
    \end{equation}
    where $b_{n_0}$ is obtained from the Berry-Esseen theorem as before. 
    
    We set the constant  $c_0$ of~\eqref{eq:cdf:convergence} to be the maximum of the bounds of~\eqref{eq:bound:cdf:conv} and ~\eqref{eq:cdf_U0_bound}.  As such, \eqref{eq:cdf:convergence} holds. 
    \qed
\end{proof}


\lemmaSummationTailBound*
\LongProof{
\begin{proof}
    We again consider the centered version of $U'_{n'}$ as in the proof of Lemma~\ref{lemma:almost-normal}. Let $q_1'[i] = q_1[i]-E[X_i]$ and $q_2'[i] = q_2[i]-E[X_i]$ be the increments of $U'_{n'}$.
    
    The moment generating function of $U'_{n'}$ can be bounded using Hoeffding's Lemma~\cite[Lemma~2.6]{massart2007concentration}, since each variable $X_i-\mathbb{E}(X_i)$ has bounded support:

    \begin{multline}
        M_{U'_{n'}}(t) = \prod_{i=n'}^{\lceil n'\cdot k\rceil}\exp\left(tq_1'[i]\right)p + \exp\left(tq_2'[i]\right)(1-p)\\\leq\prod_{i=n'}^{\lceil n'\cdot k\rceil} \exp\left(\frac{t^2(q_1'[i]-q_2'[i])^2}{8}\right)
    \end{multline}
    
    Since the increments are zero-mean, $q_2'[i] = -\frac{q_1'[i]p}{(1-p)}$. Therefore,
    \begin{align}
    \frac{(q_1'[i] - q_2'[i])^2}{Var(X_i)} = \frac{\left(q_1'[i]+\frac{q_1'[i]p}{(1-p)}\right)^2}{q_1'[i]^2p+\left(\frac{q_1'[i]p}{(1-p)}\right)^2(1-p)} &=  \frac{\left(1+\frac{p}{(1-p)}\right)^2}{p+\left(\frac{p}{(1-p)}\right)^2(1-p)}\\
    &=\frac{1}{(1-p)p}
    \end{align}
    Hence,
    \begin{align}
    M_{U'_{n'}}(t) &\leq\prod_{i=n'}^{\lceil n'\cdot k\rceil} \exp\left(\frac{t^2Var(X_i)}{8((1-p)p)}\right)\\&=\exp\left(\frac{t^2\sum_{i=n'}^{\lceil n'\cdot k\rceil}Var(X_i)}{8((1-p)p)}\right)=\exp\left(\frac{t^2Var(U_{n'}')}{8((1-p)p)}\right)
    \end{align}

    The Chernoff bound\cite[p.77]{lattimore2020bandit} states, that for any positive $t$ and variable $X$, with its moment generating function $M_{X}(t)$ it holds that:
    \begin{equation}
        \mathbb{P}(X\geq a) \leq e^{-ta}M_{X}(t).
    \end{equation}

    We now apply the Chernoff-bound for the variable $U'_{n'}$ and choose $a=\lambda \sqrt{Var(U'_{n'})}$ and $t=\frac{\lambda4((1-p)p)}{\sqrt{Var(U'_{n'})}}$:
    \begin{align}
        \mathbb{P}(X\geq \lambda\sqrt{Var(U'_{n'})}) \leq \exp\left(4((1-p)p) \left(-\frac{\lambda^2}{2}\right)\right)
    \end{align}
    By the definition of $U'_{n'}$ (note tha also $Var(U'_{n'}) = Var(U_{n'})$), it then follows that 
    \begin{multline}
        \mathbb{P}(U_{0}\geq \lambda\sqrt{Var(U_{n'})}+\sum_{i=n'}^{\lceil n'\cdot k\rceil}\mathbb{E}(X_i)) \leq \exp\left(4((1-p)p) \left(-\frac{\lambda^2}{2}\right)\right)
    \end{multline}

    Analogously, for $U_0'$ we obtain can obtain the same bound: \begin{align}
        M_{U'_0}(t)\leq \exp\left(\frac{t^2Var(U_0'}{8(1-p)p}\right).
    \end{align}
    Then applying the Chernoff-bound:
    \begin{align}
        \mathbb{P}(X\geq \lambda\sqrt{Var(U'_{0})}) \leq \exp\left(4((1-p)p) \left(-\frac{\lambda^2}{2}\right)\right)
    \end{align}
    and since $U_0=U_0'+y_0+\sum_{i=0}^{n_0} \mathbb{E}(X_i)$:
    \begin{multline}
        \mathbb{P}(U_{0}\geq \lambda\sqrt{Var(U_{0})}+y_0+\sum_{i=0}^{n_0}\mathbb{E}(X_i)) \leq \exp\left(4((1-p)p) \left(-\frac{\lambda^2}{2}\right)\right)
    \end{multline}
     By the same argument as in the proof of Lemma~\ref{lemma:almost-normal} \begin{align}\delta_1 =\max\left\{ \frac{\sum_{i=n'}^{\lceil n'\cdot k\rceil}\mathbb{E}(X_i)}{\sqrt{\sum_{i=n'}^{\lceil n'\cdot k\rceil}Var(X_i)}}, \frac{y_0+\sum_{i=0}^{n_0}\mathbb{E}(X_i)}{\sqrt{\sum_{i=0}^{n_0}Var(X_i)}}\right\}\end{align} converges to zero.
\qed
\end{proof}
}

\begin{proof}
    Similarly to Lemma~\ref{lemma:almost-normal}, we consider the centered version of $U'_{n'}$. Let $q_1'[i] = q_1[i]-E[X_i]$ and $q_2'[i] = q_2[i]-E[X_i]$ be the increments of $U'_{n'}$. The moment generating function (mgf) of $U'_{n'}$, denoted as $M_{U'_{n'}}$, is bounded by the Hoeffding lemma~\cite[Lemma~2.6]{massart2007concentration}, since each variable $X_i-\mathbb{E}(X_i)$ has bounded support. As the (loop) increments are zero-mean, we have:  
    \begin{equation*}
        M_{U'_{n'}}(t) = \prod_{i=n'}^{\lceil n' k\rceil}M_{X_i}(t)\leq\prod_{i=n'}^{\lceil n' k\rceil} \exp\left(\frac{t^2(q_1'[i]-q_2'[i])^2}{8}\right)=\exp\left(\frac{t^2Var(U_{n'})}{8(1-p)p}\right)
    \end{equation*}
    Recall that the Chernoff bound\cite[p.77]{lattimore2020bandit} states that  $\mathbb{P}(X\geq a) \leq e^{-ta}M_{X}(t)$ holds for a random variable $X$ and its  mfg $M_{X}(t)$, for  all $t$. We can then bound $U_{n'}$ with $t=\frac{\lambda4((1-p)p)}{\sqrt{Var(U'_{n'})}}$ and get:
    \begin{equation*}
        \mathbb{P}(X\geq \lambda\sqrt{Var(U'_{n'})}) \leq \exp\left(4(1-p)p \left(-\frac{\lambda^2}{2}\right)\right)
    \end{equation*}
    Using Definition~\ref{def:random:walk:sum}, we  derive: 
    \[\mathbb{P}(U_{n'}\geq \lambda\sqrt{Var(U'_{n'})}+\sum_{i=n'}^{\lceil n'\cdot k\rceil}\mathbb{E}(X_i)) \leq \exp\left(4((1-p)p) \left(-\frac{\lambda^2}{2}\right)\right).\]
    
    \noindent Similarly,  we  bound $U_0$. Then,  we set $\delta_1 =\max\left\{ \frac{\sum_{i=n'}^{\lceil n'\cdot k\rceil}\mathbb{E}(X_i)}{\sqrt{\sum_{i=n'}^{\lceil n'\cdot k\rceil}Var(X_i)}},\frac{y_0+\sum_{i=0}^{n_0}\mathbb{E}(X_i)}{\sqrt{\sum_{i=0}^{n_0}Var(X_i)}}\right\}$, which converges to $0$ with $n_0$ increasing.
    \qed
\end{proof}

\label{appendix:proof-variance-growth}
\lemmaVarianceGrowth*
\begin{proof}
    Let $q_\text{var}[n]$ be the polynomial describing the variance of $X_1+\dots + X_n$. Let $m=deg(q_\text{var})$ denote its degree. Then $m=\deg(\mathcal{P})+1$ and $Var(Z_n) =  q_\text{var}[\lceil k\cdot n'_{(n)} \rceil] - q_\text{var}[n'_{(n)}]$. Since $Z_1 + \dots + Z_{n-1} = X_1 + \dots +X_{n'_{(n)}}$ and thus $Var(Z_1 + \dots + Z_{n-1}) = q_\text{var}[n'_{(n)}]$.

    But then the factor $d$, such that $Var(Z_n) \geq d (Var(Z_1) +\dots+ Var(Z_{n-1}))=d(Var(Z_1 +\dots+ Z_{n-1}))$ can be computed:
    \begin{align*}
        q_\text{var}[\lceil k\cdot n'_{(n)} \rceil] - q_\text{var}[n'_{(n)}] \geq d \cdot  q_\text{var}[n'_{(n)}]
    \end{align*}

    A polynomial can be bounded by its leading term with a multiplicative factor. Specifically, with $\delta_1\delta_2\in\mathbb{R^+}$:
    \[\forall n\geq n_0'(\delta_1,\delta_2) : (1-\delta_1)a_m n^m \leq a_1x + \dots + a_mn^m \leq (1+\delta_2) a_mn^m\]

   Inserting this in the above equation:
    \begin{align*}
        (1-\delta_1)a_m (\lceil k\cdot n'_{(n)} \rceil)^m &\geq (d+1)  (1+\delta_2) a_m n'^m_{(n)}\\
        \delta'k^m=\frac{(1-\delta_1)}{1+\delta_2}k^m &\geq (d+1)
        \tag*{\qed}
    \end{align*}
\end{proof}

\lemmaBoundFromTerminationProb*
\begin{proof}
      During the $n$th iteration of the inner loop body,  Figure~\ref{alg:random_walk_transformed}  could have terminated $\lfloor\log_{k+\tau}(n)-\lceil\log_{k+\tau}( n_0)\rceil\rfloor \leq \log_{k+\tau}(n) - \log_{k+\epsilon}(n_0) - 2$ times. The error $\tau$ here accounts for the number of rounding ups, as we take $\lceil kn'_{(n)}\rceil$. A safe choice to approximate $\tau$  is $\tau = \frac{1}{n_0}$. With increasing $n_0$, note that $\tau$ converges to $0$.
As such, the probability of 
Figure~\ref{alg:random_walk_transformed} not  terminating is bounded:
    \begin{align*}
        \mathbb{P}(T\geq n) &\leq (1-p_\text{term})^{\log_{k+\tau}(n) - \log_{k+\tau}(n_0) - 2}
        = \frac{n^{\frac{\ln(1-p_\text{term})}{\ln(k+\tau)}}}{(1-p_\text{term})^{\log_{k+\tau}(n_0)+2}} 
        \tag*{\qed}
    \end{align*}
\end{proof}

\theoremPolynomialRandomWalkThreshold*
\begin{proof}
   As
    $\deg(\mathcal{P})>\deg(p(q_1[n])+(1-p)q_2[n])$, we get  $\deg(Var(X_i)) > 2\deg(\mathbb{E}(X_i)) + 1$. We use  Lemmas~\ref{lemma:almost-normal}--\ref{lemma:tail-bound} and 
    take an  arbitrary $d$. By Theorem~\ref{theorem:epsilon-exists-always}, there exists an inductive bound with a nonzero termination probability $p_\text{term}$.   Lemma~\ref{lemma:bound-from-termination-probability} implies that the expected stopping time  $\mathbb{E}(T)$ is finite  when $\ln(1-p_\text{term})<-\ln(k+\frac{1}{n_0})$. Further, Lemma~\ref{lemma:variance-growth-polynomial-random-walk} asserts  $d=\delta'k^{\deg(\mathcal{P})+1}-1$. Since $\ln(k+\frac{1}{n_0})$ converges to $\ln(k)$ as $n_0$ increases, we can conclude that 
 $\mathcal{P}$ has finite expected stopping time. 
    \qed
\end{proof}

\section{Implementation Details}\label{appendix:details-inductive-bound-model}
We provide further details on the linear model used in our  implementation, as discussed in Section~\ref{section:implementation}. 

\paragraph{\bf Inductive bound sets.}
To compute an inductive bound $B$, the parameters $\epsilon, d,c_0, C_1, \delta_1$ must be fixed. Additionally, we require a sorted vector $\vec{a}$ with length $m$, specifying $\vec{a}_B$, as well as a sorted vector $\vec{c}$ with length $k$, with values, for which the Chernoff bound should be used. The variables $\vec{b}$, corresponding to $\vec{b}_B$ are the variables of interest. 

We introduce the auxiliary variables $\vec{d}$ of length $m+k$, which instead represent a bound for probability mass of the interval $[\vec{a}_{i-1};\vec{a}_{i}]$ and $[\vec{c}_{i-1};\vec{c}]$ of the conditioned variable $(S_{n}|S_n\geq F_{S_n}^{-1}(\epsilon))$:
\begin{align*}
    \vec{d}_1 (1-\epsilon) \leq \vec{b}_1-\epsilon\land
    \forall_{2\leq i\leq m}: \vec{d}_i(1-\epsilon)\leq \vec{b}_{i}-\vec{b}_{i-1}
\end{align*}
and similarly for the values of the Chernoff bound with $b =\frac{ \sqrt{2 \ln{\frac{1}{1-\epsilon}}}}{C_1(\sqrt{(1+d)}-1)}$  (also $\vec{d}_{m+1}$ represents the interval $[\vec{a}_{m};\vec{c}_0]$):
\begin{multline*}
\vec{d}_{m+1}(1-\epsilon) \leq 1- \exp\left(C_1\frac{-(\vec{c}_{i} - b-\delta_1)^2}{2}\right)-\vec{b}_m \land
    \forall_{2\leq i\leq k}: \\\vec{d}_{m+i}(1-\epsilon)\leq \left(\exp\left(C_1\frac{-(\vec{c}_{i-1} - b-\delta_1)^2}{2}\right)-\exp\left(C_1\frac{-(\vec{c}_{i} - b-\delta_1)^2}{2}\right)\right)
\end{multline*}

Every inequality must then hold for $S_{n+1}$:
\begin{multline*}
    \vec{b}_{i}\geq \sum_{j=1}^{m} \vec{d}_j \left(\Phi\left(\frac{\vec{a}_i\sqrt{1+d}-\vec{a}_j}{\sqrt{d}}\right)-c_0\right) \\+ \sum_{j=1}^{k} \vec{d}_{m+j}\left(\Phi\left(\frac{\vec{a}_i\sqrt{1+d}-\vec{c}_j}{\sqrt{d}}\right)-c_0\right)
\end{multline*}
for all $i=1, \ldots, m$.

And ultimately, we can only be less restrictive than the loop exit condition, hence the whole tail that is removed when conditioning must be smaller $0$:
\begin{align*}
    \vec{a_1}\leq 0 \land \vec{b}_1\geq \epsilon 
\end{align*}

To ensure, that $S_0$ satisfies the bound, it must also hold, that: 
\begin{align*}
    (\Phi(\vec{a}_i) - c_0)\geq \vec{b}_i
\end{align*}
for all $i=1,\ldots, m$.


\section{Numeric and Experimental Results}\label{appendix:example-inductive-bound-for-linear}
Here we give details on inductive bound used for proving PAST of linear random walks in our experiments from Section~\ref{section:implementation}, in particular on Example~\ref{example:past-of-example-program}. Furthermore, in Appendix~\ref{appendix:converging-terms-influence-actual-stopping-time}, we argue, that the properties of the polynomials which influence $n_0$ in our proofs have a similar effect on the empirical stopping time. The consistency of the results, along with the effect of the granularity of the inductive bound is shown in Appendix~\ref{appendix:performance-genetic-algorithm}

\subsection{Inductive bound used for proving PAST of linear random walk}
For the parameters $\epsilon=0.11286862346080692,d=0.410143812649425, c_0=10^{-8},C_1=1, \delta_1=10^{-5} $ the bound given in $\vec{a}$ and $\vec{b}$ given in Table~\ref{table:exact-inductive-bound-1}~and~\ref{table:exact-inductive-bound-2}, using $\vec{c} = (6.497321214442595)$, is inductive. The different components of the bound are shown in Figure~\ref{fig:plot_bound_S_n}.
\begin{figure}
\centering
\includegraphics[width=0.8\textwidth]{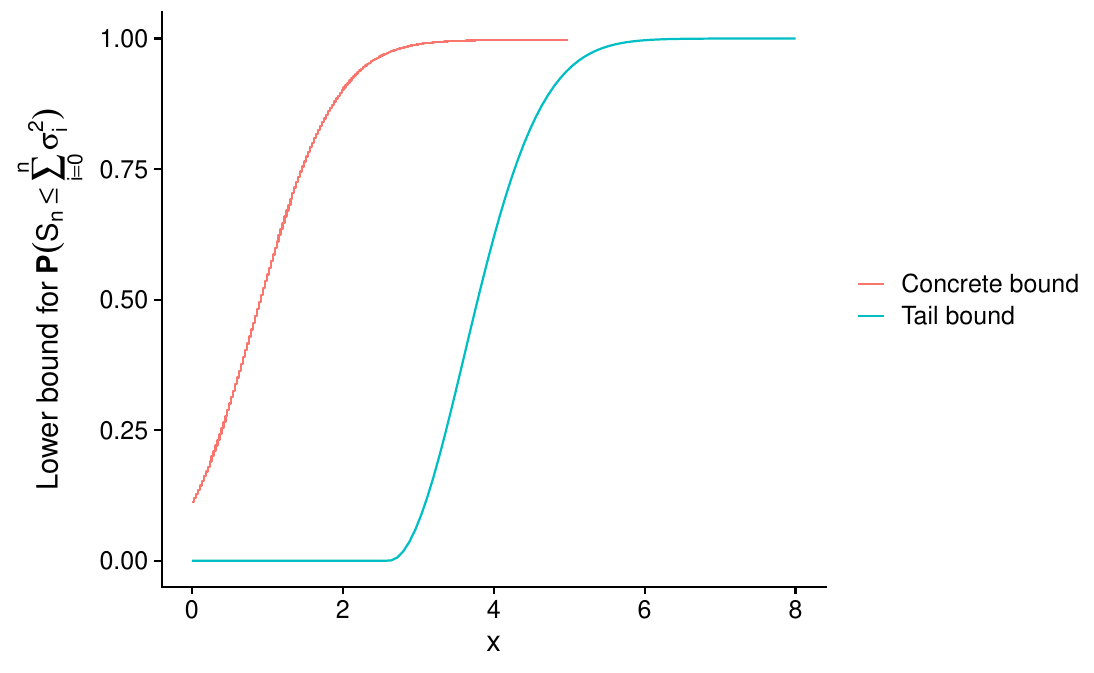}
\caption{Bound for proving PAST of linear random walk} \label{fig:plot_bound_S_n}
\end{figure}

\begin{table}
    \centering
    \caption{Inductive bound for proving PAST of linear random walk (1)}
\begin{minipage}[t]{0.32\textwidth}
\begin{tabular}{|l|l|}
\hline
$\vec{a}$ & $\vec{b}$ \\
\hline
0.00000000 & 0.11296053\\
0.02610488 & 0.12030958\\
0.05220976 & 0.12796144\\
0.07831464 & 0.13591606\\
0.10441952 & 0.14417252\\
0.13052440 & 0.15272904\\
0.15662928 & 0.16158298\\
0.18273416 & 0.17073082\\
0.20883904 & 0.18016817\\
0.23494392 & 0.18988980\\
0.26104880 & 0.19988960\\
0.28715368 & 0.21016063\\
0.31325856 & 0.22069516\\
0.33936344 & 0.23148466\\
0.36546832 & 0.24251983\\
0.39157320 & 0.25379067\\
0.41767808 & 0.26528648\\
0.44378296 & 0.27699594\\
0.46988784 & 0.28890709\\
0.49599272 & 0.30100746\\
0.52209760 & 0.31328405\\
0.54820248 & 0.32572342\\
0.57430736 & 0.33831174\\
0.60041224 & 0.35103482\\
0.62651712 & 0.36387817\\
0.65262200 & 0.37682710\\
0.67872688 & 0.38986672\\
0.70483176 & 0.40298200\\
0.73093664 & 0.41615786\\
0.75704152 & 0.42937920\\
0.78314640 & 0.44263094\\
0.80925128 & 0.45589810\\
0.83535616 & 0.46916584\\
0.86146104 & 0.48241946\\
0.88756592 & 0.49564453\\
0.91367080 & 0.50882687\\
0.93977568 & 0.52195260\\
0.96588056 & 0.53500818\\
0.99198544 & 0.54798048\\
1.01809032 & 0.56085673\\
\hline
\end{tabular}
\end{minipage}
\hfill
\begin{minipage}[t]{0.32\textwidth}
\begin{tabular}{|l|l|}
\hline
$\vec{a}$ & $\vec{b}$ \\
\hline
1.04419520 & 0.57362465\\
1.07030008 & 0.58627239\\
1.09640495 & 0.59878861\\
1.12250983 & 0.61116247\\
1.14861471 & 0.62338366\\
1.17471959 & 0.63544241\\
1.20082447 & 0.64732951\\
1.22692935 & 0.65903632\\
1.25303423 & 0.67055477\\
1.27913911 & 0.68187737\\
1.30524399 & 0.69299720\\
1.33134887 & 0.70390794\\
1.35745375 & 0.71460384\\
1.38355863 & 0.72507972\\
1.40966351 & 0.73533098\\
1.43576839 & 0.74535360\\
1.46187327 & 0.75514407\\
1.48797815 & 0.76469948\\
1.51408303 & 0.77401740\\
1.54018791 & 0.78309596\\
1.56629279 & 0.79193376\\
1.59239767 & 0.80052993\\
1.61850255 & 0.80888403\\
1.64460743 & 0.81699609\\
1.67071231 & 0.82486659\\
1.69681719 & 0.83249642\\
1.72292207 & 0.83988686\\
1.74902695 & 0.84703957\\
1.77513183 & 0.85395658\\
1.80123671 & 0.86064024\\
1.82734159 & 0.86709325\\
1.85344647 & 0.87331856\\
1.87955135 & 0.87931943\\
1.90565623 & 0.88509937\\
1.93176111 & 0.89066212\\
1.95786599 & 0.89601164\\
1.98397087 & 0.90115207\\
2.01007575 & 0.90608773\\
2.03618063 & 0.91082312\\
2.06228551 & 0.91536286\\
\hline
\end{tabular}
\end{minipage}
\hfill
\begin{minipage}[t]{0.32\textwidth}
\begin{tabular}{|l|l|}
\hline
$\vec{a}$ & $\vec{b}$ \\
\hline
2.08839039 & 0.91971167\\
2.11449527 & 0.92387440\\
2.14060015 & 0.92785598\\
2.16670503 & 0.93166140\\
2.19280991 & 0.93529570\\
2.21891479 & 0.93876397\\
2.24501967 & 0.94207130\\
2.27112455 & 0.94522282\\
2.29722943 & 0.94822363\\
2.32333431 & 0.95107883\\
2.34943919 & 0.95379348\\
2.37554407 & 0.95637261\\
2.40164895 & 0.95882119\\
2.42775383 & 0.96114415\\
2.45385871 & 0.96334635\\
2.47996359 & 0.96543255\\
2.50606847 & 0.96740745\\
2.53217335 & 0.96927568\\
2.55827823 & 0.97104173\\
2.58438311 & 0.97271002\\
2.61048799 & 0.97428488\\
2.63659287 & 0.97577049\\
2.66269775 & 0.97717096\\
2.68880263 & 0.97849027\\
2.71490751 & 0.97973227\\
2.74101239 & 0.98090071\\
2.76711727 & 0.98199922\\
2.79322215 & 0.98303130\\
2.81932703 & 0.98400034\\
2.84543191 & 0.98490961\\
2.87153679 & 0.98576224\\
2.89764167 & 0.98656127\\
2.92374655 & 0.98730960\\
2.94985143 & 0.98801003\\
2.97595631 & 0.98866522\\
3.00206119 & 0.98927774\\
3.02816607 & 0.98985004\\
3.05427095 & 0.99038445\\
3.08037583 & 0.99088322\\
3.10648071 & 0.99134848\\
\hline
\end{tabular}
\end{minipage}

    \label{table:exact-inductive-bound-1}
\end{table}

\begin{table}
    \centering
    \caption{Inductive bound for proving PAST of linear random walk (2)}
\begin{minipage}[t]{0.32\textwidth}
\vspace{0pt}
\hfill
\begin{tabular}{|l|l|}
\hline
$\vec{a}$ & $\vec{b}$ \\
\hline
3.13258559 & 0.99178224\\
3.15869047 & 0.99218643\\
3.18479535 & 0.99256290\\
3.21090023 & 0.99291337\\
3.23700511 & 0.99323951\\
3.26310998 & 0.99354286\\
3.28921486 & 0.99382491\\
3.31531974 & 0.99408706\\
3.34142462 & 0.99433062\\
3.36752950 & 0.99455684\\
3.39363438 & 0.99476689\\
3.41973926 & 0.99496188\\
3.44584414 & 0.99514284\\
3.47194902 & 0.99531075\\
3.49805390 & 0.99546653\\
3.52415878 & 0.99561103\\
3.55026366 & 0.99574506\\
3.57636854 & 0.99586937\\
3.60247342 & 0.99598467\\
3.62857830 & 0.99609162\\
3.65468318 & 0.99619082\\
3.68078806 & 0.99628285\\
3.70689294 & 0.99636825\\
3.73299782 & 0.99644751\\
3.75910270 & 0.99652109\\
3.78520758 & 0.99658943\\
3.81131246 & 0.99665292\\
3.83741734 & 0.99671194\\
3.86352222 & 0.99676682\\
3.88962710 & 0.99681789\\
3.91573198 & 0.99686545\\
3.94183686 & 0.99690976\\
3.96794174 & 0.99695110\\
3.99404662 & 0.99698968\\
4.02015150 & 0.99702574\\
4.04625638 & 0.99705948\\
4.07236126 & 0.99709109\\
4.09846614 & 0.99712074\\
4.12457102 & 0.99714861\\
4.15067590 & 0.99717484\\
\hline
\end{tabular}
\end{minipage}
\hspace{0.03\textwidth}
\begin{minipage}[t]{0.32\textwidth}
\vspace{0pt}
\begin{tabular}{|l|l|}
\hline
$\vec{a}$ & $\vec{b}$ \\
\hline
4.17678078 & 0.99719959\\
4.20288566 & 0.99722298\\
4.22899054 & 0.99724515\\
4.25509542 & 0.99726622\\
4.28120030 & 0.99728630\\
4.30730518 & 0.99730550\\
4.33341006 & 0.99732393\\
4.35951494 & 0.99734168\\
4.38561982 & 0.99735885\\
4.41172470 & 0.99737553\\
4.43782958 & 0.99739181\\
4.46393446 & 0.99740777\\
4.49003934 & 0.99742349\\
4.51614422 & 0.99743905\\
4.54224910 & 0.99745452\\
4.56835398 & 0.99746999\\
4.59445886 & 0.99748552\\
4.62056374 & 0.99750118\\
4.64666862 & 0.99751703\\
4.67277350 & 0.99753314\\
4.69887838 & 0.99754957\\
4.72498326 & 0.99756637\\
4.75108814 & 0.99758359\\
4.77719302 & 0.99760130\\
4.80329790 & 0.99761953\\
4.82940278 & 0.99763833\\
4.85550766 & 0.99765773\\
4.88161254 & 0.99767776\\
4.90771742 & 0.99769847\\
4.93382230 & 0.99771986\\
4.95992718 & 0.99774197\\
4.98603206 & 0.99776480\\
\hline
\end{tabular}
\end{minipage}
\hfill

    \label{table:exact-inductive-bound-2}
\end{table}

\subsection{Relation of converging terms and empirical stopping time}
\label{appendix:converging-terms-influence-actual-stopping-time}
While the relation of the degree of the polynomials and the slope of the empirically measured $\mathbb{P}(T\geq n)$ has been studied in Section~\ref{subsection:experiments}, the effects of other properties of the polynomials, which cause the terms in the proofs to converge slower, can also be seen in the same kind of plot.

\begin{figure}
\subfloat[with different initial value\label{fig:empirical-initial-values}]{%
  \includegraphics[width=0.5\textwidth]{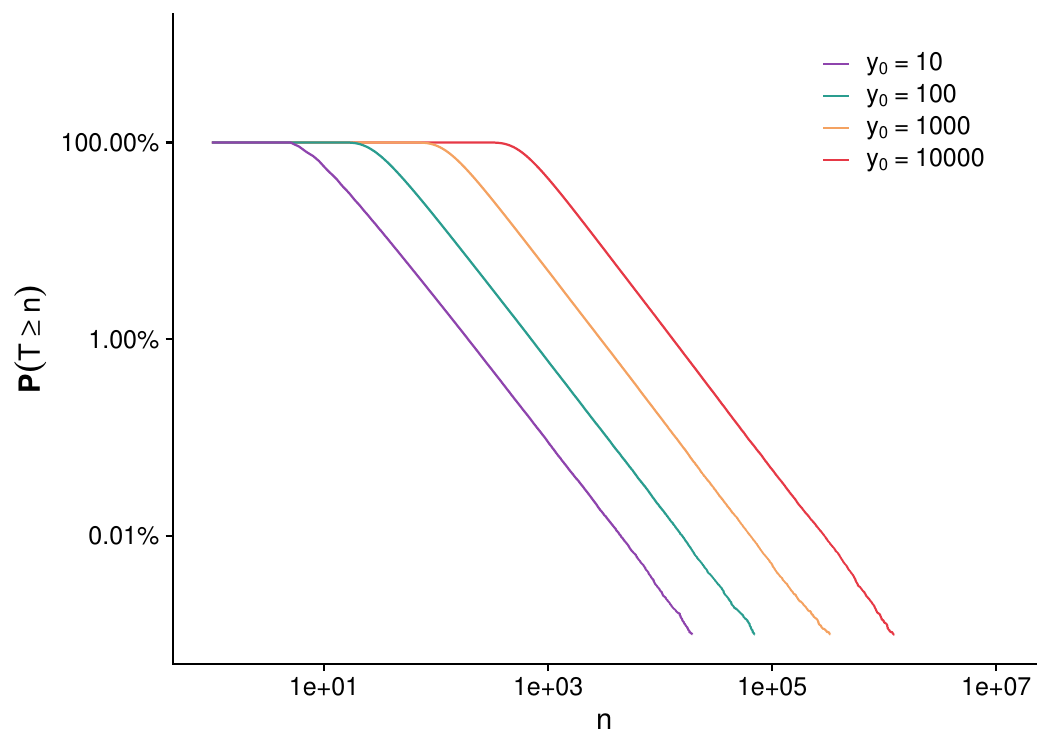}%
}\hfil
\subfloat[with different drift\label{fig:empirical-drift}]{%
  \includegraphics[width=0.5\textwidth]{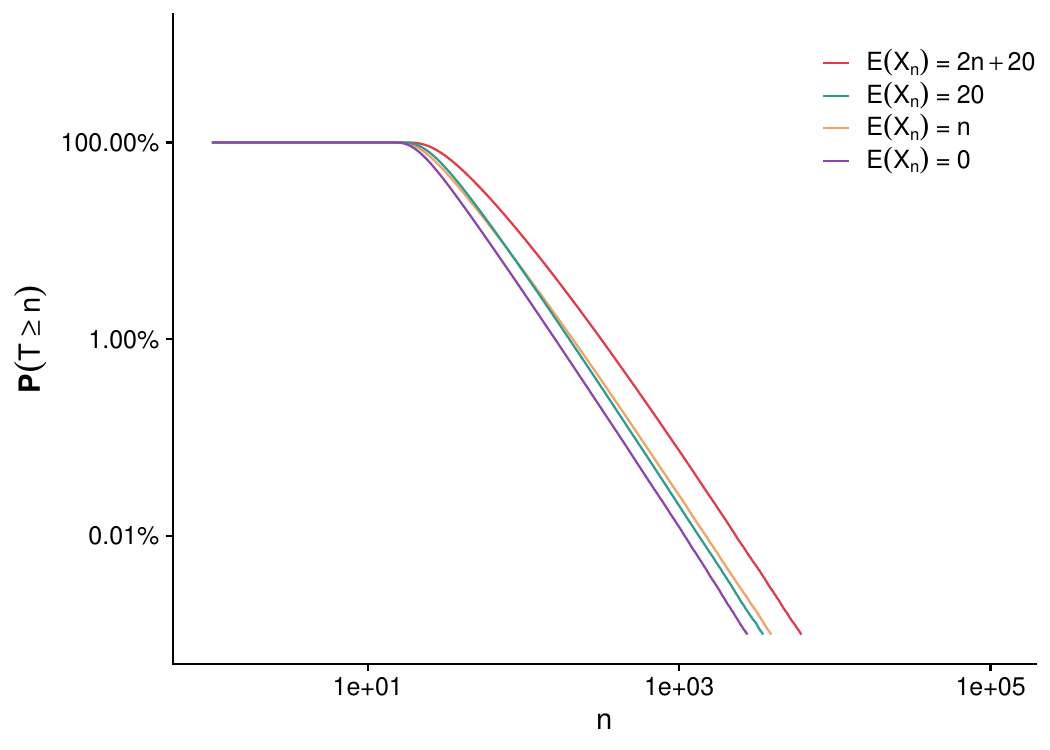}%
}\hfil
\subfloat[with different percentages\label{fig:empirical-different-percentages-appendix}]{%
  \includegraphics[width=0.5\textwidth]{figures/different_percentages.pdf}%
}\hfil
\subfloat[with change of slope\label{fig:slope-change}]{%
\includegraphics[width=0.5\textwidth]{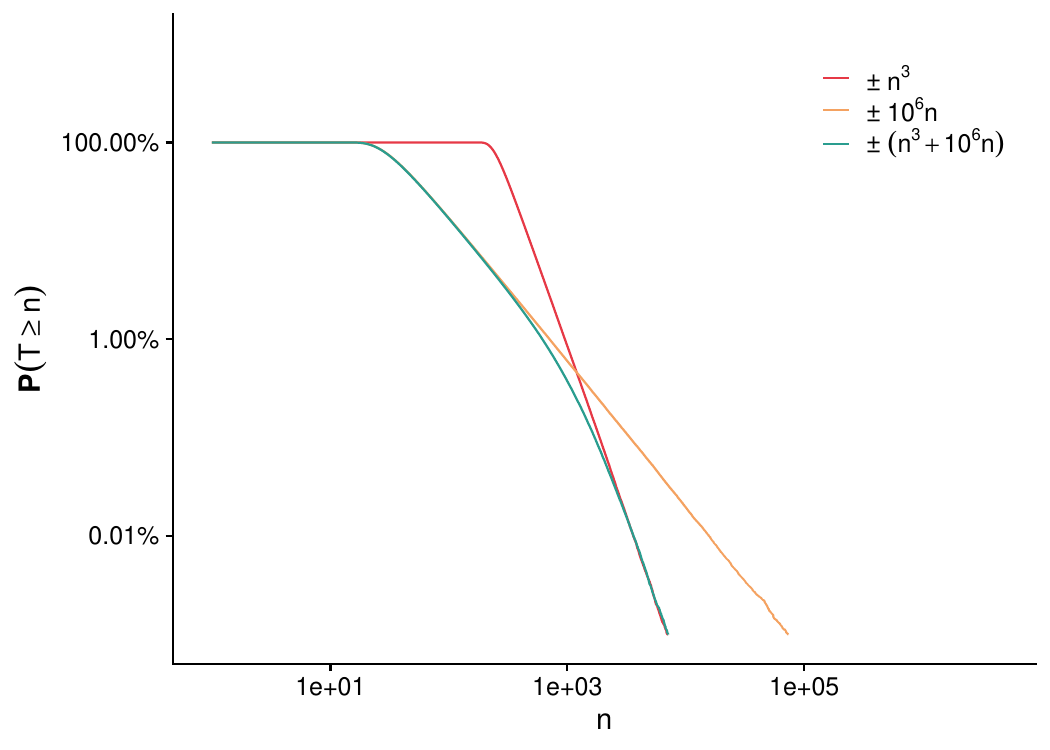}}%

\caption{Empirical results on stopping times of polynomial random walks}
\label{fig:more-empirical-survival-rates}
\end{figure}

Different initial values are shown in Figure~\ref{fig:empirical-initial-values}, and show that termination happens later, but $\mathbb{P}(T\geq n)$ decays with the same exponent. In our implementation this is accounted for through $c_0$ (for $U_0$) and $\delta_1$ in Lemma~\ref{lemma:tail-bound}.
Figure~\ref{fig:empirical-drift} shows random walks with $\pm n^2$ as leading term, and a (strictly positive) polynomial, which is the ``drift'' of the random walk. Even the random walk with the high linear drift seems to eventually converge to a line with the same slope as the other random walks. This effect is covered through the same constants.

The influence of different values of $p$ is, in addition to $C$ in the tail-bound, covered through $c_0$ in Lemma~\ref{lemma:almost-normal}. Our method computes the bound based on the value of $p(1-p)$ (e.g. the bound for $p=0.1$ and $p=0.9$ is the same), since the absolute deviation is captured, and small values of $p$ cause a higher deviation, albeit in the negative,``favorable'' direction. This unsharpness can be clearly seen in Figure~\ref{fig:empirical-different-percentages-appendix}.

A high non-leading term influences the speed of convergence of $\delta'$ in Lemma~\ref{lemma:variance-growth-polynomial-random-walk}. The empirical stopping time then first resembles that of the non-leading term, while later converging towards that of the leading term, as displayed in Figure~\ref{fig:slope-change}. This also supports the claim, that the explicit stopping time could better be approximated, if multiple segments were used.

\subsection{Performance of genetic algorithm}
\label{appendix:performance-genetic-algorithm}
For the genetic algorithm, higher values of $g$ increase the size of the linear model and hence the computation time. To mitigate this effect, $g$ is chosen relatively small in the beginning, but its value increases with each generation. Similarly, the population size shrinks. Usually, higher values of $g$ allow to decrease $d$ and increase $\epsilon$ slightly. Intuitively, the algorithm tries to first find ``good values'' for the parameters, which in later generations are then improved further. Figure~\ref{fig:performance-different-genetic-configurations} shows the relative performance and the running time for different specifications after $50$ generations, where the population size and the granularity $g$ are either set to a fixed value, or set to linearly decrease and increase respectively. Figure~\ref{fig:genetic-quality} shows that (i) the results of our method are relatively reliably and (ii) changing the parameters over time causes the the absolute value of the exponent of the bound to be slightly lower, but reduces the running time drastically, as displayed in Figure~\ref{fig:genetic-running-time}. This gives reason to making a population, linearly decreasing from $100$ to $20$, and a granularity linearly increasing from $50$ to $200$ the default values in our implementation.

\begin{figure}
\subfloat[solution quality\label{fig:genetic-quality}]{%
  \includegraphics[width=0.45\textwidth]{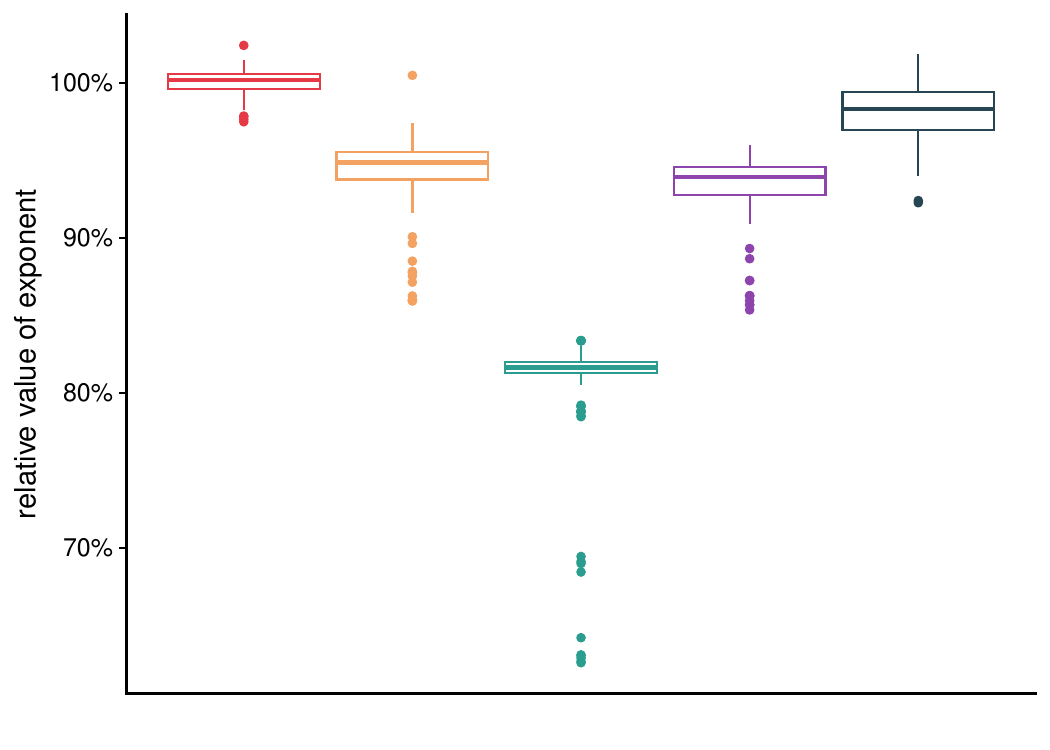}%
}\hfil
\subfloat[computation time\label{fig:genetic-running-time}]{%
  \includegraphics[width=0.45\textwidth]{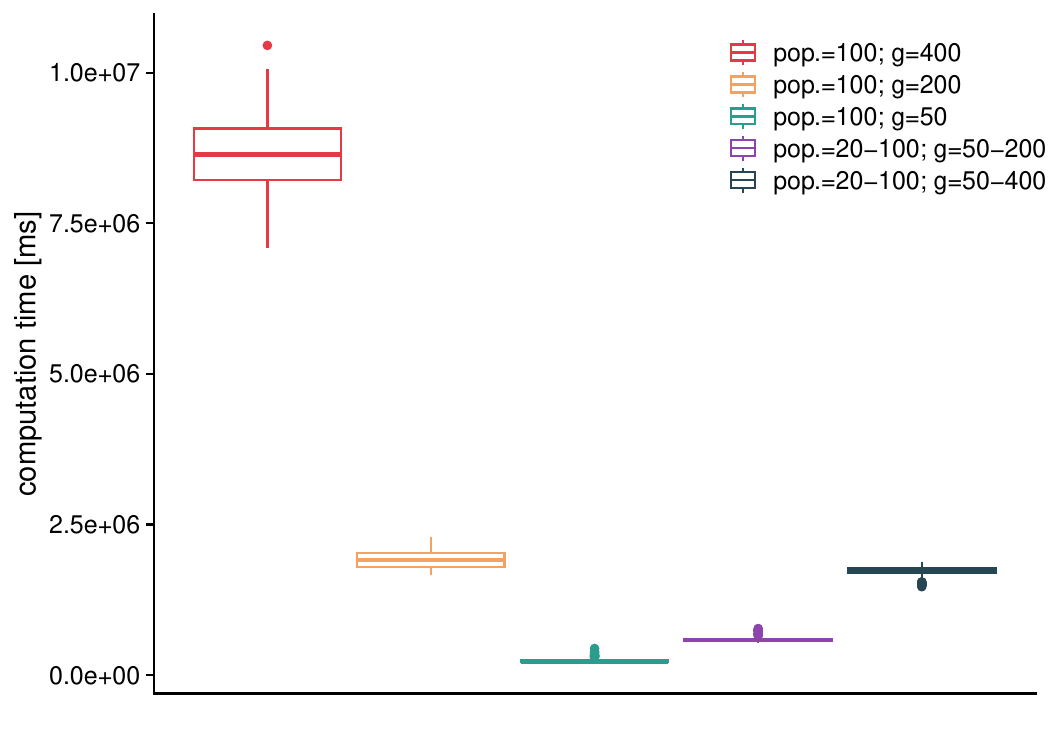}%
}\hfil

\caption{Performance of different genetic algorithm configurations}
\label{fig:performance-different-genetic-configurations}
\end{figure}

\end{document}